\theoremstyle{plain}
\newtheorem{theorem}{Theorem}
\newtheorem*{theorem*}{Theorem}
\newtheorem{proposition}[theorem]{Proposition}
\newtheorem{corollary}[theorem]{Corollary}
\newtheorem{lemma}[theorem]{Lemma}
\theoremstyle{remark}
\newtheorem*{example}{Example}
\newtheorem*{remark}{Remark}
\theoremstyle{definition}
\newtheorem{definition}{Definition}
\newtheorem{convention}[definition]{Convention}
\def\expandafter\thetheorem\expandafter{\expandafter P\thetheorem}
\def\expandafter\thedefinition\expandafter{\expandafter D\thedefinition}
\def\eatscanweak#1{}%
\def\smartscanweak#1#2\eatscanweak{}%
\def\eatrestscanweak#1\termscanweak{}%
\def\restscanweak#1#2\termscanweak{{#2}}%
\def\scanweak#1#2{\smartscanweak#2\eatscanweak\eatscanweak\eatrestscanweak#1{#2}\expandafter\scanweak\expandafter#1\restscanweak#2\termscanweak}%
\def\eatrestgroupweak#1\termgroupweak{}%
\long\def\secondgroupweak#1\thirdgroupweak#2#3{}%
\def\eatspacegroupweak. #1\thirdgroupweak{{#1}}%
\def\eattofirst#1\firstgroupweak{}%
\long\def\eattosecond#1\secondgroupweak{}%
\def\smartgroupweak#1#2\eattofirst{}%
\def\spacegroupweak#1 {\smartgroupweak#1\eattofirst\eattofirst\eattosecond}%
\long\def\isgroupweak#1#2#3{\expandafter\eatrestgroupweak\expandafter{\ifx{#1\termgroupweak#2\expandafter\secondgroupweak\else}\termgroupweak\ifx.}\fi\fi
  \spacegroupweak#1. \firstgroupweak#3\secondgroupweak
  \expandafter\isgroupweak\eatspacegroupweak.#1.\thirdgroupweak{#2}{#3}}%
\def\interbraspa{\mskip-1.5mu }%
\def\leftbraspa#1{\mathchar`#1\mskip-\ifx#1(1\else.75\fi mu }%
\def\rightbraspa#1{\mskip-\ifx#1)1\else.75\fi mu \mathchar`#1}%
\def\auxtermgroupbraspa{}%
\def\termgroupbraspa{\auxtermgroupbraspa}%
\def\nnoexpand{\noexpand\noexpand\noexpand}%
\def\nnnoexpand{\nnoexpand\noexpand\nnoexpand}%
\def\firstbraspa#1#2\termbraspa{#1}%
\def\restbraspa#1#2\termbraspa{{#2}}%
\def\auxwillbegroupbraspa#1\wasbraspa\auxtermgroupbraspa{}%
\def\willbegroupbraspa#1{{#1}\auxwillbegroupbraspa#1\expandafter\wasbraspa\auxtermgroupbraspa}%
\def\auxwillbebraspa{}%
\def\willbebraspa{\auxwillbebraspa}%
\def\wasbraspa#1{\ifx#1\willbebraspa\interbraspa\else\expandafter#1\fi}%
\def\eateightbraspa#1#2#3#4#5#6#7#8{\noexpand\wasbraspa}%
\def\isopenbraspa#1#2#3{\ifx#1(#2\else\ifx#1[#2\else#3\fi\fi}%
\def\isclosebraspa#1#2#3{\ifx#1)#2\else\ifx#1]#2\else#3\fi\fi}%
\def\isopenclosebraspa#1#2#3#4{\isopenbraspa{#1}{\isclosebraspa{#2}{#3}{#4}}{#4}}%
\def\parsebraspa#1#2#3#4\termbraspa{\isopenclosebraspa#1#3%
  {\nnoexpand\willbebraspa\leftbraspa#1#2\rightbraspa#3\noexpand\eateightbraspa}%
  {\ifx#1\termgroupbraspa\nnoexpand\auxtermgroupbraspa\else\nnnoexpand#1\fi}}%
\def\checkbraspa#1#2\termbraspa{\isgroupweak{#2}{\nnnoexpand#1}{\expandafter\isgroupweak\restbraspa#2\termbraspa{\nnnoexpand#1}{\parsebraspa#1#2\termbraspa}}}%
\def\iterbraspa#1{\isgroupweak{#1}%
  {\nnoexpand\willbegroupbraspa{\expandafter\expandbraspa\expandafter{\firstbraspa#1\termbraspa}}}%
  {\checkbraspa#1..\termbraspa}}%
\def\expandbraspa#1{\scanweak\iterbraspa{#1\termgroupbraspa}}%
\def\braspa#1{\edef\auxabraspa{\expandbraspa{#1}}\edef\auxbbraspa{\auxabraspa}\edef\auxcbraspa{\auxbbraspa}\auxcbraspa}%
\def\mathdefs{%
  \let\s\braspa%
  \def\_##1{_{\braspa{##1}}}%
  \def\^##1{^{\braspa{##1}}}}%
\everymath\expandafter{\the\everymath\mathdefs}%
\everydisplay\expandafter{\the\everydisplay\mathdefs}%
\def\mathcenter#1#2{\def\mathcenterhelper##1##2{\setbox0=\hbox{$\mathsurround=0pt ##1{##2}$}\hbox to \wd0{\hfil$\mathsurround=0pt ##1{#2}$\hfil}}\mathpalette\mathcenterhelper{#1}}
\def\xsub#1{x_{\scriptscriptstyle(#1)}}
\def\SR{{}^{\mathrm{S}}\!R}
\def\bw#1{b.w.\kern\fontdimen2\font$#1$}  % non-stretchable space of natural width
\begin{document}

\title[On Weyl alignment preserving KK reduction of vacuum]{On Weyl alignment preserving Kaluza--Klein reduction of vacuum}

\author*[1,2]{\fnm{Tom\'a\v s} \sur{Tint\v era}}
	\email{tintera@math.cas.cz}

%\author[1]{\fnm{Vojt\v ech} \sur{Pravda}}
%	\email{pravda@math.cas.cz}

\affil*[1]{%
	\orgname{Institute of Mathematics of the Czech Academy of Sciences},
	\orgaddress{\street{\v Zitn\' a 25}, \postcode{115 67} \city{Prague 1}, \state{Czech Republic}}}

\affil[2]{%
	\orgdiv{Institute of Theoretical Physics},
	\orgname{Faculty of Mathematics and Physics, Charles University in Prague},
	\orgaddress{\street{V Hole\v sovi\v ck\'ach 2}, \postcode{180 00} \city{Prague 8}, \state{Czech Republic}}}

\abstract{We study null alignment properties of Weyl tensors related via Kaluza--Klein reduction of vacuum spacetimes by one spatial Killing direction. Kaluza--Klein reduction is a method that relates spacetimes of different dimensionality. Weyl tensor null alignment is used in a recently proposed generalization of the Petrov algebraic classification of spacetimes to higher dimensions. Concentrating on the case where the two considered null directions are parallel in a gauge where they are perpendicular to the Maxwell potential, we express the relations between Riemann tensor null frame components of the original and reduced spacetime; we do the same for the Weyl tensors and also for optical matrices and non-geodeticities. Based on this, we point out basic consequences regarding reduction of Kundt spacetimes, and of spacetimes admitting a geodetic null direction. Finally, we work out the necessary and sufficient conditions for a Kaluza--Klein lift to preserve Weyl alignment type with respect to the related null directions. In the cases of types III and N with non-vanishing Maxwell field, where both spacetimes turn out to be Kundt, we show explicit solutions for the scalar potential in six dimensions and greater, and discuss some qualitative differences from the four-dimensional case.}

\keywords{Algebraically special spacetimes in higher dimensions, Kaluza--Klein reduction, Algebraic classification}

\maketitle

In this article, we elaborate on conditions for two spacetimes related by a~Kaluza--Klein reduction, to share a common Weyl aligned null direction. We show that in non-trivial cases of alignment types III and N, both spacetimes must be Kundt, and we find explicit forms of the Kaluza--Klein scalar potential necessary to satisfy these alignment conditions. We also point out some interesting qualitative differences of the four-dimensional case, which appears to be less restrictive on the scalar potential.

\section{Introduction}

Kaluza--Klein reduction \cite{nordstrom-1914,kaluza-1921,klein-1926,einstein-bergmann}, being a relation of spacetimes of different dimensionality, is of interest when investigating higher-dimensional algebraic classification. In this work, which expands on the work done in \cite{kk-i-ii}, we concentrate on the simplest case of a reduction of vacuum by one spatial Killing dimension and investigate null alignment \cite{clas-weyl-hd,alignment-special-tensors} \cite[\bgroup\it see also\egroup][]{clas-hd-review} of the related Weyl tensors. We specialize on the case where the two aligned null directions are parallel in a gauge in which they are perpendicular to the Maxwell potential.

While computing frame components of the related Weyl tensors, we demonstrate derivation of the Kaluza--Klein relation of Riemann tensors in a frame formalism, which turns out being easier than the classical coordinate-based derivation. We show how the same technique can be used to derive the relation of optical matrices and non-geodeticities and point out some straightforward consequences regarding twist, geodeticity and Kundt spacetimes.

For types I, II, III and N, we formulate algebraic conditions necessary and sufficient for a Kaluza--Klein lift to be Weyl type preserving with respect to the null directions related via the above condition (i.e. parallel in a \enquote{null gauge}). In the case of type III and N with non-vanishing Maxwell field, we find specific form of the scalar field that these conditions demand in dimension 6 and greater. We point out that in 4 dimensions, the conditions are likely less strict on scalar field behavior and allow for more types of solutions.

\subsection{Alignment conditions summary}
\label{chap:alignment-conditions-summary}

For convenience of the reader, we will summarize here the algebraic alignment conditions for individual types.
For definitions and formalism required for their correct interpretation, the reader is referred to later sections, especially sections \ref{chap:preliminaries} and \ref{chap:reduced-spacetime}.

Supposing $\mathcal{F}_{ab} = 0$, the conditions necessary and sufficient for a Weyl alignment preserving vacuum Kaluza--Klein lift of boost order $b$ are simple:
\begin{equation*}
	\mathrm{bo}_{\langle\bm{\ell}\rangle}\,\Psi_{ab} \le b\,,
\end{equation*}
where
\begin{align*}
	\Psi_{ab} &:= \Phi_{ab} - \Phi\_{(0)(1)}\,g_{ab}\,,\\
	\Phi_{ab} &:= \phi_{;ab} + (\beta-2\alpha)\,\phi_{;a} \phi_{;b}\,.
\end{align*}

On the other hand, if we assume $\mathcal{F}_{ab} \neq 0$, following conditions must be added in order to become sufficient while staying necessary. Some of interesting redundant necessary conditions are given in brackets.
\begin{itemize}
	\item For type I:
	\begin{gather*}
		\mathrm{bo}_{\langle\bm{\ell}\rangle}\,\mathcal{F}_{ab} < 1\\*
		\mathcal{F}\_{(i)(j)} \kappa\^{(j)} = 0\\*
		\mathcal{F}\_{(0)(1)} \kappa\^{(i)} = 0
	\end{gather*}
	\item For geodetic type II:
	\begin{gather*}
		\mathrm{bo}_{\langle\bm{\ell}\rangle}\,\mathcal{F}_{ab} < 1\\*
		\kappa\^{(i)} = 0\\*
		\Big(\mathcal{F}\_{(i)(k)} + \mathcal{F}\_{(0)(1)} g\_{(i)(k)}\Big)\,\Big(L\^{(k)}\_{\hphantom{(k)}(j)} + \alpha\phi\_{;(0)} \delta\^{(k)}\_{(j)}\Big) = \beta \phi\_{;(0)} \mathcal{F}\_{(i)(j)}
	\end{gather*}
	\item For non-geodetic type II:
	\begin{align*}
		\mathrm{bo}_{\langle\bm{\ell}\rangle}\,\mathcal{F}_{ab} &< 1 & \mathcal{F}\_{(i)(k)}\,\bigl(L\^{(k)}\_{\hphantom{(k)}(j)} + (\alpha-\beta)\phi\_{;(0)} \delta\^{(k)}\_{(j)}\bigr) &= 0\\*
		\mathcal{F}\_{(0)(1)} &= 0 & \bigl[\,\left(\Phi\_{;(i)(j)} - \Phi\_{;(0)(1)} g\_{(i)(j)}\right) \kappa\^{(j)} &= 0\,\bigr]\\*
		\kappa\^{(i)} &\neq 0 & \bigl[\,\mathcal{F}\_{(a)(i)} \kappa\^{(i)} &= 0\,\bigr]\\*
		\mathcal{F}\_{(i)(j)} \kappa\^{(j)} &= 0 &&
	\end{align*}
	\item For type III:
	\begin{align*}
		\mathrm{bo}_{\langle\bm{\ell}\rangle}\,\mathcal{F}_{ab} &= -1 & L\_{(i)(j)} &= 0\\*
		\mathrm{bo}_{\langle\bm{\ell}\rangle}\,\phi_{;a} &< 1 & [\,\kappa\^{(i)} &= 0\,]
	\end{align*}
	\item For type N with vanishing $\phi\_{;(i)}$:
	\begin{align*}
		\mathrm{bo}_{\langle\bm{\ell}\rangle}\,\mathcal{F}_{ab} &= -1 & L\_{(i)(j)} &= 0\\*
		\mathrm{bo}_{\langle\bm{\ell}\rangle}\,\phi_{;a} &< 0 & [\,L\_{(i)(1)} &= 0\,]\\*
		\mathcal{F}\_{(1)(i);(j)} &= 0 & [\,\kappa\^{(i)} &= 0\,]\\*
		[\,\mathrm{bo}_{\langle\bm{\ell}\rangle}\,\mathcal{F}_{ab;c} &< -1\,] &
	\end{align*}
	\item For type N with non-vanishing $\phi\_{;(i)}$ (affinely parametrized $\bm{\ell}$ is used):
	\begin{align*}
		\mathrm{bo}_{\langle\bm{\ell}\rangle}\,\mathcal{F}_{ab} &= -1 & L\_{(i)(j)} &= 0 & \phi\_{;(i)} &= \tfrac{1}{\alpha} L\_{(i)(1)}\\*
		\mathrm{bo}_{\langle\bm{\ell}\rangle}\,\phi_{;a} &= 0 & [\,\kappa\^{(i)} &= 0\,] & \mathcal{F}\_{(1)(i)} &= \gamma \phi\_{;(i)}\\*
		&&&& \gamma\_{;(i)} &= -2\beta \gamma \phi\_{;(i)}
	\end{align*}
\end{itemize}

\section{Preliminaries}
\label{chap:preliminaries}

\subsection{Null alignment based classification}

In this section, we establish basic definitions regarding algebraic classification of spacetimes of generic dimension $D$ where $D \ge 4$. For a much more thorough review, see \cite{clas-hd-review}.

\begin{definition}
	We call a frame
	\begin{equation}\label{eq:null-frame}
		\bm{\ell} \equiv \bm{e}\_{(0)},\;\bm{n} \equiv \bm{e}\_{(1)},\;\bm{e}\_{(i)}, \qquad i=2,\dots,D-1
	\end{equation}
	in the tangent space the \emph{null frame}, if the inverse metric takes the form
	\begin{equation}
		{}^{\sharp\sharp}\bm{g} = \bm{\ell} \vee \bm{n} + \sum_{i=2}^{D-1} \bm{e}\_{(i)} \bm{e}\_{(i)}\,.
	\end{equation}
\end{definition}

By indices in brackets, we will denote the tensor frame components\footnote{As an exception, we use brackets also for enumerating the basis vectors.}, as in
\begin{equation}\label{eq:frame-component-def}
	C\_{(0)(1)(i)(j);(k)} = C_{abcd;e}\,\ell^{a} n^{b} e\_{(i)}^{c} e\_{(j)}^{d} e\_{(k)}^{e}\,.
\end{equation}
In these cases, the range of $\lbrace 2, \dots, D-1 \rbrace$ is implied for indices $\scriptstyle \s{(i), (j)}, \dots$. Similarly, indices $\scriptstyle \s{(a), (b)}, \dots$ would take values from $\lbrace 0, \dots, D-1 \rbrace$. For any kind of bracketed indices, Einstein's summation convention is to be applied.
\begin{remark}
	Please note how the expression in \eqref{eq:frame-component-def} denotes frame components of the derivative and not vice versa.
\end{remark}

\begin{definition}
	We define the \emph{boost order} $\mathrm{bo}_{\langle\bm{\ell}\rangle}\,\bm{T}$ of a nonzero tensor $\bm{T}$ with respect to a null direction $\langle\bm{\ell}\rangle$ as usual \cite{clas-weyl-hd}. Additionally, we define $\mathrm{bo}_{\langle\bm{\ell}\rangle}\,\bm{T} = -\infty$ for any $\bm{T}$ that locally vanishes.
\end{definition}

\begin{definition}
	We classify spacetimes based on the boost order of their Weyl tensor $\bm{C}$; we say that a spacetime is of type I, II, III or N with respect to $\bm{\ell}$, if
	\begin{equation}
		\mathrm{bo}_{\langle\bm{\ell}\rangle}\,\bm{C} < b
	\end{equation}
	for $b = 2, \dots, -1$, respectively.
\end{definition}
\begin{remark}
	Note that under this definition, $\text{N} \subset \text{III} \subset \text{II} \subset \text{I}$.
\end{remark}

\subsection{Optical matrix}

Later, we will be working with the optical matrix, non-geodeticity and other Ricci rotation coefficients. Let's recall their definition.
\begin{definition}
	For a given null frame, we define:
	\begin{subequations}
	\begin{alignat}{2}
		\label{eq:ricci-rotation-def}
		L_{ab} &:= \ell_{a;b}\,, &&\\
		\kappa^{a} &:= \ell^{a}_{\hphantom{a};b} \ell^{b} && \mathrlap{\quad\text{\emph{(non-geodeticity of $\bm{\ell}$)}.}}
	\end{alignat}
	\end{subequations}
\end{definition}

\begin{definition}
	For a given null frame, we will call the \emph{optical matrix} a matrix composed of components identical to the frame components $L\_{(i)(j)}$. We will denote the optical matrix as $\rho_{ij}$:
	\begin{equation}
		\rho_{ij} = L\_{(i)(j)}\,.
	\end{equation}
\end{definition}

\begin{definition}
	In the decomposition
	\begin{equation}
		\rho_{ij} = \theta \delta_{ij} + \sigma_{ij} + A_{ij}\,,
	\end{equation}
	of the optical matrix into individual spin-invariant subspaces:
	\begin{subequations}
	\begin{alignat}{3}
		\theta \delta_{ij} &= \tfrac{1}{D-2}\,\rho_{kk} \delta_{ij}\,,\\
		\sigma_{ij} &= \rho_{(ij)} - \theta \delta_{ij}\,,\\
		A_{ij} &= \rho_{[ij]}\,,
	\end{alignat}
	\end{subequations}
	we call
	\begin{subequations}
	\begin{alignat}{2}
		&\hphantom{{}={}} \theta && \mathrlap{\quad\dots\text{\emph{expansion},}}\\
		\sigma^{2} &= \sigma_{ij} \sigma_{ji} && \mathrlap{\quad\dots\text{\emph{shear},}}\\
		\omega^{2} &= -A_{ij} A_{ji} && \mathrlap{\quad\dots\text{\emph{twist}}}
	\end{alignat}
	\end{subequations}
	the \emph{optical scalars}.
\end{definition}

\begin{definition}
	We say that a vector field is \emph{weakly geodetic}, if it is tangent to a geodesic. We reserve the title \emph{geodetic} for vector fields generating affinely parametrized geodesics.
\end{definition}

\begin{corollary}
	$\bm{\ell}$ is weakly geodetic, iff $\kappa\^{(i)} = 0$.
\end{corollary}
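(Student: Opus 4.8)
The plan is to reduce weak geodeticity of $\bm{\ell}$ to an algebraic condition on the non-geodeticity vector $\kappa^a = \ell^a_{\hphantom{a};b}\ell^b$ and then to read off its null-frame components. First I would recall the standard fact that an integral curve of $\bm{\ell}$ is a geodesic up to reparametrization exactly when $\nabla_{\bm{\ell}}\bm{\ell}$ points along $\bm{\ell}$; hence $\bm{\ell}$ is weakly geodetic iff $\kappa^a = f\,\ell^a$ for some function $f$.

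Second, I would observe that $\kappa$ is automatically orthogonal to $\bm{\ell}$. Differentiating the null condition $\ell^a \ell_a = 0$ and using $\nabla\bm{g} = 0$ gives $\ell^a \ell_{a;b} = 0$, and contracting with $\ell^b$ yields $\ell_a \kappa^a = 0$. Expanding $\kappa = \kappa\^{(0)}\,\bm{\ell} + \kappa\^{(1)}\,\bm{n} + \kappa\^{(i)}\,\bm{e}\_{(i)}$ in the null frame and using $\bm{\ell}\cdot\bm{\ell}=0$, $\bm{\ell}\cdot\bm{n}=1$, $\bm{\ell}\cdot\bm{e}\_{(i)}=0$, one finds $\ell_a\kappa^a = \kappa\^{(1)}$. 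Thus $\kappa\^{(1)} = 0$ identically, so $\kappa$ has no component along $\bm{n}$:
\begin{equation*}
	\kappa = \kappa\^{(0)}\,\bm{\ell} + \kappa\^{(i)}\,\bm{e}\_{(i)}\,.
\end{equation*}

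Finally I would combine the two observations. Since $\bm{\ell} = \bm{e}\_{(0)}$ and the frame vectors are linearly independent, $\kappa = f\,\bm{\ell}$ holds for some $f$ precisely when every spatial coefficient $\kappa\^{(i)}$ vanishes (and then $f = \kappa\^{(0)}$). Together with the first step, this is exactly the asserted equivalence.

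I do not anticipate a genuine obstacle; the computation is short once the setup is fixed. The only point meriting care is the index bookkeeping in the null frame -- in particular recognising that the coefficient of $\bm{n}$ in $\kappa$ equals the contraction $\ell_a\kappa^a$, which the null character of $\bm{\ell}$ forces to vanish, so that only the spatial components can obstruct proportionality of $\kappa$ to $\bm{\ell}$.
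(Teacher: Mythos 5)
Your proof is correct and coincides with the argument the paper leaves implicit (the corollary is stated there without proof, immediately after the definitions): weak geodeticity means $\kappa^{a} \propto \ell^{a}$, and since nullity of $\bm{\ell}$ forces $\ell_{a}\kappa^{a} = 0$, i.e. no $\bm{n}$-component, this proportionality is equivalent to the vanishing of the spatial components $\kappa\^{(i)}$. Nothing is missing.
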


\begin{corollary}
	\label{stat:rho-transformation-geodetic}
	Lorentz transformation that preserves the direction of a weakly geodetic $\bm{\ell}$, acts on the optical matrix as $\mathrm{O}(D-2)$:
	\begin{equation}
		\label{eq:rho-transformation-geodetic}
		\rho'_{ij} = \Lambda\_{(i)}\^{\hphantom{(i)}(k)} \Lambda\_{(j)}\^{\hphantom{(j)}(l)} \rho_{kl}\,.
	\end{equation}
\end{corollary}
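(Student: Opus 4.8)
The plan is to exploit the standard decomposition of the stabilizer of the null direction $\langle\bm{\ell}\rangle$ inside the Lorentz group. Every Lorentz transformation fixing $\langle\bm{\ell}\rangle$ is a composition of a boost $\bm{\ell}\mapsto\lambda\bm{\ell}$, $\bm{n}\mapsto\lambda^{-1}\bm{n}$, a transverse rotation (a \enquote{spin}) $\bm{e}\_{(i)}\mapsto\Lambda\_{(i)}\^{(j)}\bm{e}\_{(j)}$ with $\Lambda\_{(i)}\^{(j)}\in\mathrm{O}(D-2)$, and a null rotation about $\bm{\ell}$, namely $\bm{e}\_{(i)}\mapsto\bm{e}\_{(i)}+z_i\bm{\ell}$ with $\bm{\ell}$ itself left fixed. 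Since $\rho_{ij}=L\_{(i)(j)}=\ell_{a;b}\,e\_{(i)}^{a}e\_{(j)}^{b}$ depends only on $\bm{\ell}$ and the transverse legs $\bm{e}\_{(i)}$, I would substitute the transformed frame directly into this definition, compute the effect of each factor separately, and then argue that the three computations combine.

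First the easy factors. Under a spin, $\bm{\ell}$ is untouched and substitution gives $\rho'_{ij}=\Lambda\_{(i)}\^{(k)}\Lambda\_{(j)}\^{(l)}\rho_{kl}$ at once, which is precisely the asserted $\mathrm{O}(D-2)$ action. Under a boost, $(\lambda\ell)_{a;b}=\lambda_{;b}\,\ell_{a}+\lambda\,\ell_{a;b}$, and contraction with $e\_{(i)}^{a}$ annihilates the first term because $\ell_{a}e\_{(i)}^{a}=\bm{\ell}\cdot\bm{e}\_{(i)}=0$ in a null frame; only the overall boost-weight factor $\lambda$ survives, introducing no transverse mixing. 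So neither of these factors does anything beyond the transverse rotation (apart from the boost-weight rescaling).

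The crux is the null rotation. Substituting $\bm{e}'\_{(i)}=\bm{e}\_{(i)}+z_i\bm{\ell}$ and expanding produces four terms: $\ell_{a;b}\,e\_{(i)}^{a}e\_{(j)}^{b}$, the cross terms $z_j\,\ell_{a;b}\,e\_{(i)}^{a}\ell^{b}$ and $z_i\,\ell_{a;b}\,\ell^{a}e\_{(j)}^{b}$, and $z_iz_j\,\ell_{a;b}\,\ell^{a}\ell^{b}$. The second cross term and the last term vanish identically by nullity, since $\ell_{a;b}\,\ell^{a}=\tfrac{1}{2}(\ell_{a}\ell^{a})_{;b}=0$. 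The one surviving correction is $z_j\,\ell_{a;b}\,\ell^{b}e\_{(i)}^{a}=z_j\,\kappa\^{(i)}$, the transverse non-geodeticity. By the preceding corollary, $\bm{\ell}$ weakly geodetic is equivalent to $\kappa\^{(i)}=0$, so this term drops and the null rotation leaves $\rho_{ij}$ invariant. Because weak geodeticity is a property of the direction $\langle\bm{\ell}\rangle$ alone, and hence frame-independent, $\kappa\^{(i)}=0$ persists through any composition; the three effects therefore assemble so that the null rotation acts trivially, the spin yields $\rho'_{ij}=\Lambda\_{(i)}\^{(k)}\Lambda\_{(j)}\^{(l)}\rho_{kl}$, and the boost contributes only the boost-weight rescaling, which is why the direction-preserving action reduces to the $\mathrm{O}(D-2)$ rotation of the claim.

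The main obstacle is exactly this null-rotation step: one must verify that nullity of $\bm{\ell}$ genuinely kills the two terms carrying $\ell^{a}$ in the non-derivative slot, and then recognise that the single remaining correction is precisely the transverse non-geodeticity $\kappa\^{(i)}$ controlled by the preceding corollary. A secondary point needing care is that composing the three factors generates no further cross terms; this is guaranteed by the frame-independence of weak geodeticity, which lets us invoke $\kappa\^{(i)}=0$ at each stage. I would also state explicitly that the boost merely rescales $\rho_{ij}$ by its boost weight, so that the only genuinely transverse, shape-altering part of the action is the $\mathrm{O}(D-2)$ rotation.
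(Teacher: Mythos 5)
Your proof is correct and is exactly the argument the paper leaves implicit: the corollary is stated without proof, immediately after the characterization of weak geodeticity as $\kappa^{(i)}=0$, and your null-rotation computation --- in which the only surviving correction to $\rho_{ij}$ is the term $z_j\kappa_{(i)}$ --- is precisely how that characterization is meant to enter, with spins supplying the $\mathrm{O}(D-2)$ action. Your explicit observation that a boost rescales $\rho_{ij}$ by $\lambda$ is a point the equation \eqref{eq:rho-transformation-geodetic} literally omits (as written it is exact only for transformations fixing $\bm{\ell}$ itself rather than merely its direction); this is harmless for the paper's subsequent remark, which needs only linearity and preservation of the spin-invariant decomposition, but your formulation is the more precise one.
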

\begin{remark}
	We can call a weakly geodetic $\bm{\ell}$ shear-free, expansion-free or twist-free without reference to any frame, thanks to corollary \ref{stat:rho-transformation-geodetic}, which ensures that a potential $\bm{\ell}$-preserving frame transformation remains linear in $\rho_{ij}$ and does not mix up the decomposition of $\rho_{ij}$ into spin-invariant subspaces.
\end{remark}

\subsection{Kaluza--Klein reduction}
\label{chap:kk-basics}

In this section, based on the review of \cite{pope-ihplec}, we provide a basic introduction to the method of Kaluza--Klein reduction.

Let's consider a $(D+1)$-dimensional spacetime $(\mathcal{\hat{M}}, \bm{\hat{g}})$. In order to perform a Kaluza--Klein reduction, we need to choose a function $z$, together with a space-like vector field $\bm{\xi}$, such that $(\bm{\mathrm{d}}z, \bm{\xi}) = 1$. Additionally, we will also assume that $\bm{\xi}$ is a Killing vector field of $\bm{\hat g}$ (this is a usual assumption for Kaluza--Klein reduction, called \emph{cylinder condition} \citep{kaluza-1921}, or \emph{massless truncation} \citep{pope-ihplec}). We interpret $z$ as a coordinate function and $\bm{\xi}$ as the corresponding coordinate vector field $\bm{\partial}_{z}$. We also define embedded submanifolds $\mathcal{M}_{z}$ of $\mathcal{\hat{M}}$ by fixing $z = \text{const.}$:
\begin{equation}
	\iota_{z} \colon \mathcal{M}_{z} \rightarrow \mathcal{\hat{M}}\,.
\end{equation}
Let's denote tensor fields on $\mathcal{M}_{z}$ by using small Latin indices (except index $z$, which we will reserve for another use -- see below), as opposed to capital Latin indices which we use for tensors on $\mathcal{\hat{M}}$.

$\delta^{A}_{B} - \xi^{A} \mathrm{d}_{B}z$ is a projector of $T_{x}\mathcal{\hat{M}}$ onto $T_{x}\mathcal{M}_{z}$ that we can use to project tensor fields on $\mathcal{\hat{M}}$ onto $\mathcal{M}_{z}$. We will denote the corresponding tensor operator by $\uppi_{z}$. This projector is, from the definition, an extension of the inverse of the pushforward by the embedding\footnote{More precisely, perceived as an operator on tensor fields, $\iota_{z\star}$ is a pushforward by the corestriction of $\iota_{z}$ to its image.} $\iota_{z}$:
\begin{equation}
\label{eq:projection-identity}
	\uppi_{z} \circ \iota_{z\star} = \mathrm{id}_{T\mathcal{M}_{z}}\,.
\end{equation}
We observe that, for the projection of the metric $\hat{g}_{AB}$ or for the projection of its Riemann tensor $\hat{R}\vphantom{R}^{A}_{\hphantom{A}BCD}$, the precise value of the constant $z$ (present in the definition of $\iota_{z}$) is irrelevant due to $\bm{\partial}_{z}$ being a Killing vector field. For this reason, we will usually denote the embedding, the projector and the submanifold by just $\iota$, $\uppi$ and $\mathcal{M}$.

We can now decompose the metric $\bm{\hat g}$ into three fields $g_{AB}$, $\mathcal{A}_{A}$ and $\phi$ in a way that resembles the method of completing the square (the purpose of which we will see later when choosing an orthonormal basis) \citep{pope-ihplec}:
\begin{subequations}
\label{eq:kk-decomposition}
\begin{align}
	\label{eq:kk-decomposition-metric}
	\bm{\hat{g}} &= \mathrm{e}^{2\alpha\phi} \bm{g} + \mathrm{e}^{2\beta\phi} (\bm{\mathrm{d}}z + \bm{\mathcal{A}})^{2}\,,\\
	\label{eq:kk-decomposition-g}
	\bm{g} &= \iota_{\star} \uppi \bm{g}\,,\\
	\label{eq:kk-decomposition-A}
	\bm{\mathcal{A}} &= \iota_{\star} \uppi \bm{\mathcal{A}}\,.
\end{align}
\end{subequations}

Here, $\alpha$ and $\beta$ are nonvanishing constants. Note that in the chosen coordinates, the components of the fields $g_{AB}$, $\mathcal{A}_{A}$ and $\phi$ are independent of the coordinate $z$, due to $\bm{\partial}_{z}$ being a Killing vector field of $\bm{\hat{g}}$. From the last two equations, we see that the fields $g_{AB}$, $\mathcal{A}_{A}$ and $\phi$ can be identified with corresponding fields $g_{ab}$, $\mathcal{A}_{a}$ and $\phi$ on $\mathcal{M}$, by means of the projector $\uppi$.

\section{Introducing reduced spacetime}
\label{chap:reduced-spacetime}

Since $g_{ab}$ is symmetric and non-degenerate, we can interpret it as a metric on $\mathcal{M}$. We will also see that a certain subgroup of the gauge symmetries of $\hat{g}_{AB}$ is translated on $\mathcal{M}$ as a $\mathrm{U}(1)$ gauge group of $\mathcal{A}_{a}$, which encourages us to interpret $\mathcal{A}_{a}$ as a Maxwell potential.

\subsection{The metric}

\begin{convention}
\label{def:small-indices}
	Small Latin indices (except for index $z$) on tensor fields from $T\mathcal{\hat{M}}$ are to be interpreted as abstract indices on the projection by $\iota_{\star} \circ \uppi$. The index $z$ has a special meaning among other small Latin indices: we use covariant (or contravariant) index $z$ to denote contraction with $\xi^{A}$ (or $\mathrm{d}_{A}z$).
\end{convention}

\begin{proposition}
	$g_{ab}$ is non-degenerate on $\mathcal{M}$, with the inverse
	\begin{equation}
		\label{eq:inverse-g}
		g^{ab} := \mathrm{e}^{2\alpha\phi}\,\hat{g}^{ab}\,,
	\end{equation}
	where $\hat{g}^{AB}$ is the inverse metric on $\mathcal{\hat{M}}$.
\end{proposition}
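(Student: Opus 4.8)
The plan is to verify directly that the proposed tensor is a genuine two-sided inverse of $g_{ab}$; since the existence of a two-sided inverse already forces non-degeneracy, there is no need to invert anything by hand. I use that $\hat{g}_{AB}$ is a bona fide metric, so that $\hat{g}^{AB}$ and hence its projection $\hat{g}^{ab}$ (in the sense of convention~\ref{def:small-indices}) are well defined, and I abbreviate the one-form completing the square in \eqref{eq:kk-decomposition-metric} by $U_A := \mathrm{d}_A z + \mathcal{A}_A$.

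The single computational fact driving the proof is that $U$ is, up to a conformal factor, the $\hat{g}$-dual of the Killing field $\bm{\xi}$. Contracting \eqref{eq:kk-decomposition-metric} with $\xi^B$ and using that $g_{AB}$ and $\mathcal{A}_A$ are horizontal, i.e.\ $g_{AB}\xi^B = 0$ and $\mathcal{A}_A\xi^A = 0$ by \eqref{eq:kk-decomposition-g} and \eqref{eq:kk-decomposition-A}, together with $(\mathrm{d}z,\bm{\xi}) = 1$, gives $U_B\xi^B = 1$ and therefore $\hat{g}_{AB}\xi^B = \mathrm{e}^{2\beta\phi} U_A$. Raising the free index yields $\hat{g}^{BC} U_B = \mathrm{e}^{-2\beta\phi}\xi^C$. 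I then rearrange \eqref{eq:kk-decomposition-metric} into $g_{AB} = \mathrm{e}^{-2\alpha\phi}\bigl(\hat{g}_{AB} - \mathrm{e}^{2\beta\phi} U_A U_B\bigr)$ and contract with $\mathrm{e}^{2\alpha\phi}\hat{g}^{BC}$; the dual relation makes the quadratic term collapse and leaves the clean identity $\mathrm{e}^{2\alpha\phi} g_{AB}\hat{g}^{BC} = \delta_A^C - U_A\xi^C$.

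The residual term $U_A\xi^C$ is vertical in its second slot, and it is precisely here that convention~\ref{def:small-indices} does its work. Reading the contraction $g^{ab} g_{bc}$ on $\mathcal{M}$ means projecting the capital indices with $\uppi^A{}_B = \delta^A_B - \xi^A\mathrm{d}_B z$: projecting the summed index against the horizontal $g_{bc}$ is trivial, while projecting the remaining free index annihilates the residue because $\uppi^A{}_P\xi^P = 0$. This collapses the identity of the previous paragraph to $g^{AB} g_{BC} = \uppi^A{}_C$, the identity on $T\mathcal{M}$. Symmetry of $g$ supplies the other side, so $g_{ab}$ is non-degenerate with inverse $\mathrm{e}^{2\alpha\phi}\hat{g}^{ab}$.

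The main obstacle is bookkeeping rather than computation: one has to follow convention~\ref{def:small-indices} carefully enough to recognise that contraction over $\mathcal{M}$ coincides with contraction over $\hat{\mathcal{M}}$ for horizontal tensors, and that the spurious $U_A\xi^C$ term is removed exactly by the projector hidden in the abstract-index notation. The only genuinely substantive ingredient is the dual relation $\hat{g}_{AB}\xi^B = \mathrm{e}^{2\beta\phi}U_A$; everything after that is linear algebra.
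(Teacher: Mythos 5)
Your proof is correct and is essentially the paper's own argument: both verify by direct contraction that $\mathrm{e}^{2\alpha\phi}\hat{g}^{ab}$ inverts $g_{ab}$, exploiting the rank-one (completing-the-square) structure of the Kaluza--Klein ansatz, with non-degeneracy following from the existence of the two-sided inverse. Your key lemma $\hat{g}_{AB}\xi^B = \mathrm{e}^{2\beta\phi}U_A$ and the final projector step are the covariant packaging of what the paper does in projected components: the component identities $\hat{g}_{az} = \mathrm{e}^{2\beta\phi}\mathcal{A}_a$, $\hat{g}_{zz} = \mathrm{e}^{2\beta\phi}$ encode your dual relation, and the paper's closing cancellation $\hat{g}_{zz}\hat{g}^{zb} + \hat{g}_{cz}\hat{g}^{cb} = 0$ is exactly the point where the projector hidden in the index convention annihilates the $\xi$-directed residue, just as in your last step.
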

\begin{proof}
	We start by expressing the explicit transformations between fields on $\mathcal{M}$ and $\mathcal{\hat{M}}$:
	\begin{subequations}
	\label{eq:kk-decomposition-metric-components}
	\begin{align}
		\label{eq:kk-decomposition-metric-components-ab}
		\hat{g}_{ab} &= \mathrm{e}^{2\alpha\phi} g_{ab} + \mathrm{e}^{2\beta\phi} \mathcal{A}_{a} \mathcal{A}_{b}\,,\\
		\label{eq:kk-decomposition-metric-components-az}
		\hat{g}_{az} &= \mathrm{e}^{2\beta\phi} \mathcal{A}_{a}\,,\\
		\label{eq:kk-decomposition-metric-components-zz}
		\hat{g}_{zz} &= \mathrm{e}^{2\beta\phi}\,,
	\end{align}
	\end{subequations}
	and vice versa:
	\begin{subequations}
	\label{eq:lower-fields}
	\begin{align}
		\label{eq:lower-fields-phi}
		\phi &= \tfrac{1}{2\beta} \ln \hat{g}_{zz}\,,\\
		\label{eq:lower-fields-A}
		\mathcal{A}_{a} &= \hat{g}_{zz}^{\hphantom{zz}\!-1} \hat{g}_{az}\,,\\
		\label{eq:lower-fields-g}
		g_{ab} &= \hat{g}_{zz}^{\hphantom{zz}\!-\frac{\alpha}{\beta}} \hat{g}_{ab} - \hat{g}_{zz}^{\hphantom{zz}\!-\frac{\alpha}{\beta}-1} \hat{g}_{az} \hat{g}_{bz}\,.
	\end{align}
	\end{subequations}

	Let's show that
	\begin{equation}
		g^{ab} = \hat{g}_{zz}^{\hphantom{zz}\!\frac{\alpha}{\beta}} \hat{g}^{ab}
	\end{equation}
	is inverse to $g_{ab}$ as in \eqref{eq:lower-fields-g}:
	\begin{equation}
		g_{ac} g^{cb} = \left( \hat{g}_{ac} - \hat{g}_{zz}^{\hphantom{zz}\!-1} \hat{g}_{az} \hat{g}_{cz} \right) \hat{g}^{cb} = \delta_{a}^{b} - \hat{g}_{zz}^{\hphantom{zz}\!-1} \hat{g}_{az} \left( \hat{g}_{zz} \hat{g}^{zb} + \hat{g}_{cz} \hat{g}^{cb} \right) = \delta_{a}^{b}\,.
	\end{equation}
\end{proof}

\begin{corollary}
	The inverse metric $\hat{g}^{AB}$ on $\mathcal{\hat{M}}$ can be expressed in terms of the $D$-dimensional fields as:
	\begin{subequations}
	\begin{align}
		\hat{g}^{ab} &= \mathrm{e}^{-2\alpha\phi} g^{ab}\,,\\
		\hat{g}^{az} &= -\mathrm{e}^{-2\alpha\phi} g^{ab} \mathcal{A}_{b}\,,\\
		\hat{g}^{zz} &= \mathrm{e}^{-2\beta\phi} + \mathrm{e}^{-2\alpha\phi} g^{ab} \mathcal{A}_{a} \mathcal{A}_{b}\,.
	\end{align}
	\end{subequations}
\end{corollary}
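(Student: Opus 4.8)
The plan is to verify directly that the proposed array is the (necessarily unique) inverse of $\hat{g}_{AB}$, whose components in the $z$-adapted coordinates are recorded in \eqref{eq:kk-decomposition-metric-components}. Since $\hat{g}^{AB}$ is a square matrix (pointwise a real matrix), it suffices to check that $\hat{g}^{AB}\hat{g}_{BC} = \delta^{A}_{C}$ for the four index blocks $(A,C) \in \{(a,c),(a,z),(z,c),(z,z)\}$, a one-sided inverse of a finite square matrix being automatically two-sided.

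The $ab$-component is already in hand: it is simply \eqref{eq:inverse-g} rearranged, since $\hat{g}_{zz} = \mathrm{e}^{2\beta\phi}$ by \eqref{eq:kk-decomposition-metric-components-zz} gives $\hat{g}_{zz}^{\alpha/\beta} = \mathrm{e}^{2\alpha\phi}$, whence $\hat{g}^{ab} = \mathrm{e}^{-2\alpha\phi} g^{ab}$. Rather than guessing the remaining entries, I would \emph{derive} them from the inverse conditions. The off-diagonal requirement $\hat{g}^{ab}\hat{g}_{bz} + \hat{g}^{az}\hat{g}_{zz} = 0$, fed with $\hat{g}_{bz} = \mathrm{e}^{2\beta\phi}\mathcal{A}_{b}$ and the $ab$-entry, immediately yields $\hat{g}^{az} = -\mathrm{e}^{-2\alpha\phi} g^{ab}\mathcal{A}_{b}$. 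Likewise the normalization $\hat{g}^{zb}\hat{g}_{bz} + \hat{g}^{zz}\hat{g}_{zz} = 1$, using the symmetry $\hat{g}^{zb} = \hat{g}^{bz}$ and the entry just found, solves for $\hat{g}^{zz} = \mathrm{e}^{-2\beta\phi} + \mathrm{e}^{-2\alpha\phi} g^{ab}\mathcal{A}_{a}\mathcal{A}_{b}$.

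It then remains only to confirm the two blocks not used in the derivation, namely $\hat{g}^{aB}\hat{g}_{Bc} = \delta^{a}_{c}$ and $\hat{g}^{zB}\hat{g}_{Bc} = 0$, which establishes that the array assembled above is a genuine full inverse rather than merely satisfying the $z$-column conditions. Both reduce to substituting $\hat{g}_{bc} = \mathrm{e}^{2\alpha\phi} g_{bc} + \mathrm{e}^{2\beta\phi}\mathcal{A}_{b}\mathcal{A}_{c}$ together with the already-determined entries and collecting terms.

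The only real bookkeeping lies in these last two checks: the exponential prefactors must be tracked carefully, and the terms quadratic in $\mathcal{A}$ (each carrying the scalar $g^{ab}\mathcal{A}_{a}\mathcal{A}_{b}$) must cancel. In the $(z,c)$ block in particular, several distinct $\mathcal{A}$-contraction terms arise and have to combine to zero; this cancellation, rather than any conceptual difficulty, is the crux. Everything else is algebraically routine given \eqref{eq:kk-decomposition-metric-components} and the preceding proposition.
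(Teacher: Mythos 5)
Your proof is correct and takes essentially the same route the paper implicitly intends for this corollary: the $ab$ block is just \eqref{eq:inverse-g} rearranged, and the remaining entries follow from the block inverse conditions $\hat{g}^{AB}\hat{g}_{BC} = \delta^{A}_{C}$ with $\hat{g}_{zz} = \mathrm{e}^{2\beta\phi} \neq 0$, exactly as you derive them. One small remark: since \eqref{eq:inverse-g} already identifies $\hat{g}^{ab}$ as a block of the \emph{actual} inverse (which exists and is symmetric because $\bm{\hat{g}}$ is a metric), the two entries you solve for from the $z$-column conditions are automatically the remaining blocks of that inverse, so your final verification of the $(a,c)$ and $(z,c)$ blocks is a consistency check rather than a logical necessity.
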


We are going to denote some tensors on $\mathcal{\hat{M}}$ with a hat symbol (as in $\hat{L}_{AB}$). In addition to (in some cases) making a connection to corresponding tensors on $\mathcal{M}$, this notation will convey a special convention:
\begin{convention}
	For tensors on $\mathcal{\hat{M}}$ denoted by a hat symbol, we will establish a convention that their indices will be raised and lowered by the metric $\bm{\hat{g}}$. Similarly, indices of tensors on $\mathcal{M}$ will be raised and lowered by $\bm{g}$.
\end{convention}

\subsection{Symmetries}

The component expression of the metric on $\mathcal{\hat{M}}$ \eqref{eq:kk-decomposition-metric-components} possesses a gauge freedom that can generally be parametrized by coordinate transformations combined with global change of scale, another symmetry of Einstein equations. Expressed in terms of the constant scaling parameter $a$ and the vector field $\hat{\omega}^{A}$ generating a small ($\epsilon \rightarrow 0$) coordinate transformation on coordinate functions $\hat{x}^{A}$,
\begin{equation}
	\hat{x}^{A} \mapsto \hat{x}^{A} + \epsilon \mathcal{L}_{\bm{\hat{\omega}}} \hat{x}^{A}\,,\qquad A \in \lbrace 0, \dots, D \rbrace\,,
\end{equation}
the metric transforms as:
\begin{equation}
	\hat{g}_{\!AB} \mapsto \hat{g}_{\!AB} + \epsilon \delta\hat{g}_{\!AB}\,,
\end{equation}
where
\begin{equation}
	\delta\hat{g}_{\!AB} = -\mathcal{L}_{\bm{\hat{\omega}}} \hat{g}_{\!AB} + 2a \hat{g}_{\!AB} = 2a \hat{g}_{\!AB} - \hat{g}_{\!AB,C}\,\hat{\omega}^{C} - \hat{g}_{\!CB}\,\hat{\omega}^{C}_{\hphantom{C}\!,A} - \hat{g}_{\!AC}\,\hat{\omega}^{C}_{\hphantom{C}\!,B}\,.
\end{equation}
Here, the coordinate derivatives can be just any coordinate derivatives, but for convenience, let's interpret them as derivatives with respect to coordinates compatible with $\bm{\partial}_{z}$.
Now, in order for $\bm{\partial}_{z}$ to stay a Killing vector field, the metric components in the new coordinates must be independent of the new $z$; in other words, the transformation must \citep{pope-ihplec} be generated by such $\hat{\omega}^{A}$ that:
\begin{equation}
	\hat{\omega}^{A} = \omega^{A} + (cz + \lambda)\,\xi^{A}\,,
\end{equation}
where $\omega^{A}$ is a vector field projectable to $T\mathcal{M}$:
\begin{equation}
	\mathcal{L}_{\bm{\xi}} \omega^{A} = 0
\end{equation}
and identifiable with such projection:
\begin{equation}
	\omega^{A} \mathrm{d}_{A} z = 0\,,
\end{equation}
and where $\lambda$ is independent of $z$:
\begin{equation}
	\xi^{A} \mathrm{d}_{A} \lambda = 0
\end{equation}
and $c$ is constant:
\begin{equation}
	\mathrm{d}_{A} c = 0\,.
\end{equation}
Note that under such transformations, the Killing vector field $\bm{\xi}$ preserves its direction.

It is a matter of simple substitution to express this gauge transformation in terms of the $D$-dimensional fields:
\begin{subequations}
\begin{align}
	\delta \bm{g} &= -\mathcal{L}_{\bm{\omega}} \bm{g} + 2 s \bm{g}\,,\\
	\delta \bm{\mathcal{A}} &= -\mathcal{L}_{\bm{\omega}} \bm{\mathcal{A}} - \bm{\mathrm{d}} \lambda + (\alpha-\beta) \phi_{0}\,\bm{\mathcal{A}} + s \bm{\mathcal{A}}\,,\\
	\delta \phi &= -\mathcal{L}_{\bm{\omega}} \phi + \phi_{0}\,,
\end{align}
\end{subequations}
where
\begin{subequations}
\begin{align}
	\phi_{0} &= \frac{a-c}{\beta}\,,\\
	s &= a - \frac{\alpha}{\beta} \left(a-c\right).
\end{align}
\end{subequations}
In the terms involving $\bm{\omega}$, we identify the $D$-dimensional coordinate transformation, whereas $\bm{\mathrm{d}} \lambda$ corresponds to a local $\mathrm{U}(1)$ gauge symmetry of $\bm{\mathcal{A}}$. The rest of the transformations pertain to global symmetries: $\phi_{0}$ parametrizes constant shift of $\phi$, compensated by appropriate scaling of $\bm{\mathcal{A}}$, while $s$ is responsible for the global change of scale.

The $D$-dimensional gauge invariance doesn't surprise us, because we constructed the $D$-dimensional fields as tensor fields on $\mathcal{M}$. What is interesting is namely the local $\mathrm{U}(1)$ gauge invariance of $\bm{\mathcal{A}}$, which already justifies our particular choice of $D$-dimensional fields. This allows us to call $\bm{\mathcal{A}}$ a Maxwell potential.

\subsection{Maxwell tensor}

In the next sections, we will develop a toolset to show that, assuming the lifted spacetime $(\mathcal{\hat{M}}, \bm{\hat g})$ is vacuum, we can write the $(D+1)$-dimensional Einstein equations as:
\begin{subequations}\label{eq:reduced-einstein-eq}
\begin{align}
	\label{eq:reduced-einstein-eq-a}
	&R_{ab} = \big(\beta + (D-2)\,\alpha\big)\,\phi_{;ab} + \big(\beta^{2} - 2\alpha\beta - (D-2)\,\alpha^{2} \big)\,\phi_{;a} \phi_{;b} + {}\nonumber\\
	&\hphantom{R_{ab} = {}} + \tfrac{\mathrm{e}^{(2\beta-2\alpha)\phi}}{2} \big(\mathcal{F}^{2}_{ab} + \tfrac{\alpha}{2\beta} \mathcal{F}^{2} g_{ab}\big)\,,\\[3pt]
	\label{eq:reduced-einstein-eq-b}
	&\Big(\mathrm{e}^{((D-4)\alpha + 3\beta)\phi} g^{bc} \mathcal{F}_{ab}\Big)\vphantom{\mathcal{F}}_{;c} = 0\,,\\[3pt]
	\label{eq:reduced-einstein-eq-c}
	&\mathrm{e}^{(2\beta - 2\alpha)\,\phi} \mathcal{F}^{2} = 4\beta g^{ab} \big(\phi_{;ab} + (\beta + (D-2)\,\alpha) \phi_{;a} \phi_{;b}\big)\,,
\end{align}
\end{subequations}
where we have introduced the analog of the Maxwell tensor:
\begin{subequations}
\begin{align}
	\mathcal{F}_{ab} &:= \mathcal{A}_{b;a} - \mathcal{A}_{a;b}\,,\\*
	\mathcal{F}^{2}_{ab} &:= g^{cd} \mathcal{F}_{ac} \mathcal{F}_{bd}\,,\\*
	\mathcal{F}^{2} &:= g^{ab} \mathcal{F}^{2}_{ab}\,.
\end{align}
\end{subequations}
and where by semicolon, we denote the covariant derivative on $\mathcal{M}$, induced by the the metric $\bm{g}$. We will also see how to extend this covariant derivative to the whole $\mathcal{\hat{M}}$.

Recalling the $\mathrm{U}(1)$ gauge symmetry of $\mathcal{A}_{a}$, it is of no surprise to see that the $D$-dimensional field equations also have the $\mathrm{U}(1)$ gauge symmetry, and the electromagnetic potential can be eliminated out of them in favor of the Maxwell tensor.

\begin{remark}
	Choosing
	\begin{equation}\label{eq:alpha-beta-relation}
		\beta + (D-2)\,\alpha = 0\,,
	\end{equation}
	the reduced field equations \eqref{eq:reduced-einstein-eq} could be further simplified.\footnote{The assumption \eqref{eq:alpha-beta-relation} is needed if we wish to interpret $\phi$ as the scalar in the Einstein--Maxwell--scalar system. For this reason, it is assumed in \cite{pope-ihplec}. We will however not assume any restriction on $\alpha$ and $\beta$, apart from both being nonzero.}
\end{remark}

\section{Frame choice}

Given a null frame \eqref{eq:null-frame} in the reduced manifold, it is natural (in the sense that it simplifies computations) to define the null frame in the original manifold the following way \citep{pope-ihplec}:
\begin{subequations}
\label{eq:original-frame}
\begin{align}
	\label{eq:original-frame-a}
	\bm{\hat{e}}\_{[a]} &= \mathrm{e}^{-\alpha\phi}\,(\bm{e}\_{(a)} - \mathcal{A}\_{(a)} \bm{\partial}_{z})\,, \qquad a=0,\dots,D-1\,,\\
	\label{eq:original-frame-z}
	\bm{\hat{e}}\_{[z]} &= \mathrm{e}^{-\beta\phi}\,\bm{\partial}_{z}\,,
\end{align}
\end{subequations}
where $\bm{e}\_{(a)}$ is meant as the pushforward of the null frame \eqref{eq:null-frame} by the embedding $\iota$ mentioned in section \ref{chap:kk-basics}. The notation using the tensor frame components in this frame will be governed by conventions similar to those for the reduced frame \eqref{eq:null-frame}. The range for indices $\scriptstyle \s{[a], [b]}, \dots$ remains $\lbrace 0, \dots, D-1 \rbrace$.

The frame \eqref{eq:original-frame} is indeed a null frame:
\begin{equation}
	{}^{\sharp\sharp}\bm{\hat{g}} = \bm{\hat{\ell}} \vee \bm{\hat{n}} + \sum_{i=2}^{D-1} \bm{\hat{e}}\_{[i]} \bm{\hat{e}}\_{[i]} + \bm{\hat{e}}\_{[z]} \bm{\hat{e}}\_{[z]}\,,
\end{equation}
where
\begin{equation}
	\bm{\hat{\ell}} := \bm{\hat{e}}\_{[0]}\,,\quad \bm{\hat{n}} := \bm{\hat{e}}\_{[1]}\,.
\end{equation}
Its dual is
\begin{subequations}
\label{eq:original-frame-dual}
\begin{align}
	\label{eq:original-frame-dual-a}
	\bm{\hat{\varepsilon}}\^{[a]} &= \mathrm{e}^{\alpha\phi}\,\bm{\varepsilon}\^{(a)}\,, \qquad a=0,\dots,D-1\,,\\
	\label{eq:original-frame-dual-z}
	\bm{\hat{\varepsilon}}\^{[z]} &= \mathrm{e}^{\beta\phi}\,(\bm{\mathrm{d}}z + \bm{\mathcal{A}})\,,
\end{align}
\end{subequations}
where $\bm{\varepsilon}\^{(a)}$ is the dual of $\bm{e}\_{(a)}$ in the subspace orthogonal to $\bm{\partial}_{z}$.

\begin{remark}
	Please note how the individual terms in the completeness relation
	\begin{equation}
		\label{eq:original-metric-completeness}
		\bm{\hat{g}} = \bm{\hat{\varepsilon}}\^{[0]} \vee \bm{\hat{\varepsilon}}\^{[1]} + \sum_{i=2}^{D-1} \bm{\hat{\varepsilon}}\^{[i]} \bm{\hat{\varepsilon}}\^{[i]} + \bm{\hat{\varepsilon}}\^{[z]} \bm{\hat{\varepsilon}}\^{[z]}
	\end{equation}
	correspond in form to the metric decomposition \eqref{eq:kk-decomposition-metric}. This is another manifestation of the beauty of the specific choice of the $D$-dimensional fields $\bm{g}$, $\bm{\mathcal{A}}$ and $\phi$.
\end{remark}

\section{Covariant derivative}

In view of the forthcoming computations, we will find convenient to define an extension of the covariant derivative on the reduced spacetime to the whole $T\mathcal{\hat{M}}$.
\begin{definition}
\label{def:nabla-definition}
	We extend the Levi-Civita connection $\bm{\nabla}$ (later also denoted by a semicolon) on $T\mathcal{M}$ to the entire lifted spacetime as the torsion-free affine connection annihilating, in addition to the reduced metric, both the vector $\bm{\partial}_{z}$ and the form $\bm{\mathrm{d}}z$:
	\begin{subequations}
	\label{eq:nabla-definition}
	\begin{align}
		\label{eq:nabla-definition-torsion}
		\bm{\mathrm{Tor}}(\bm{\nabla}) & = 0\,,\\
		\label{eq:nabla-definition-metric}
		\bm{\nabla} \bm{g} &= 0\,,\\
		\label{eq:nabla-definition-vector}
		\bm{\nabla} \bm{\partial}_{z} &= 0\,,\\
		\label{eq:nabla-definition-form}
		\bm{\nabla} \bm{\mathrm{d}}z &= 0\,.
	\end{align}
	\end{subequations}
\end{definition}
\begin{remark}
	This definition is consistent and unambiguous: the Christoffel symbols $\Gamma^{a}_{\hphantom{a}bc}$ of $\bm{\nabla}$ on $\mathcal{\hat{M}}$ correspond with those on $\mathcal{M}$, while $\Gamma^{z}_{\hphantom{z}BC}$, $\Gamma^{A}_{\hphantom{A}zC}$ and $\Gamma^{A}_{\hphantom{A}Bz}$ are all zero.
\end{remark}
\begin{remark}
	This $\bm{\nabla}$ is indeed an extension in the sense that for any tensor $\bm{T}$ on $T\mathcal{M}$, \enquote{the pushforward of the derivative $\bm{\nabla} \bm{T}$ is equal to the corresponding derivative of the pushforward}; precisely:
	\begin{equation}
		\label{eq:nabla-pushforward}
		\iota_{\star}(\bm{\nabla} \bm{T}) = \bm{\nabla} \iota_{\star}\bm{T}\,.
	\end{equation}
\end{remark}
\begin{remark}
	Thanks to the vanishing of $z$ components of $\Gamma^{A}_{\hphantom{A}BC}$, $\bm{\nabla}$ even \enquote{commutes}, in a sense, with the projector: for any tensor $\bm{T}$ on $T\mathcal{\hat{M}}$ that is projectable to $T\mathcal{M}$:
	\begin{equation}
		\label{eq:nabla-projection-precondition}
		\mathcal{L}_{\bm{\partial}_{z}} \bm{T} = 0\,,
	\end{equation}
	\enquote{the projection of the corresponding derivative $\bm{\nabla} \bm{T}$ is equal to the derivative of the projection}; precisely:
	\begin{equation}
		\label{eq:nabla-projection}
		\uppi(\bm{\nabla} \bm{T}) = \bm{\nabla} \uppi(\bm{T})\,.
	\end{equation}
\end{remark}
Much like when dealing with Christoffel symbols, we will introduce $\bm{\hat{\Gamma}}$:
\begin{definition}
\label{def:nabla-difference}
	We define $\bm{\hat{\Gamma}}$ as the tensor of difference between the original and the reduced covariant derivative:
	\begin{equation}
		\label{eq:nabla-difference}
		\hat{\Gamma}\vphantom{\Gamma}^{A}_{\hphantom{A}BC} v^{C} := \hat{\nabla}_{B} v^{A} - \nabla_{B} v^{A}\,,\mathrlap{\qquad v^{A} \in T\mathcal{\hat{M}}\,.}
	\end{equation}
\end{definition}
It is a common routine to derive the analogue of the expression of Christoffel symbols in terms of the metric:
\begin{proposition}
\label{stat:christoffel}
	The difference between the Levi-Civita connection $\bm{\hat{\nabla}}$ on $T\mathcal{\hat{M}}$ and a torsion-free connection $\bm{\nabla}$ (denoted by a semicolon) is given by
	\begin{equation}
		\label{eq:christoffel}
		\hat{\Gamma}_{ABC} = \tfrac{1}{2}\left( \hat{g}_{AB;C} + \hat{g}_{CA;B} - \hat{g}_{BC;A} \right).
	\end{equation}
\end{proposition}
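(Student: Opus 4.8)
The plan is to reproduce the classical Koszul-type derivation of the Levi--Civita Christoffel symbols, but with the partial (coordinate) derivatives replaced everywhere by the torsion-free reference connection $\bm{\nabla}$ (the semicolon). The two structural inputs I will exploit are exactly the properties distinguishing $\bm{\hat{\nabla}}$: it is the Levi--Civita connection of $\bm{\hat{g}}$, hence both torsion-free and metric-compatible, $\hat{\nabla}_{C}\hat{g}_{AB} = 0$, whereas $\bm{\nabla}$ is merely torsion-free and in general \emph{not} compatible with $\bm{\hat{g}}$.

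First I would note that $\hat{\Gamma}^{A}_{\hphantom{A}BC}$, defined in \eqref{eq:nabla-difference} as the difference of the two covariant derivatives acting on vectors, is a genuine $(1,2)$ tensor, since the inhomogeneous non-tensorial pieces of the two connections cancel in the difference. I would then argue that it is symmetric in its lower pair $BC$: the antisymmetric part $\hat{\Gamma}^{A}_{\hphantom{A}[BC]}$ equals one half the difference of the torsion tensors of $\bm{\hat{\nabla}}$ and $\bm{\nabla}$, both of which vanish by hypothesis (the reference $\bm{\nabla}$ by assumption, and $\bm{\hat{\nabla}}$ because it is Levi--Civita). Throughout I lower the upper index with $\bm{\hat{g}}$, writing $\hat{\Gamma}_{ABC} := \hat{g}_{AD}\hat{\Gamma}^{D}_{\hphantom{D}BC}$, which is then symmetric in its last two slots.

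Next I would expand the metric-compatibility condition $\hat{\nabla}_{C}\hat{g}_{AB} = 0$ in terms of $\bm{\nabla}$ and the difference tensor. Writing the covariant derivative of the $(0,2)$ tensor $\hat{g}_{AB}$ with its two correction terms gives $\hat{g}_{AB;C} = \hat{\Gamma}_{BCA} + \hat{\Gamma}_{ACB}$, and applying the symmetry in the last two indices recasts this as $\hat{g}_{AB;C} = \hat{\Gamma}_{ABC} + \hat{\Gamma}_{BAC}$. I would then write down the two cyclic permutations of this identity in the index triple $A,B,C$ and form the specific combination $\hat{g}_{AB;C} + \hat{g}_{CA;B} - \hat{g}_{BC;A}$ appearing in the claim; after once more invoking the symmetry of $\hat{\Gamma}$ in its last two indices, every contribution except $\hat{\Gamma}_{ABC}$ cancels in pairs, leaving $2\hat{\Gamma}_{ABC}$ and hence \eqref{eq:christoffel}.

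There is no substantive obstacle here: the argument is purely algebraic and formally identical to the textbook derivation of the Christoffel formula, with the semicolon-derivatives $\hat{g}_{AB;C}$ playing precisely the role that partial derivatives play there — and, crucially, not vanishing, since $\bm{\nabla}$ is not compatible with $\bm{\hat{g}}$. The only point requiring care is the index bookkeeping: consistently tracking which lower index of $\hat{\Gamma}$ is the derivative slot (the first) versus the contracted slot (the second), and correctly applying the last-two-slot symmetry when rearranging terms in the cyclic combination.
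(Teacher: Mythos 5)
Your proposal is correct and follows essentially the same route as the paper's own proof: symmetry of the difference tensor in its last two slots from torsion-freeness of both connections, expansion of $\hat{g}_{AB;C}$ via the Leibniz rule and metric compatibility of $\bm{\hat{\nabla}}$, and the Koszul-type cyclic combination. The only cosmetic difference is that you phrase the symmetry via the difference of torsion tensors while the paper checks it by acting on exact forms $\mathrm{d}_{A}f$, which is the same fact in different clothing; your write-up of the final cancellation is merely more explicit than the paper's one-line remark.
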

\begin{proof}
	We follow the process which is standardly realized with a coordinate covariant derivative.

	First we note that due to zero torsion of both connections, the difference tensor is symmetric in the last two indices. For arbitrary $f\colon \mathcal{\hat{M}} \rightarrow \mathbb{R}$:
	\begin{equation}
		\label{eq:christoffel-symmetry}
		\hat{\Gamma}\vphantom{\Gamma}^{A}_{\hphantom{A}[BC]} \mathrm{d}^{\vphantom{A}}_{A} f = -\hat{\nabla}_{[B} \mathrm{d}_{C]} f + \nabla_{[B} \mathrm{d}_{C]} f = 0\,,
	\end{equation}
	where we used the Leibniz rule to see how $\bm{\hat{\nabla}} - \bm{\nabla}$ acts on covectors.

	Using the Levi-Civita property of $\bm{\hat{\nabla}}$ and again the Leibniz rule, we express $\bm{\nabla} \bm{\hat{g}}$:
	\begin{equation}
		\nabla_{C} \hat{g}_{AB} = -(\hat{\nabla} - \nabla)_{C} \hat{g}_{AB} = \hat{\Gamma}\vphantom{\Gamma}^{D}_{\hphantom{D}CA} \hat{g}_{DB} + \hat{\Gamma}\vphantom{\Gamma}^{D}_{\hphantom{D}CB} \hat{g}_{AD}\,,
	\end{equation}
	which can be combined using \eqref{eq:christoffel-symmetry} to obtain \eqref{eq:christoffel}.
\end{proof}

We now have all the necessary tools to express the frame components of $\bm{\nabla} \bm{\hat{g}}$, $\bm{\hat{\Gamma}}$ and finally $\bm{L}$, in terms of the frame components of the $D$-dimensional fields $\bm{g}$, $\bm{\mathcal{A}}$ and $\phi$ in the related frame.

\begin{proposition}
\label{stat:metric-derivative}
	The derivative $\bm{\nabla} \bm{\hat{g}}$ can be expanded as follows:
	\begin{subequations}
	\label{eq:metric-derivative}
	\begin{align}
		\label{eq:metric-derivative-abc}
		\hat{g}\_{[a][b];[c]} &= \mathrm{e}^{-3\alpha\phi} \left(\mathrm{e}^{2\alpha\phi}\right)\_{;(c)} g\_{(a)(b)}\,,\\
		\label{eq:metric-derivative-zab}
		\hat{g}\_{[z][a];[b]} &= \mathrm{e}^{(-2\alpha+\beta)\phi} \mathcal{A}\_{(a);(b)}\,,\\
		\label{eq:metric-derivative-zza}
		\hat{g}\_{[z][z];[a]} &= \mathrm{e}^{(-\alpha-2\beta)\phi} \left(\mathrm{e}^{2\beta\phi}\right)\_{;(a)}\,,\\
		\label{eq:metric-derivative-ABz}
		\hat{g}\_{[A][B];[z]} &= 0\,.
	\end{align}
	\end{subequations}
\end{proposition}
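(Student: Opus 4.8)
The plan is to bypass the traditional component Christoffel bookkeeping and instead differentiate the invariant decomposition \eqref{eq:kk-decomposition-metric} of $\bm{\hat{g}}$ directly. Writing $\bm{\zeta} := \bm{\mathrm{d}}z + \bm{\mathcal{A}}$, so that $\bm{\hat{g}} = \mathrm{e}^{2\alpha\phi}\bm{g} + \mathrm{e}^{2\beta\phi}\,\bm{\zeta}\otimes\bm{\zeta}$, I would apply $\bm{\nabla}$ and invoke Definition \ref{def:nabla-definition}: since $\bm{\nabla}$ annihilates $\bm{g}$, $\bm{\mathrm{d}}z$ and $\bm{\partial}_{z}$, the Leibniz rule leaves only the derivatives of the scalar prefactors and of $\bm{\mathcal{A}}$ (note $\bm{\nabla}\bm{\zeta} = \bm{\nabla}\bm{\mathcal{A}}$). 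Keeping the differentiated index last, as the frame-component convention demands, this produces the closed expression
\[
	\hat{g}_{AB;C} = (\mathrm{e}^{2\alpha\phi})_{;C}\,g_{AB} + (\mathrm{e}^{2\beta\phi})_{;C}\,\zeta_{A}\zeta_{B} + \mathrm{e}^{2\beta\phi}\bigl(\mathcal{A}_{A;C}\,\zeta_{B} + \zeta_{A}\,\mathcal{A}_{B;C}\bigr),
\]
which I would then contract with the adapted frame \eqref{eq:original-frame}.

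The evaluation reduces to a short table of contractions, whose decisive entry exploits that the frame was tailored for exactly this purpose. By \eqref{eq:original-frame-dual-z} one has $\bm{\zeta} = \mathrm{e}^{-\beta\phi}\,\bm{\hat{\varepsilon}}\^{[z]}$, so biorthogonality of \eqref{eq:original-frame} and \eqref{eq:original-frame-dual} gives $\zeta\_{[a]} = 0$ together with $\zeta\_{[z]} = \mathrm{e}^{-\beta\phi}$. The remaining ingredients are equally direct: from \eqref{eq:original-frame-a} and $g(\bm{\partial}_{z},\cdot)=0$, the projected metric annihilates $\bm{\hat{e}}\_{[z]}$ while $g\_{[a][b]} = \mathrm{e}^{-2\alpha\phi} g\_{(a)(b)}$; since $\phi$ is $z$-independent, $(\mathrm{e}^{2\alpha\phi})\_{;[z]} = (\mathrm{e}^{2\beta\phi})\_{;[z]} = 0$ and $\bm{\hat{e}}\_{[a]}$ differentiates the prefactors only through its $\bm{e}\_{(a)}$ part, so that $(\mathrm{e}^{2\alpha\phi})\_{;[c]} = \mathrm{e}^{-\alpha\phi}(\mathrm{e}^{2\alpha\phi})\_{;(c)}$ and likewise for $\beta$; and, because $\bm{\mathcal{A}}$ is the pushforward of a field on $\mathcal{M}$, the pushforward identity \eqref{eq:nabla-pushforward} gives $\bm{\nabla}\bm{\mathcal{A}} = \iota_{\star}\bm{\nabla}\uppi\bm{\mathcal{A}}$, whence $\mathcal{A}\_{[a];[b]} = \mathrm{e}^{-2\alpha\phi}\mathcal{A}\_{(a);(b)}$ while any component carrying a $[z]$ index vanishes.

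Substituting this table into the displayed formula, the four claimed identities follow almost by inspection. In $\hat{g}\_{[a][b];[c]}$ the two $\bm{\zeta}$-terms die because $\zeta\_{[a]} = \zeta\_{[b]} = 0$, leaving $\mathrm{e}^{-3\alpha\phi}(\mathrm{e}^{2\alpha\phi})\_{;(c)} g\_{(a)(b)}$; in $\hat{g}\_{[z][a];[b]}$ everything vanishes except $\mathrm{e}^{2\beta\phi}\zeta\_{[z]}\mathcal{A}\_{[a];[b]}$, whose prefactors combine to the exponent $-2\alpha+\beta$; in $\hat{g}\_{[z][z];[a]}$ only $(\mathrm{e}^{2\beta\phi})\_{;[a]}\,\zeta\_{[z]}^{2}$ survives, producing the exponent $-\alpha-2\beta$; and $\hat{g}\_{[A][B];[z]} = 0$ because each surviving building block, namely $(\mathrm{e}^{2\alpha\phi})\_{;[z]}$, $(\mathrm{e}^{2\beta\phi})\_{;[z]}$ and the derivatives $\mathcal{A}\_{[a];[z]}$, $\mathcal{A}\_{[z];[b]}$, is separately zero. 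I expect the only genuine subtlety to be this second step: justifying, via \eqref{eq:nabla-pushforward} and the $z$-independence of $\bm{\mathcal{A}}$, that the frame components of $\bm{\nabla}\bm{\mathcal{A}}$ reduce to the reduced covariant derivative with vanishing $z$-components, and then tracking which contractions in the displayed formula survive. Once the contraction table is fixed, the rest is routine algebra of the exponential prefactors.
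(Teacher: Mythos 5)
Your proof is correct, and it takes a genuinely different route from the paper's. The paper never differentiates the invariant decomposition \eqref{eq:kk-decomposition-metric} as a whole: it first shows $\hat{g}_{AB;z}=0$, expands the frame contraction into $\uppi$-projected components (its equation \eqref{eq:metric-derivative-abc-halfway}), then differentiates the component relations \eqref{eq:kk-decomposition-metric-components} with the help of \eqref{eq:nabla-projection}, and finally relies on the $\bm{\mathcal{A}}$-terms cancelling against the cross terms of the frame expansion. You instead apply $\bm{\nabla}$ once, globally, to $\bm{\hat{g}}=\mathrm{e}^{2\alpha\phi}\bm{g}+\mathrm{e}^{2\beta\phi}\,\bm{\zeta}\otimes\bm{\zeta}$ using \eqref{eq:nabla-definition-metric}--\eqref{eq:nabla-definition-form}, and contract afterwards; since $\bm{\zeta}=\mathrm{e}^{-\beta\phi}\bm{\hat{\varepsilon}}^{[z]}$ is, up to scale, a dual frame covector, biorthogonality with \eqref{eq:original-frame} gives $\zeta_{[a]}=0$, so the cancellation the paper must track never appears: all $\bm{\mathcal{A}}$-dependence is packaged inside $\bm{\zeta}$ and dies on contraction. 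What your route buys is brevity and a structural explanation --- it exploits precisely the parallel between the completeness relation \eqref{eq:original-metric-completeness} and the decomposition \eqref{eq:kk-decomposition-metric} that the paper's own remark highlights. What the paper's route buys is a worked demonstration of its projection and index conventions (it is deliberately \enquote{verbose} for pedagogical reasons), and its component-projection scheme is then recycled step by step for the second derivatives in proposition \ref{stat:metric-second-derivative}. Your two supporting claims are the load-bearing ones and are correctly justified: $g_{AB}\xi^{B}=0$ (from \eqref{eq:kk-decomposition-g}) kills the metric terms carrying a $[z]$ leg, and $\nabla_{C}\mathcal{A}_{A}$ has vanishing $z$-components in both slots --- either via \eqref{eq:nabla-pushforward} applied to \eqref{eq:kk-decomposition-A} as you argue, or directly from $\mathcal{A}_{A}\xi^{A}=0$ together with \eqref{eq:nabla-definition-vector} --- which is exactly what makes the $[z]$-component formulas \eqref{eq:metric-derivative-zab}, \eqref{eq:metric-derivative-zza} and \eqref{eq:metric-derivative-ABz} come out.
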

\begin{proof}
	We proceed with a direct proof and in a slightly verbose manner, to demonstrate the usage of the established notation.

	From \eqref{eq:nabla-pushforward} used on \eqref{eq:kk-decomposition-g} and \eqref{eq:kk-decomposition-A}, we see that the derivative $\nabla_{z}$ of the $D$-dimensional fields vanish:
	\begin{subequations}
	\begin{align}
		g_{AB;z} &= 0\,,\\
		\mathcal{A}_{A;z} &= 0\,,\\
		\phi_{;z} &= 0\,.
	\end{align}
	\end{subequations}
	By applying $\nabla_{z}$ on \eqref{eq:kk-decomposition-metric} and using \eqref{eq:nabla-definition-form}, we thus have:
	\begin{equation}
	\label{eq:metric-derivative-intermediate-z}
		\hat{g}_{AB;z} = 0\,.
	\end{equation}
	Therefore,
	\begin{equation}
		\hat{g}\_{[A][B];[C]} = \hat{e}^{A}\_{[A]} \hat{e}^{B}\_{[B]} \hat{e}^{c}\_{[C]}\,\hat{g}_{AB;c}\,.
	\end{equation}
	As this is the first \enquote{serious} usage of the index notation, let's pause here and remind its semantics. Here, \smash{$\hat{e}^{c}\_{[C]}$} is a vector constructed from the frame vector \smash{$\bm{\hat{e}}\_{[C]}$} by taking its projection \smash{$\uppi \bm{\hat{e}}\_{[C]}$}. The contravariant index $\scriptstyle c$ is a usual abstract tensor index. On the other hand, \smash{$\hat{e}^{A}\_{[A]}$} is the frame vector \smash{$\bm{\hat{e}}\_{[A]}$} itself; of course, there is no connection implied by making the frame numbering $\scriptstyle \s{[A]}$ coincide with the contravariant abstract index $\scriptstyle A$.

	Let's start with \eqref{eq:metric-derivative-abc}:
	\begin{equation}
		\hat{g}\_{[a][b];[c]} = \hat{e}^{A}\_{[a]} \hat{e}^{B}\_{[b]} \hat{e}^{c}\_{[c]}\,\hat{g}_{AB;c}\,.
	\end{equation}
	The projection $\uppi$ of \eqref{eq:original-frame-a} is
	\begin{equation}
		\hat{e}^{c}\_{[c]} = \mathrm{e}^{-\alpha\phi} e^{c}\_{(c)}
	\end{equation}
	and to eliminate the frame vectors $\bm{\hat{e}}\_{[a]}$, $\bm{\hat{e}}\_{[b]}$, we use \eqref{eq:original-frame} directly:
	\begin{equation}
		\label{eq:metric-derivative-abc-halfway}
		\hat{g}\_{[a][b];[c]} = \mathrm{e}^{-3\alpha\phi} \left( \hat{g}\_{(a)(b);(c)} - \mathcal{A}\_{(a)} \hat{g}\_{z(b);(c)} - \mathcal{A}\_{(b)} \hat{g}\_{(a)z;(c)} + \mathcal{A}\_{(a)} \mathcal{A}\_{(b)} \hat{g}\_{zz;(c)} \right).
	\end{equation}

	Thanks to \eqref{eq:nabla-projection}, $\uppi \bm{\nabla} \bm{\hat{g}}$ has the same structure as \eqref{eq:kk-decomposition-metric-components-ab}:
	\begin{equation}
		\hat{g}_{ab;c} = \left( \mathrm{e}^{2\alpha\phi} g_{ab} + \mathrm{e}^{2\beta\phi} \mathcal{A}_{a} \mathcal{A}_{b} \right)_{;c}\,.
	\end{equation}
	This projection is enough to express the components $\hat{g}\_{(a)(b);(c)}$ like
	\begin{equation}
		\hat{g}\_{(a)(b);(c)} = \left( \mathrm{e}^{2\alpha\phi} g\_{(a)(b)} + \mathrm{e}^{2\beta\phi} \mathcal{A}\_{(a)} \mathcal{A}\_{(b)} \right)\_{;(c)}\,,
	\end{equation}
	since by definition, $\bm{e}\_{(a)}$ is orthogonal to $\bm{\mathrm{d}}z$:
	\begin{equation}
		e^{z}\_{(a)} = 0\,.
	\end{equation}
	Using \eqref{eq:nabla-definition-metric}, we get
	\begin{equation}
		\label{eq:metric-derivative-intermediate-abc}
		\hat{g}\_{(a)(b);(c)} = \left(\mathrm{e}^{2\alpha\phi}\right)\_{;(c)} g\_{(a)(b)} + \left( \mathrm{e}^{2\beta\phi} \mathcal{A}\_{(a)} \mathcal{A}\_{(b)} \right)\_{;(c)}\,.
	\end{equation}

	To expand the terms with $\hat{g}\_{(a)z;(c)}$ and $\hat{g}\_{zz;(c)}$, we start by realizing that the contraction with $\bm{\partial}_{z}$ commutes with the derivative $\bm{\nabla}$, thanks to \eqref{eq:nabla-definition-vector}.\footnote{Combined with \eqref{eq:nabla-definition-form}, this ensures that using the operator $\bm{\nabla}$ introduces no ambiguity to convention \ref{def:small-indices}.} We are then free to employ \eqref{eq:nabla-projection} again\footnote{Symbolically, $\uppi\mathrm{C}\bm{\nabla}\bm{\hat{g}} = \uppi\bm{\nabla}\mathrm{C}\bm{\hat{g}} = \bm{\nabla}\uppi\mathrm{C}\bm{\hat{g}}$, where $\mathrm{C}$ is the operator of contraction with $\bm{\partial}_{z}$.}, obtaining from \eqref{eq:kk-decomposition-metric-components-az} and \eqref{eq:kk-decomposition-metric-components-zz} respectively:
	\begin{subequations}
	\label{eq:metric-derivative-intermediate}
	\begin{align}
		\label{eq:metric-derivative-intermediate-zab}
		\hat{g}\_{(a)z;(c)} &= \left( \mathrm{e}^{2\beta\phi} \mathcal{A}\_{(a)} \right)\_{;(c)}\,,\\
		\label{eq:metric-derivative-intermediate-zza}
		\hat{g}\_{zz;(c)} &= \left( \mathrm{e}^{2\beta\phi} \right)\_{;(c)}\,.
	\end{align}
	\end{subequations}

	After substituting into \eqref{eq:metric-derivative-abc-halfway}, the terms with $\bm{\mathcal{A}}$ cancel each other out and we arrive at \eqref{eq:metric-derivative-abc}.

	Following the same mechanism, we get
	\begin{subequations}
	\begin{align}
		\hat{g}\_{[z][b];[c]} &= \hat{e}^{A}\_{[z]} \hat{e}^{B}\_{[b]} \hat{e}^{c}\_{[c]}\,\hat{g}_{AB;c} = \mathrm{e}^{(-2\alpha-\beta)\phi} \left( \hat{g}\_{z(b);(c)} - \mathcal{A}\_{(b)} \hat{g}\_{zz;(c)} \right),\\
		\hat{g}\_{[z][z];[c]} &= \hat{e}^{A}\_{[z]} \hat{e}^{B}\_{[z]} \hat{e}^{c}\_{[c]}\,\hat{g}_{AB;c} = \mathrm{e}^{(-\alpha-2\beta)\phi}\,\hat{g}\_{zz;(c)}\,.
	\end{align}
	\end{subequations}
	Substituting again from \eqref{eq:metric-derivative-intermediate}, we arrive at \eqref{eq:metric-derivative-zab} and \eqref{eq:metric-derivative-zza}.

	Lastly, we have
	\begin{equation}
		\hat{g}\_{[A][B];[z]} = \hat{e}^{A}\_{[A]} \hat{e}^{B}\_{[B]} \hat{e}^{c}\_{[z]}\,\hat{g}_{AB;c}\,.
	\end{equation}
	However, the projection $\uppi$ of \eqref{eq:original-frame-z} is
	\begin{equation}
		\hat{e}^{c}\_{[z]} = 0\,,
	\end{equation}
	which leads to \eqref{eq:metric-derivative-ABz}.
\end{proof}

Substituting this result into \eqref{eq:christoffel} from proposition \ref{stat:christoffel}, we get:
\begin{corollary}
\label{stat:reduction-Gamma}
	The frame components of $\bm{\hat{\Gamma}}$ can be expressed as follows:
	\begin{subequations}
	\begin{align}
		\hat{\Gamma}\_{[a][b][c]} &= \left(\mathrm{e}^{-\alpha\phi}\right)\_{;(a)} g\_{(b)(c)} - \left(\mathrm{e}^{-\alpha\phi}\right)\_{;(b)} g\_{(a)(c)} - \left(\mathrm{e}^{-\alpha\phi}\right)\_{;(c)} g\_{(a)(b)}\,,\\
		\hat{\Gamma}\_{[z][a][b]} &= \frac{\mathrm{e}^{(-2\alpha+\beta)\phi}}{2} \left(\mathcal{A}\_{(a);(b)} + \mathcal{A}\_{(b);(a)}\right),\\
		\hat{\Gamma}\_{[a][b][z]} &= \frac{\mathrm{e}^{(-2\alpha+\beta)\phi}}{2} \left(\mathcal{A}\_{(a);(b)} - \mathcal{A}\_{(b);(a)}\right),\\
		\hat{\Gamma}\_{[z][z][a]} &= \mathrm{e}^{(-\alpha-\beta)\phi} \left(\mathrm{e}^{\beta\phi}\right)\_{;(a)}\,,\\
		\hat{\Gamma}\_{[a][z][z]} &= -\mathrm{e}^{(-\alpha-\beta)\phi} \left(\mathrm{e}^{\beta\phi}\right)\_{;(a)}\,,\\
		\hat{\Gamma}\_{[z][z][z]} &= 0\,.
	\end{align}
	\end{subequations}
\end{corollary}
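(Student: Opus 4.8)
The plan is to substitute the four component families of $\bm{\nabla}\bm{\hat{g}}$ computed in Proposition~\ref{stat:metric-derivative} directly into the difference-tensor formula \eqref{eq:christoffel} of Proposition~\ref{stat:christoffel}, treating separately each way the $z$-leg can be distributed among the three frame indices: all three drawn from $\{[a],[b],[c]\}$; a single $[z]$ placed in each of the three slots in turn; two indices equal to $[z]$; and all three equal to $[z]$.

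In each case I would write the three cyclically permuted terms $\hat{g}\_{[A][B];[C]} + \hat{g}\_{[C][A];[B]} - \hat{g}\_{[B][C];[A]}$ prescribed by \eqref{eq:christoffel} and read off, slot by slot, which of the four formulas of Proposition~\ref{stat:metric-derivative} supplies each term. Two structural observations do most of the bookkeeping. First, the metric symmetry $\hat{g}\_{[z][a]} = \hat{g}\_{[a][z]}$ lets every mixed term be matched against \eqref{eq:metric-derivative-zab}, so that in $\hat{\Gamma}\_{[z][a][b]}$ and $\hat{\Gamma}\_{[a][b][z]}$ the two surviving terms assemble into the symmetric and the antisymmetric parts of $\mathcal{A}\_{(a);(b)}$, respectively. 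Second, equation \eqref{eq:metric-derivative-ABz} annihilates any term whose differentiation index is $[z]$; this removes the appropriate term in each case carrying a $z$-leg, produces the opposite signs of the two double-$z$ components according to whether the surviving term sits in the leading or in the subtracted slot, and makes $\hat{\Gamma}\_{[z][z][z]}$ vanish outright.

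The only genuine calculation is repackaging the exponential prefactors into the compact form of the statement. From the chain rule $\left(\mathrm{e}^{k\phi}\right)\_{;(a)} = k\,\phi\_{;(a)}\,\mathrm{e}^{k\phi}$ one gets $\mathrm{e}^{-3\alpha\phi}\left(\mathrm{e}^{2\alpha\phi}\right)\_{;(c)} = -2\left(\mathrm{e}^{-\alpha\phi}\right)\_{;(c)}$ and $\mathrm{e}^{(-\alpha-2\beta)\phi}\left(\mathrm{e}^{2\beta\phi}\right)\_{;(a)} = 2\,\mathrm{e}^{(-\alpha-\beta)\phi}\left(\mathrm{e}^{\beta\phi}\right)\_{;(a)}$, and in each the factor of $2$ cancels the $\tfrac12$ of \eqref{eq:christoffel}, leaving exactly the coefficients displayed in the corollary.

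I do not expect any conceptual obstacle; the work is purely combinatorial bookkeeping. The one spot requiring care is the fully tangent component $\hat{\Gamma}\_{[a][b][c]}$, where all three cyclic terms survive: the relative signs of the three $\left(\mathrm{e}^{-\alpha\phi}\right)$-derivative contributions must be tracked exactly through the cyclic permutation, so that the factor differentiated along $[a]$ retains its $+$ sign while the other two acquire $-$. The symmetry of the result in the last two indices $[b],[c]$, forced by the zero torsion established in Proposition~\ref{stat:christoffel}, serves as a convenient consistency check.
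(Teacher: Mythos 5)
Your proposal is correct and takes essentially the same route as the paper, which states the corollary without a separate proof precisely because it follows by the direct substitution of proposition \ref{stat:metric-derivative} into \eqref{eq:christoffel} that you describe. Your two prefactor identities and the sign bookkeeping (including the $\tfrac{1}{2}$ cancellation and the vanishing of every term differentiated along $[z]$) all check out against the stated components.
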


\section{Optical matrix reduction}

\begin{proposition}\label{stat:reduction-L}
	Let $D \ge 4$. Let $\bm{\ell}$ be a null vector field in a $D$-dimensional spacetime with the corresponding $L_{ab}$. Suppose that this spacetime is a Kaluza--Klein reduction of a $(D+1)$-dimensional spacetime. Then the Ricci rotation coefficients $\hat{L}_{AB}$ of this $(D+1)$-dimensional spacetime with respect to $\bm{\hat{\ell}}$, have the following relation to $L_{ab}$:
	\begin{subequations}
	\label{eq:reduction-L}
	\begin{align}
		\label{eq:reduction-L-ab}
		\hat{L}\_{[a][b]} &= \mathrm{e}^{-\alpha\phi} L\_{(a)(b)} - \left( \mathrm{e}^{-\alpha\phi} \right)\_{\!;(0)} g\_{(a)(b)} + \left( \mathrm{e}^{-\alpha\phi} \right)\_{\!;(a)} \ell\_{(b)}\,,\\
		\label{eq:reduction-L-az}
		\hat{L}\_{[a][z]} &= \frac{\mathrm{e}^{(-2\alpha + \beta)\phi}}{2} \mathcal{F}\_{(0)(a)}\,,\\
		\label{eq:reduction-L-za}
		\hat{L}\_{[z][a]} &= \hat{L}\_{[a][z]}\,,\\
		\label{eq:reduction-L-zz}
		\hat{L}\_{[z][z]} &= \mathrm{e}^{(-\alpha - \beta)\phi} \left(\mathrm{e}^{\beta\phi}\right)\_{\!;(0)}\,.
	\end{align}
	\end{subequations}
\end{proposition}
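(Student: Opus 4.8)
The plan is to route the whole computation through the difference tensor $\bm{\hat{\Gamma}}$, whose frame components are already tabulated in corollary~\ref{stat:reduction-Gamma}. Since the definition \eqref{eq:ricci-rotation-def} of the Ricci rotation coefficients uses the Levi--Civita connection, on $\mathcal{\hat{M}}$ we have $\hat{L}_{AB} = \hat{\nabla}_{B}\hat{\ell}_{A}$, whereas all the reduction formulae are phrased through the extended connection $\bm{\nabla}$. First I would therefore apply the Leibniz rule to the covector $\bm{\hat{\ell}}$ and use \eqref{eq:nabla-difference} to convert between the two connections, obtaining $\hat{L}_{AB} = \nabla_{B}\hat{\ell}_{A} - \hat{\Gamma}^{C}_{\hphantom{C}BA}\hat{\ell}_{C}$. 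Lowering the contracted index with $\bm{\hat{g}}$ and recalling $\hat{\ell}^{A} = \hat{e}^{A}_{[0]}$, the last term is nothing but $\hat{\Gamma}_{[0][B][A]}$ in frame components, so the problem splits cleanly into a known piece $-\hat{\Gamma}_{[0][B][A]}$ and a derivative piece $\nabla_{B}\hat{\ell}_{A}$.

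For the derivative piece I would first read off from the dual frame \eqref{eq:original-frame-dual-a} that $\hat{\ell}_{A} = \mathrm{e}^{\alpha\phi}\ell_{a}$ is projectable and carries no $z$-leg. Exactly as the metric satisfied \eqref{eq:metric-derivative-intermediate-z}, one checks $\hat{\ell}_{A;z} = 0$ (using that $\bm{\nabla}$ annihilates $\bm{\mathrm{d}}z$ and $\phi_{;z} = 0$), which licenses feeding the derivative slot the projected frame vector $\uppi\bm{\hat{e}}_{[B]}$, equal to $\mathrm{e}^{-\alpha\phi}\bm{e}_{(b)}$ for $[B]=[b]$ and vanishing for $[B]=[z]$, precisely as in the proof of proposition~\ref{stat:metric-derivative}. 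The vanishing for $[B]=[z]$ immediately collapses $\hat{L}_{[A][z]}$ onto pure $\hat{\Gamma}$ terms; substituting corollary~\ref{stat:reduction-Gamma} and rewriting $\mathcal{A}_{(0);(a)} - \mathcal{A}_{(a);(0)} = -\mathcal{F}_{(0)(a)}$ then yields \eqref{eq:reduction-L-az}, while \eqref{eq:reduction-L-za} and \eqref{eq:reduction-L-zz} follow by the same bookkeeping together with the symmetry of $\hat{\Gamma}$ in its last two indices.

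The substantive case is $\hat{L}_{[a][b]}$. Here I would expand $\nabla_{c}\hat{\ell}_{A} = (\mathrm{e}^{\alpha\phi})_{;c}\ell_{A} + \mathrm{e}^{\alpha\phi}\nabla_{c}\ell_{A}$, contract the covector index with the full $\hat{e}^{A}_{[a]}$ and the derivative index with $\mathrm{e}^{-\alpha\phi}e^{c}_{(b)}$, and reduce $\nabla_{c}\ell_{A}$ to the optical matrix $L_{(a)(b)}$; this produces $\mathrm{e}^{-\alpha\phi}L_{(a)(b)}$ together with a spurious $-(\mathrm{e}^{-\alpha\phi})_{;(b)}\ell_{(a)}$. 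The $\hat{\Gamma}_{[0][b][a]}$ term from corollary~\ref{stat:reduction-Gamma}, after using $g_{(0)(a)} = \ell_{(a)}$, contributes $-(\mathrm{e}^{-\alpha\phi})_{;(0)}g_{(a)(b)}$ along with $+(\mathrm{e}^{-\alpha\phi})_{;(b)}\ell_{(a)} + (\mathrm{e}^{-\alpha\phi})_{;(a)}\ell_{(b)}$. I expect the main obstacle to be exactly this final assembly: the two $(\mathrm{e}^{-\alpha\phi})_{;(b)}\ell_{(a)}$ contributions must cancel so that only the asymmetric $(\mathrm{e}^{-\alpha\phi})_{;(a)}\ell_{(b)}$ survives, delivering \eqref{eq:reduction-L-ab}. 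Keeping straight which slot is fed a full frame vector and which a projected one, being consistent with the covector sign in the connection difference, and recognising that $\ell_{(a)} = \delta^{1}_{(a)}$ collapses $g_{(0)(a)}$ to $\ell_{(a)}$, is where the sign and index errors threaten; everything else is routine substitution of proposition~\ref{stat:metric-derivative} and corollary~\ref{stat:reduction-Gamma}.
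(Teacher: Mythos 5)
Your proposal is correct and follows essentially the same route as the paper's own proof: both split $\hat{L}_{AB}$ into $\nabla_{B}\hat{\ell}_{A} - \hat{\Gamma}_{[0][B][A]}$ via the difference tensor, exploit $\hat{\ell}_{z} = 0$ (hence $\hat{\ell}_{A;z} = \hat{\ell}_{z;B} = 0$) to feed projected frame vectors into the derivative, and then substitute corollary \ref{stat:reduction-Gamma}, with your final cancellation of the two $\left(\mathrm{e}^{-\alpha\phi}\right)\_{\!;(b)} \ell\_{(a)}$ terms being exactly what the paper's product-rule step in \eqref{eq:reduction-L-halfway-ab} produces. Your sign bookkeeping, including $\mathcal{A}\_{(0);(a)} - \mathcal{A}\_{(a);(0)} = -\mathcal{F}\_{(0)(a)}$ under the paper's convention for $\bm{\mathcal{F}}$, is also consistent.
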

\begin{corollary}\label{stat:reduction-rho}
	For the optical matrix $\rho_{ij}$ of the $D$-dimensional spacetime with respect to some frame, and the optical matrix $\hat{\rho}_{ij}$ of the $(D+1)$-dimensional spacetime with respect to the associated frame, this means:
	\begin{subequations}\label{eq:reduction-rho}
	\begin{align}
		\hat{\rho}_{ij} &= \mathrm{e}^{-\alpha\phi} \rho_{ij} - \left( \mathrm{e}^{-\alpha\phi} \right)\_{\!;(0)} \delta_{ij}\,,\\*
		\hat{\rho}_{iz} &= \hat{\rho}_{zi} = \frac{\mathrm{e}^{(-2\alpha + \beta)\phi}}{2} \mathcal{F}\_{(0)(i)}\,,\\*
		\hat{\rho}_{zz} &= \mathrm{e}^{(-\alpha - \beta)\phi} \left(\mathrm{e}^{\beta\phi}\right)\_{\!;(0)}\,,
	\end{align}
	\end{subequations}
	where $i,j = 2,\dots,D-1$.
\end{corollary}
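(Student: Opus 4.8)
The plan is to read off the corollary as a direct restriction of Proposition \ref{stat:reduction-L} to the spatial frame indices, using only the defining properties of a null frame. Recall that the optical matrix is, by definition, the purely spatial block of the Ricci rotation coefficients $L_{AB}$; in the $(D+1)$-dimensional frame \eqref{eq:original-frame} the spatial directions consist of the $[i]$ with $i = 2, \dots, D-1$ together with the extra direction $[z]$. Hence $\hat{\rho}$ carries the index range $\{2, \dots, D-1, z\}$ and splits into the four blocks $\hat{\rho}_{ij}$, $\hat{\rho}_{iz}$, $\hat{\rho}_{zi}$ and $\hat{\rho}_{zz}$ displayed in the statement, each of which I would obtain by specializing the appropriate line of \eqref{eq:reduction-L}.

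First I would treat the purely transverse block $\hat{\rho}_{ij} = \hat{L}_{[i][j]}$ by setting $a = i$, $b = j$ in \eqref{eq:reduction-L-ab} with both indices ranging over $\{2, \dots, D-1\}$. Two null-frame identities then collapse the formula: the transverse frame vectors are orthonormal, so $g_{(i)(j)} = \delta_{ij}$; and since $\bm{\ell} = \bm{e}_{(0)}$ is null and orthogonal to the transverse directions, $\ell_{(j)} = g_{(0)(j)} = 0$, which kills the term proportional to $\ell_{(b)}$. Combined with the definition $L_{(i)(j)} = \rho_{ij}$, this reduces \eqref{eq:reduction-L-ab} exactly to the first line of \eqref{eq:reduction-rho}.

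The remaining blocks require no simplification at all. Setting $a = i$ in \eqref{eq:reduction-L-az} gives $\hat{\rho}_{iz} = \hat{L}_{[i][z]} = \tfrac{1}{2}\mathrm{e}^{(-2\alpha+\beta)\phi}\mathcal{F}_{(0)(i)}$, and the symmetry \eqref{eq:reduction-L-za} immediately yields $\hat{\rho}_{zi} = \hat{\rho}_{iz}$; the component $\hat{\rho}_{zz} = \hat{L}_{[z][z]}$ is read directly off \eqref{eq:reduction-L-zz}. There is essentially no obstacle here beyond bookkeeping, and the only point that must not be overlooked is a change of shape: the reduced $\rho_{ij}$ is a $(D-2)\times(D-2)$ object, whereas $\hat{\rho}$ is $(D-1)\times(D-1)$, the extra row and column being precisely the Maxwell-sourced entries $\hat{\rho}_{iz}$, $\hat{\rho}_{zi}$ along the new spatial direction $[z]$ together with the single scalar entry $\hat{\rho}_{zz} = \mathrm{e}^{(-\alpha-\beta)\phi}(\mathrm{e}^{\beta\phi})_{;(0)}$.
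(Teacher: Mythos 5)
Your proposal is correct and coincides with the paper's own (implicit) derivation: the corollary is stated as a direct restriction of Proposition~\ref{stat:reduction-L} to the spatial block $\lbrace 2,\dots,D-1,z \rbrace$, and your use of the null-frame identities $g\_{(i)(j)} = \delta_{ij}$ and $\ell\_{(j)} = g\_{(0)(j)} = 0$ to collapse \eqref{eq:reduction-L-ab} is exactly the intended bookkeeping. Your remark on the shape change from $(D-2)\times(D-2)$ to $(D-1)\times(D-1)$, with the extra row and column sourced by $\mathcal{F}\_{(0)(i)}$, is also consistent with how the paper subsequently uses the result in corollary~\ref{stat:reduction-A}.
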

\begin{corollary}\label{stat:reduction-A}
	In particular, the twist $A_{ij}$ of the $D$-dimensional spacetime with respect to some frame, is proportional to the twist $\hat{A}_{ij}$ of the $(D+1)$-dimensional spacetime with respect to the associated frame; specifically:
	\begin{subequations}
	\begin{alignat}{2}
		\hat{A}_{ij} &= \mathrm{e}^{-\alpha\phi} A_{ij}\,, & \qquad i,j&=2,\dots,D-1\,,\\
		\hat{A}_{iz} &= 0\,, & i&=2,\dots,D-1\,.
	\end{alignat}
	\end{subequations}
\end{corollary}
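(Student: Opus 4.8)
The plan is to obtain both identities by directly antisymmetrizing the expressions for $\hat{\rho}$ furnished by Corollary \ref{stat:reduction-rho}, since the twist is by definition the antisymmetric part of the optical matrix, $A_{ij} = \rho_{[ij]}$. Thus the whole statement reduces to applying the index operation $[\,\cdot\,]$ to the three blocks already computed there.

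First I would treat the purely transverse block $\hat{A}_{ij}$ with $i,j = 2,\dots,D-1$. Starting from
\begin{equation*}
	\hat{\rho}_{ij} = \mathrm{e}^{-\alpha\phi} \rho_{ij} - \left( \mathrm{e}^{-\alpha\phi} \right)\_{\!;(0)} \delta_{ij}\,,
\end{equation*}
I would take the antisymmetric part in $i,j$. The correction term is proportional to $\delta_{ij}$, which is symmetric and is therefore annihilated by antisymmetrization ($\delta_{[ij]} = 0$); the scalar factor $\mathrm{e}^{-\alpha\phi}$ commutes with the index operation. This leaves $\hat{A}_{ij} = \mathrm{e}^{-\alpha\phi} \rho_{[ij]} = \mathrm{e}^{-\alpha\phi} A_{ij}$, as claimed.

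For the mixed component $\hat{A}_{iz}$ I would simply invoke the symmetry $\hat{\rho}_{iz} = \hat{\rho}_{zi}$ already recorded in Corollary \ref{stat:reduction-rho}: the two mixed optical-matrix entries coincide, so their antisymmetric part vanishes identically, giving $\hat{A}_{iz} = 0$.

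There is essentially no computational obstacle here; the result is an immediate algebraic consequence of Corollary \ref{stat:reduction-rho}. The only point worth flagging is conceptual rather than technical: one must read $\hat{A}$ as the antisymmetric part of the full transverse optical block of the lifted spacetime, whose indices run over $i,j = 2,\dots,D-1$ \emph{together with} $z$. The two displayed cases $\hat{A}_{ij}$ and $\hat{A}_{iz}$ then jointly exhaust its off-diagonal antisymmetric content, while the diagonal $zz$ entry never contributes to twist.
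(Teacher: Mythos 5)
Your proof is correct and follows exactly the route the paper intends: Corollary \ref{stat:reduction-A} is stated as an immediate consequence of Corollary \ref{stat:reduction-rho}, obtained by antisymmetrizing its three blocks, with the $\delta_{ij}$ term dropping out and the mixed $iz$ entry vanishing by the recorded symmetry $\hat{\rho}_{iz} = \hat{\rho}_{zi}$. Nothing is missing.
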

\begin{remark}
	We can see that $\bm{\ell}$ and $\bm{\hat{\ell}}$ are either both twisting, or both non-twisting (with respect to the corresponding frames\footnote{Generally, we don't require $\bm{\ell}$ to be weakly geodetic; therefore, we are not able to get rid of the frame dependence here -- see corollary \ref{stat:rho-transformation-geodetic}.}).
\end{remark}
\begin{corollary}\label{stat:reduction-kappa}
	It also follows that the non-geodeticity $\bm{\hat{\kappa}}$ of $\bm{\hat{\ell}}$ in the $(D+1)$-dimensional spacetime has the following relation to the non-geodeticity $\bm{\kappa}$ of $\bm{\ell}$ in the $D$-dimensional spacetime:
	\begin{subequations}\label{eq:reduction-kappa}
	\begin{align}
		\hat{\kappa}\^{[a]} &= \mathrm{e}^{-\alpha\phi} \kappa\^{(a)} - \left(\mathrm{e}^{-\alpha\phi}\right)\_{\!;(0)} \ell\^{(a)} = {}\nonumber\\
			{} &= \mathrm{e}^{-2\alpha\phi} \left( \mathrm{e}^{\alpha\phi} \ell\^{(a)} \right)\_{\!;(0)}\,,\\
		\hat{\kappa}\^{[z]} &= 0\,.
	\end{align}
	\end{subequations}
	As a consequence, $\bm{\hat{\ell}}$ is geodetic\footnote{Recall that by geodetic, we always mean affinely geodetic.} in the $(D+1)$-dimensional spacetime if and only if $\mathrm{e}^{\alpha\phi} \bm{\ell}$ is geodetic in the $D$-dimensional spacetime.
\end{corollary}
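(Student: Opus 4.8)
The plan is to read $\hat{\bm\kappa}$ off the Ricci rotation coefficients already supplied by Proposition~\ref{stat:reduction-L}, in exactly the same way that $\bm\kappa$ is read off $\bm L$. Since $\hat{\bm\ell} = \hat{\bm e}_{[0]}$, the definition $\hat\kappa^{A} = \hat\ell^{A}_{\hphantom{A};B}\hat\ell^{B}$ gives, in frame components, $\hat\kappa_{[C]} = \hat L_{[C][B]}\hat\ell^{[B]} = \hat L_{[C][0]}$, because $\hat\ell^{[B]} = \delta^{[B]}_{[0]}$. All the work then consists in specializing \eqref{eq:reduction-L} to a second index equal to $[0]$ and raising the first index.

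First I would dispose of the genuinely $(D+1)$-dimensional component. Setting $[C]=[z]$ gives $\hat\kappa_{[z]} = \hat L_{[z][0]}$, and \eqref{eq:reduction-L-az}--\eqref{eq:reduction-L-za} express this as a multiple of $\mathcal F_{(0)(0)}$, which vanishes by antisymmetry; since $\hat g^{[z][z]}=1$ decouples the extra direction, $\hat\kappa^{[z]}=0$ follows immediately. Next I would treat the Latin block. Putting the free index to $[0]$ in \eqref{eq:reduction-L-ab} and invoking the null-frame identities $g_{(a)(0)} = \ell_{(a)}$ and $\ell_{(0)}=0$ kills the $\ell_{(b)}$ term and converts the $g_{(a)(b)}$ term into one proportional to $\ell_{(a)}$; recognizing $L_{(a)(0)} = \kappa_{(a)}$ then yields $\hat\kappa_{[a]} = \mathrm{e}^{-\alpha\phi}\kappa_{(a)} - (\mathrm{e}^{-\alpha\phi})_{;(0)}\ell_{(a)}$. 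Because the metric block carrying Latin indices is the same null-frame metric in both spacetimes, raising the index is immediate and produces the first displayed form.

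The compact second form then follows by rewriting $(\mathrm{e}^{-\alpha\phi})_{;(0)} = -\mathrm{e}^{-2\alpha\phi}(\mathrm{e}^{\alpha\phi})_{;(0)}$ and recognizing $\ell^{(a)}_{;(0)} = \kappa^{(a)}$, so that the two terms reassemble into $\mathrm{e}^{-2\alpha\phi}(\mathrm{e}^{\alpha\phi}\ell^{(a)})_{;(0)}$. The geodeticity equivalence in the footnote then drops out by reading this last expression as a positive multiple of the non-geodeticity of the rescaled field $\mathrm{e}^{\alpha\phi}\bm\ell$, namely $\hat\kappa^{[a]} = \mathrm{e}^{-3\alpha\phi}\,(\mathrm{e}^{\alpha\phi}\ell^{(a)})_{;(b)}(\mathrm{e}^{\alpha\phi}\ell^{(b)})$, whose vanishing (together with the already-established $\hat\kappa^{[z]}=0$) is equivalent to $\mathrm{e}^{\alpha\phi}\bm\ell$ being geodetic. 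I expect no genuine obstacle: the entire argument is a specialization of an already-proved relation. The only care required is bookkeeping---keeping the raising and lowering consistent between the two null frames (which share the Latin metric block but differ by the extra $[z][z]$ entry) and correctly identifying the frame components $L_{(a)(0)}=\kappa_{(a)}$, $g_{(a)(0)}=\ell_{(a)}$ and $\ell^{(a)}_{;(0)}=\kappa^{(a)}$ so that the exponential factors recombine cleanly into the stated closed form.
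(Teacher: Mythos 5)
Your proposal is correct and follows exactly the route the paper intends for this corollary: specializing the second frame index of \eqref{eq:reduction-L} to $\scriptstyle\s{[0]}$, using $\mathcal{F}\_{(0)(0)}=0$, $g\_{(a)(0)}=\ell\_{(a)}$, $\ell\_{(0)}=0$ and $L\_{(a)(0)}=\kappa\_{(a)}$, and recombining the two terms via the Leibniz rule into $\mathrm{e}^{-2\alpha\phi}\bigl(\mathrm{e}^{\alpha\phi}\ell\^{(a)}\bigr)\_{;(0)}$. The paper leaves this derivation implicit (the corollary carries no separate proof), and your bookkeeping, including the index raising across the shared Latin metric block and the geodeticity equivalence, is accurate.
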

\begin{remark}
	Obviously, we can relax the affinity condition on the geodesic parametrization, and say that $\bm{\hat{\ell}}$ is weakly geodetic in the $(D+1)$-dimensional spacetime, iff the corresponding $\bm{\ell}$ is weakly geodetic in the $D$-dimensional spacetime.
\end{remark}
\begin{remark}
	As a trivial consequence, a Kaluza--Klein lift of a Kundt spacetime (i.e. one which admits a geodetic null vector field $\bm{\ell}$ with vanishing optical matrix) is also Kundt, whenever boost orders of $\bm{\mathcal{F}}$ and $\bm{\nabla} \phi$ with respect to $\bm{\ell}$ are both at most zero.
\end{remark}
\begin{proof}[Proof of proposition \ref{stat:reduction-L}]
	According to \eqref{eq:original-frame-dual-a}, we have
	\begin{equation}
		\bm{\hat{\varepsilon}}\^{[1]} = \iota_{\star}\big( \mathrm{e}^{\alpha\phi}\bm{\varepsilon}\^{(1)} \big)\,,
	\end{equation}
	which means that the $z$ component of \smash{$\hat{\ell}_{A} = \hat{\varepsilon}\^{[1]}_{A}$} vanishes:
	\begin{equation}
		\hat{\ell}_{z} = 0\,.
	\end{equation}
	We can then use \eqref{eq:nabla-pushforward} to see that
	\begin{subequations}
	\begin{align}
		\hat{\ell}_{A;z} &= 0\,,\\
		\hat{\ell}_{z;B} &= 0\,.
	\end{align}
	\end{subequations}
	In the identity
	\begin{equation}
		\hat{L}_{AB} = \hat{\ell}_{A;B} - \hat{\Gamma}_{CAB} \hat{\ell}\vphantom{\ell}^{C}\,,
	\end{equation}
	the $\hat{\ell}_{A;B}$ can thus be easily prepared:
	\begin{equation}
		\hat{L}\_{[A][B]} = \hat{e}^{a}\_{[A]} \hat{e}^{b}\_{[B]} \left( \mathrm{e}^{\alpha\phi} \ell_{a} \right)_{;b} - \hat{\Gamma}\_{[0][A][B]}
	\end{equation}
	to be expressed in the $D$-dimensional frame components:
	\begin{subequations}
	\label{eq:reduction-L-halfway}
	\begin{align}
		\label{eq:reduction-L-halfway-ab}
		\hat{L}\_{[a][b]} &= \mathrm{e}^{-2\alpha\phi} \left( \mathrm{e}^{\alpha\phi} \ell\_{(a)} \right)\_{;(b)} - \hat{\Gamma}\_{[0][a][b]}\,,\\
		\label{eq:reduction-L-halfway-az}
		\hat{L}\_{[a][z]} &= -\hat{\Gamma}\_{[0][a][z]}\,,\\
		\label{eq:reduction-L-halfway-za}
		\hat{L}\_{[z][a]} &= -\hat{\Gamma}\_{[0][z][a]}\,,\\
		\label{eq:reduction-L-halfway-zz}
		\hat{L}\_{[z][z]} &= -\hat{\Gamma}\_{[0][z][z]}
	\end{align}
	\end{subequations}
	by means of the projection of \eqref{eq:original-frame}:
	\begin{subequations}
	\begin{align}
		\hat{e}\_{[a]}^{a} &= \mathrm{e}^{-\alpha\phi}\,e\_{(a)}^{a}\,, \qquad a=0,\dots,D-1\,,\\
		\hat{e}\_{[z]}^{a} &= 0\,.
	\end{align}
	\end{subequations}
	Substituting from corollary \ref{stat:reduction-Gamma}, we arrive at \eqref{eq:reduction-L-az} and \eqref{eq:reduction-L-zz}, while \eqref{eq:reduction-L-za} is a consequence of the symmetry of $\bm{\hat{\Gamma}}$.

	For \eqref{eq:reduction-L-ab}, we apply the product rule in \eqref{eq:reduction-L-halfway-ab} and identify
	\begin{subequations}
	\begin{gather}
		L\_{(a)(b)} = \ell\_{(a);(b)}\,,\\
		\ell\_{(a)} = g\_{(a)b} \ell^{b} = g\_{(a)(0)}\,.
	\end{gather}
	\end{subequations}
\end{proof}

\section{Weyl tensor reduction}

\subsection{Riemann tensors and their difference}

\begin{definition}
\label{def:riemann-definition}
	For the Riemann tensor $R^{a}_{\hphantom{a}bcd}$ of $(\mathcal{M}, \bm{g})$, we define its extension to $T\mathcal{\hat{M}}$ as the commutator of the extension of $\bm{\nabla}$:
	\begin{equation}
		\label{eq:riemann-definition}
		R^{A}_{\hphantom{A}BCD} v^{B} := \nabla_{C} \nabla_{D} v^{A} - \nabla_{D} \nabla_{C} v^{A}
	\end{equation}
	for any vector field $v^{A}$ on $T\mathcal{\hat{M}}$.
\end{definition}

\begin{proposition}
\label{stat:riemann-pushforward}
	The extended Riemann tensor can be equivalently defined as the pushforward $\iota_{\star}$ of the Riemann tensor of $(\mathcal{M}, \bm{g})$.
\end{proposition}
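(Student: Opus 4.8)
The plan is to prove the identity of $(1,3)$-tensors $R^{A}_{\hphantom{A}BCD} = \iota_{\star} R^{a}_{\hphantom{a}bcd}$ by comparing both sides as multilinear maps on $T\mathcal{\hat{M}}$. Since the Riemann tensor is tensorial, I may evaluate it on whatever vector-field extensions are convenient, and I will repeatedly take $z$-independent (projectable) representatives and commuting coordinate fields (so that the bracketless commutator of \eqref{eq:riemann-definition} applies verbatim). Recalling from Convention \ref{def:small-indices} that the pushforward $\iota_{\star}$ of a tensor on $\mathcal{M}$ is characterised by two structural properties — it is $z$-independent and \emph{tangential}, i.e.\ annihilated by $\bm{\mathrm{d}}z$ on its contravariant slot and by $\bm{\partial}_{z}$ on each covariant slot — the argument splits into (A) showing that the extended Riemann tensor is tangential, so that it coincides with $\iota_{\star}\uppi$ of itself, and (B) identifying its projection $\uppi R^{A}_{\hphantom{A}BCD}$ with the reduced Riemann tensor.

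For part (B), I would feed the commutator \eqref{eq:riemann-definition} a pushed-forward vector $\iota_{\star}\bm{v}$ and differentiate only along $\mathcal{M}$-directions. Applying the extension property \eqref{eq:nabla-pushforward} twice gives $\nabla_{c}\nabla_{d}\,\iota_{\star}\bm{v} = \iota_{\star}(\nabla_{c}\nabla_{d}\bm{v})$, so the antisymmetrised expression reproduces $\iota_{\star}(R^{a}_{\hphantom{a}bcd} v^{b})$; projecting by $\uppi$ and using $\uppi\circ\iota_{\star} = \mathrm{id}$ from \eqref{eq:projection-identity} recovers the reduced Riemann tensor exactly.

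For part (A), the three tangentiality conditions follow from the three extra annihilation properties built into $\bm{\nabla}$. In the single contravariant slot I contract with $\bm{\mathrm{d}}z$: because $\bm{\nabla}\bm{\mathrm{d}}z = 0$ by \eqref{eq:nabla-definition-form}, the form is parallel, so $\langle\bm{\mathrm{d}}z, \nabla_{C}\bm{w}\rangle = \partial_{C}\langle\bm{\mathrm{d}}z, \bm{w}\rangle$, and the whole expression collapses to the alternating second coordinate derivative of the scalar $f = \langle\bm{\mathrm{d}}z, \bm{u}\rangle$, which vanishes identically. For the covariant $B$-slot I feed $\bm{\partial}_{z}$ into the differentiated vector; since $\bm{\nabla}\bm{\partial}_{z} = 0$ by \eqref{eq:nabla-definition-vector}, the commutator is zero term by term. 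Finally, for the $C$- or $D$-slot I differentiate along $\bm{\partial}_{z}$: choosing $z$-independent extensions and using that the $z$-components of all Christoffel symbols vanish, $\nabla_{\bm{\partial}_{z}}$ annihilates every projectable tensor (this is the content of \eqref{eq:nabla-projection} applied along $z$), and antisymmetry in $C,D$ disposes of both slots at once.

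The genuinely load-bearing step — and the one I expect to need the most care — is the contravariant-slot vanishing $\mathrm{d}_{A}z\, R^{A}_{\hphantom{A}BCD} = 0$: it is the only place where the parallelism of the \emph{form} $\bm{\mathrm{d}}z$ (rather than of the vector $\bm{\partial}_{z}$) is essential, and it is exactly what guarantees that no $z$-component is generated in the curvature even though $\bm{\nabla}$ is not the Levi-Civita connection of any metric on all of $\mathcal{\hat{M}}$. Everything else is bookkeeping: once tangentiality and the projection identity are in hand, combining (A) and (B) yields $R^{A}_{\hphantom{A}BCD} = \iota_{\star}\uppi R^{A}_{\hphantom{A}BCD} = \iota_{\star} R^{a}_{\hphantom{a}bcd}$, as claimed. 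As a cross-check I would also verify the result directly in $\bm{\partial}_{z}$-adapted coordinates, where the structure $\Gamma^{z}_{\hphantom{z}BC} = \Gamma^{A}_{\hphantom{A}zC} = \Gamma^{A}_{\hphantom{A}Bz} = 0$ with $\Gamma^{a}_{\hphantom{a}bc}$ the reduced symbols makes every $z$-bearing component of the coordinate curvature formula collapse term by term.
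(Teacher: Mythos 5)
Your argument is correct, but it is organized around a different decomposition than the paper's, and the comparison is instructive. The paper's entire proof consists of your part (B) together with only the $B$-slot half of your part (A): evaluate the commutator \eqref{eq:riemann-definition} on $\bm{v}=\bm{e}_{(a)}$ and on $\bm{v}=\bm{\partial}_{z}$. For tangential $\bm{v}$, applying \eqref{eq:nabla-pushforward} twice yields the \emph{full} tensor identity $R^{A}_{\hphantom{A}BCD}\,(\iota_{\star}v)^{B}=\iota_{\star}\bigl(R^{a}_{\hphantom{a}bcd}\,v^{b}\bigr)$ in all three remaining slots, not merely its $\uppi$-projection; for $\bm{v}=\bm{\partial}_{z}$, equation \eqref{eq:nabla-definition-vector} gives zero, which is also what $\iota_{\star}R^{a}_{\hphantom{a}bcd}$ gives, since a pushforward annihilates $\bm{\partial}_{z}$ in every covariant slot. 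Because these vectors span $T\mathcal{\hat{M}}$ pointwise and the commutator is $C^{\infty}(\mathcal{\hat{M}},\mathbb{R})$-linear in $\bm{v}$, the two tensors already coincide, and tangentiality in the contravariant slot and in the $C$, $D$ slots is automatic. Consequently, your diagnosis of the load-bearing step is inverted relative to the paper: the vanishing $\mathrm{d}_{A}z\,R^{A}_{\hphantom{A}BCD}=0$ that you single out (corollary \ref{stat:riemann-z}, essentially) is a \emph{corollary} of the proposition, not an input to it; the crux is \eqref{eq:nabla-pushforward}. What your longer route buys is a slot-by-slot verification that never passes through the identification with the reduced tensor -- in particular it proves corollary \ref{stat:riemann-z} directly -- and it makes explicit which defining property of $\bm{\nabla}$ controls which slot ($\bm{\nabla}\bm{\mathrm{d}}z=0$ for the upper slot, $\bm{\nabla}\bm{\partial}_{z}=0$ for the $B$ slot, the vanishing $z$-bearing Christoffel symbols for the derivative slots).

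Two attributions in your part (A) should be tightened. First, the collapse of $\nabla_{[C}\nabla_{D]}f$ to the alternating coordinate second derivative of the scalar $f$ uses torsion-freeness, i.e.\ \eqref{eq:nabla-definition-torsion}, which you should cite explicitly. Second, the statement that $\nabla_{\bm{\partial}_{z}}$ annihilates every projectable tensor is not the content of \eqref{eq:nabla-projection}; it follows from the remark after definition \ref{def:nabla-definition} (all $z$-bearing Christoffel symbols vanish and the remaining ones are $z$-independent), whereas \eqref{eq:nabla-projection} concerns commuting $\uppi$ with derivatives along $\mathcal{M}$-directions.
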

\begin{proof}
	To see this, one can choose the vector $\bm{v}$ as $\bm{e}\_{(a)}$ and $\bm{\partial}_{z}$, respectively, and use \eqref{eq:nabla-pushforward} and \eqref{eq:nabla-definition-vector}.
\end{proof}
\begin{remark}
	The consistency of definition \ref{def:riemann-definition} can be regarded as a consequence of proposition \ref{stat:riemann-pushforward}.
\end{remark}
\begin{corollary}
\label{stat:riemann-z}
	The $z$ components of $R^{A}_{\hphantom{A}BCD}$ vanish:
	\begin{equation}
	\label{eq:riemann-z}
		R^{A}_{\hphantom{A}BCD} = \iota_{\star} \uppi R^{A}_{\hphantom{A}BCD}\,.
	\end{equation}
\end{corollary}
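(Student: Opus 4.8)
The plan is to deduce the statement directly from Proposition~\ref{stat:riemann-pushforward} together with the projection identity~\eqref{eq:projection-identity}. Write $\bm{R}$ for the Riemann tensor of $(\mathcal{M}, \bm{g})$. By Proposition~\ref{stat:riemann-pushforward}, the extended Riemann tensor is nothing but the pushforward of $\bm{R}$, so at the level of the $(1,3)$-tensor we have $R^{A}_{\hphantom{A}BCD} = \iota_{\star}\bm{R}$. The idea is then to apply the \enquote{project-and-reembed} operator $\iota_{\star}\uppi$ to this equality and collapse the resulting composition.

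First I would record that both $\iota_{\star}$ and $\uppi$ act slot-by-slot on tensors, each slot being handled by the projector $\delta^{A}_{B} - \xi^{A}\mathrm{d}_{B}z$ (contracted on the single contravariant index and on each of the three covariant indices as appropriate). Consequently the vector-level identity~\eqref{eq:projection-identity}, $\uppi \circ \iota_{\star} = \mathrm{id}_{T\mathcal{M}}$, extends unchanged to the mixed tensor $\bm{R}$, so that $\uppi\,\iota_{\star}\bm{R} = \bm{R}$. Substituting back gives
\begin{equation*}
	\iota_{\star}\uppi\, R^{A}_{\hphantom{A}BCD} = \iota_{\star}\uppi\,\iota_{\star}\bm{R} = \iota_{\star}\bm{R} = R^{A}_{\hphantom{A}BCD}\,,
\end{equation*}
which is exactly~\eqref{eq:riemann-z}.

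The only point that needs a moment of care --- and the closest thing to an obstacle --- is the slot-wise extension of~\eqref{eq:projection-identity} from vectors to the $(1,3)$-tensor $\bm{R}$. This is routine: since $\iota_{\star}\bm{R}$ carries, by construction, no component along $\bm{\partial}_{z}$ or $\bm{\mathrm{d}}z$, each projector acts as the identity on the slot it touches; equivalently, contracting any index of $R^{A}_{\hphantom{A}BCD}$ with $\mathrm{d}_{A}z$ or with $\xi^{A}$ annihilates it, which is the concrete meaning of \enquote{the $z$ components vanish}. A self-contained alternative, bypassing the pushforward, would argue straight from Definition~\ref{def:riemann-definition}: inserting $\bm{\partial}_{z}$ into the commutator and invoking $\bm{\nabla}\bm{\partial}_{z}=0$ from~\eqref{eq:nabla-definition-vector} kills the lower-$z$ components, while the vanishing of the $z$-valued Christoffel symbols noted after Definition~\ref{def:nabla-definition} kills the upper-$z$ component; I would nonetheless prefer the shorter pushforward route.
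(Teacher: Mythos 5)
Your proof is correct and follows essentially the same route as the paper: the paper's own proof is precisely the one-line observation that \eqref{eq:riemann-z} is the projection identity \eqref{eq:projection-identity} applied to the equivalent pushforward definition from proposition \ref{stat:riemann-pushforward}. Your elaboration of the slot-by-slot extension of $\uppi \circ \iota_{\star} = \mathrm{id}$ to $(1,3)$-tensors, and the alternative argument from definition \ref{def:riemann-definition}, are sound but supplementary to the same core idea.
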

\begin{proof}
	The subject relation is the identity \eqref{eq:projection-identity} applied on the equivalent definition from proposition \ref{stat:riemann-pushforward}.
\end{proof}

\begin{proposition}
\label{stat:riemann-difference}
	The difference between Riemann tensors of $(\mathcal{\hat{M}}, \bm{\hat{g}})$ and $(\mathcal{M}, \bm{g})$ is
	\begin{equation}
		\label{eq:riemann-difference}
		\hat{R}\vphantom{R}^{A}_{\hphantom{A}BCD} - R^{A}_{\hphantom{A}BCD} = 2 \hat{\Gamma}\vphantom{\Gamma}^{A}_{\hphantom{A}B[D;C]} + 2 \hat{\Gamma}\vphantom{\Gamma}^{A}_{\hphantom{A}E[C} \hat{\Gamma}\vphantom{\Gamma}^{E}_{\hphantom{E}D]B}\,.
	\end{equation}
\end{proposition}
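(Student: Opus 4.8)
The plan is to treat this as the standard identity for the difference of the curvatures of two affine connections that differ by a tensor field, here $\bm{\hat{\Gamma}}$ as introduced in Definition \ref{def:nabla-difference}. First I would record the three facts that make the computation go through. Both $\bm{\nabla}$ and $\bm{\hat{\nabla}}$ are torsion-free, by \eqref{eq:nabla-definition-torsion} and because $\bm{\hat{\nabla}}$ is Levi-Civita, so each curvature is given by the plain commutator on vector fields with the index placement fixed in Definition \ref{def:riemann-definition}; in particular $\hat{R}^{A}_{\hphantom{A}BCD} v^{B} = \hat{\nabla}_{C}\hat{\nabla}_{D} v^{A} - \hat{\nabla}_{D}\hat{\nabla}_{C} v^{A}$, the analogue of \eqref{eq:riemann-definition}. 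Moreover, zero torsion of both connections forces $\hat{\Gamma}^{A}_{\hphantom{A}BC}$ to be symmetric in its lower indices, exactly as was observed in \eqref{eq:christoffel-symmetry}. Finally, Definition \ref{def:nabla-difference} gives the action on a vector as $\hat{\nabla}_{B} v^{A} = \nabla_{B} v^{A} + \hat{\Gamma}^{A}_{\hphantom{A}BE} v^{E}$, with the corresponding sign-flipped rule on a lower index.

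Next I would expand $\hat{\nabla}_{C}\hat{\nabla}_{D} v^{A}$. Treating $\hat{\nabla}_{D} v^{A}$ as a $(1,1)$-tensor and applying $\hat{\nabla}_{C} = \nabla_{C} + \hat{\Gamma}$, I obtain a sum of seven terms: the pure second derivative $\nabla_{C}\nabla_{D} v^{A}$; the term $(\nabla_{C}\hat{\Gamma}^{A}_{\hphantom{A}DE}) v^{E}$; two terms linear in $\nabla v$, namely $\hat{\Gamma}^{A}_{\hphantom{A}DE}\nabla_{C} v^{E}$ and $\hat{\Gamma}^{A}_{\hphantom{A}CF}\nabla_{D} v^{F}$; the term $-\hat{\Gamma}^{F}_{\hphantom{F}CD}\nabla_{F} v^{A}$ arising from the lower index; and two quadratic terms $\hat{\Gamma}^{A}_{\hphantom{A}CF}\hat{\Gamma}^{F}_{\hphantom{F}DE} v^{E}$ and $-\hat{\Gamma}^{F}_{\hphantom{F}CD}\hat{\Gamma}^{A}_{\hphantom{A}FE} v^{E}$.

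Then I would antisymmetrize over the two differentiation indices $C,D$ and read off the result. The pure double derivative assembles into $R^{A}_{\hphantom{A}BCD} v^{B}$ by Definition \ref{def:riemann-definition}. The two terms linear in $\nabla v$ combine into an expression symmetric in $C,D$ and are therefore annihilated by the antisymmetrization, while both $-\hat{\Gamma}^{F}_{\hphantom{F}CD}\nabla_{F} v^{A}$ and $-\hat{\Gamma}^{F}_{\hphantom{F}CD}\hat{\Gamma}^{A}_{\hphantom{A}FE} v^{E}$ drop out because $\hat{\Gamma}^{F}_{\hphantom{F}CD}$ is symmetric in its lower indices. The surviving derivative-of-$\hat{\Gamma}$ term collapses to $2\hat{\Gamma}^{A}_{\hphantom{A}B[D;C]} v^{B}$, after using the lower-index symmetry to move the contracted slot into the second position, and the remaining quadratic term collapses to $2\hat{\Gamma}^{A}_{\hphantom{A}E[C}\hat{\Gamma}^{E}_{\hphantom{E}D]B} v^{B}$. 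Since $v^{B}$ is arbitrary, stripping it off yields \eqref{eq:riemann-difference}.

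I expect the only real difficulty to be bookkeeping rather than anything conceptual: keeping the nonstandard index ordering of Definition \ref{def:riemann-definition} straight, with the last two slots as the derivative indices, and making sure the torsion term and the first-order terms vanish for the right reasons, namely the symmetry of $\hat{\Gamma}$ together with the antisymmetrization. Well-definedness of the commutator formula for $R$ as a genuine tensor is not an issue here, as it is already guaranteed by the torsion-freeness of $\bm{\nabla}$ and independently by Proposition \ref{stat:riemann-pushforward}.
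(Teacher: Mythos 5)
Your proposal is correct and takes essentially the same route as the paper's proof: both start from the Ricci identity $\hat{R}^{A}_{\hphantom{A}BCD} v^{B} = \hat{\nabla}_{C}\hat{\nabla}_{D}v^{A} - \hat{\nabla}_{D}\hat{\nabla}_{C}v^{A}$ for arbitrary $v^{A}$, substitute $\hat{\nabla} = \nabla + \hat{\Gamma}$, use the symmetry of $\hat{\Gamma}$ in its lower indices so that all terms containing derivatives of $v$ cancel under antisymmetrization, identify the extended Riemann tensor of Definition \ref{def:riemann-definition}, and strip off the arbitrary $v^{B}$. The only difference is that you write out explicitly the seven-term bookkeeping that the paper's proof leaves implicit.
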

\begin{proof}
	The proof is an application of the common technique that is usually applied on flat $\bm{g}$. We start with the Ricci identity for an arbitrary vector field $v^{A}$:
	\begin{equation}
		\hat{R}\vphantom{R}^{A}_{\hphantom{A}BCD} v^{B} = \hat{\nabla}_{C} \hat{\nabla}_{D} v^{A} - \hat{\nabla}_{D} \hat{\nabla}_{C} v^{A}\,.
	\end{equation}
	After expressing $\bm{\hat{\nabla}}$ using \eqref{eq:nabla-difference}, using the symmetry of $\bm{\hat{\Gamma}}$ and identifying the definition of the extended Riemann tensor \eqref{eq:riemann-definition}, we arrive at the expression that is ultralocal\footnote{Linear dependence can be defined as \emph{ultralocal}, if it is $C^{\infty}\!(\mathcal{\hat{M}}, \mathbb{R})$-linear.} in $\bm{v}$:
	\begin{equation}
		\hat{R}\vphantom{R}^{A}_{\hphantom{A}BCD} v^{B} = \left( R^{A}_{\hphantom{A}BCD} + 2 \hat{\Gamma}\vphantom{\Gamma}^{A}_{\hphantom{A}B[D;C]} + 2 \hat{\Gamma}\vphantom{\Gamma}^{A}_{\hphantom{A}E[C} \hat{\Gamma}\vphantom{\Gamma}^{E}_{\hphantom{E}D]B} \right) v^{B}\,.
	\end{equation}
\end{proof}

For the purposes of algebraic classification, we will be working with the fully covariant form:
\begin{lemma}
\label{stat:riemann-covariant-difference}
	The Riemann tensors are related to each other according to:
	\begin{equation}
		\label{eq:riemann-covariant-difference}
		\hat{R}_{ABCD} = 2 \hat{\Gamma}_{AB[D;C]} + 2 \hat{\Gamma}_{EA[D} \hat{\Gamma}\vphantom{\Gamma}^{E}_{\hphantom{E}C]B} + \hat{g}_{AE} R^{E}_{\hphantom{E}BCD}\,.
	\end{equation}
	In frame components, this means:
	\begin{subequations}
	\label{eq:riemann-reduction}
	\begin{align}
		\label{eq:riemann-reduction-abcd}
		\hat{R}\_{[a][b][c][d]} &= \mathcenter{2\hat{\Gamma}\_{[A][B][[z];[C]]}}{2\hat{\Gamma}\_{[a][b][[d];[c]]}} + \mathcenter{2\hat{\Gamma}\_{[E][A][[z]} \hat{\Gamma}\vphantom{\Gamma}\^{[E]}\_{\hphantom{[E]}[C]][B]}}{2\hat{\Gamma}\_{[E][a][[d]} \hat{\Gamma}\vphantom{\Gamma}\^{[E]}\_{\hphantom{[E]}[c]][b]}} + \mathrm{e}^{-2\alpha\phi} R\_{(a)(b)(c)(d)}\,,\\
		\label{eq:riemann-reduction-ABCz}
		\hat{R}\_{[A][B][C][z]} &= 2\hat{\Gamma}\_{[A][B][[z];[C]]} + 2\hat{\Gamma}\_{[E][A][[z]} \hat{\Gamma}\vphantom{\Gamma}\^{[E]}\_{\hphantom{[E]}[C]][B]}\,.
	\end{align}
	\end{subequations}
\end{lemma}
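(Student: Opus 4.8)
The plan is to derive the covariant identity \eqref{eq:riemann-covariant-difference} by lowering the free index $A$ in the mixed-index relation \eqref{eq:riemann-difference} of proposition \ref{stat:riemann-difference} with $\hat{g}_{AE}$, and then to read the two families of frame components off the result. The reduced summand is kept in the form $\hat{g}_{AE}R^{E}{}_{BCD}$ on purpose: since the natural index lowering of the reduced Riemann tensor uses $\bm{g}$ rather than $\bm{\hat{g}}$, contracting its upper index with $\hat{g}_{AE}$ is genuinely a different object, and it is this object that appears in \eqref{eq:riemann-covariant-difference}.

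The one point that needs care is that the semicolon denotes the extended connection $\bm{\nabla}$ of definition \ref{def:nabla-definition}, which annihilates $\bm{g}$, $\bm{\partial}_{z}$ and $\bm{\mathrm{d}}z$ but \emph{not} $\bm{\hat{g}}$. Consequently $\hat{g}_{AE}$ may not be pulled through the derivative in the term $2\hat{\Gamma}^{E}{}_{B[D;C]}$. I would instead write
\[
	\hat{g}_{AE}\,\hat{\Gamma}^{E}{}_{BD;C} = \hat{\Gamma}_{ABD;C} - \hat{g}_{AE;C}\,\hat{\Gamma}^{E}{}_{BD},
\]
and eliminate the obstruction using the symmetrization of proposition \ref{stat:christoffel}, namely $\hat{g}_{AE;C} = \hat{\Gamma}_{AEC} + \hat{\Gamma}_{EAC}$, which converts the correction into a purely quadratic expression in $\hat{\Gamma}$.

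The main obstacle is then the bookkeeping that combines this correction with the directly lowered quadratic term $2\hat{\Gamma}_{AE[C}\hat{\Gamma}^{E}{}_{D]B}$. Expanding both antisymmetrizations and using the symmetry of $\hat{\Gamma}$ in its last two indices, the summands carrying $A$ on the first index of the leading factor cancel in pairs, and what survives—the summands carrying $A$ on the second index—reassembles precisely into $2\hat{\Gamma}_{EA[D}\hat{\Gamma}^{E}{}_{C]B}$, the quadratic term of \eqref{eq:riemann-covariant-difference}. This is not deep, but it is the only step where the two antisymmetrization brackets and the index symmetry of $\hat{\Gamma}$ must be tracked at once; everything else is substitution.

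Finally I would extract the frame components by splitting on whether the last index is $z$. If it is, corollary \ref{stat:riemann-z} identifies $R^{E}{}_{BCD}$ with the pushforward $\iota_{\star}\uppi R$, so any $z$-contraction kills it and in particular $\hat{g}_{AE}R^{E}{}_{BCz}=0$; only the two $\hat{\Gamma}$ summands of \eqref{eq:riemann-reduction-ABCz} remain, with $[z]$ inserted in the last slot. For the all-non-$z$ block \eqref{eq:riemann-reduction-abcd}, the $\hat{\Gamma}$ summands are the covariant identity evaluated on the frame (with the dummy indices still summed over all values), while the reduced summand must be shown to collapse to $\mathrm{e}^{-2\alpha\phi}R_{(a)(b)(c)(d)}$. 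For this I would insert $\bm{\hat{e}}_{[a]} = \mathrm{e}^{-\alpha\phi}(\bm{e}_{(a)} - \mathcal{A}_{(a)}\bm{\partial}_{z})$ from \eqref{eq:original-frame-a}: each of the three lower legs contributes a factor $\mathrm{e}^{-\alpha\phi}$ because the $\bm{\partial}_{z}$ part dies against the pushforward, while the contraction $\hat{g}(\bm{\hat{e}}_{[a]},\,\cdot\,)$ against the output leg of $R$ evaluates, via the metric decomposition \eqref{eq:kk-decomposition-metric-components}, to $\mathrm{e}^{\alpha\phi}g_{(a)(e)}$ once the $\mathcal{A}$ cross-terms cancel. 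The four factors multiply to $\mathrm{e}^{-2\alpha\phi}g_{(a)(e)}R^{e}{}_{(b)(c)(d)}$, as claimed. I would close by remarking that \eqref{eq:riemann-reduction-abcd} and \eqref{eq:riemann-reduction-ABCz}, together with the antisymmetries in $AB$ and in $CD$ and the pair-exchange symmetry of $\hat{R}_{ABCD}$, already exhaust every frame component.
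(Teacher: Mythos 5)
Your proposal is correct and follows essentially the same route as the paper's own proof: lowering the index on \eqref{eq:riemann-difference} with $\bm{\hat{g}}$, absorbing the non-metricity commutator $\hat{g}_{AE;C}\,\hat{\Gamma}\vphantom{\Gamma}^{E}_{\hphantom{E}BD}$ into the index-swapped quadratic term via $\hat{\Gamma}_{AEC} - \hat{g}_{AE;C} = -\hat{\Gamma}_{EAC}$, and then obtaining the frame components by splitting on whether the last index is $z$ and using corollary \ref{stat:riemann-z}. The only cosmetic difference is that the paper extracts the factor $\mathrm{e}^{-2\alpha\phi}$ from the reduced summand via the frame components $\hat{g}\_{[a][b]} = g\_{(a)(b)}$, $\hat{g}\_{[a][z]} = 0$ and a frame change, whereas you recompute this directly from \eqref{eq:kk-decomposition-metric-components} with the $\bm{\mathcal{A}}$ cross-terms cancelling---the same fact in coordinate rather than frame form.
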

\begin{proof}
	In \eqref{eq:riemann-difference}, the commutator of index lowering and $\bm{\nabla}$
	\begin{equation}
		\hat{g}_{AE} \hat{\Gamma}\vphantom{\Gamma}^{E}_{\hphantom{E}BD;C} - \hat{\Gamma}_{ABD;C} = -\hat{g}_{AE;C} \hat{\Gamma}\vphantom{\Gamma}^{E}_{\hphantom{E}BD}
	\end{equation}
	is embraced in the second term of \eqref{eq:riemann-covariant-difference}:
	\begin{equation}
		\hat{\Gamma}_{AEC} - \hat{g}_{AE;C} = -\hat{\Gamma}_{EAC}\,.
	\end{equation}

	To infer \eqref{eq:riemann-reduction-abcd}, we first use knowledge of frame components of $\bm{\hat{g}}$ \eqref{eq:original-metric-completeness} and $\bm{g}$:
	\begin{equation}
		g\_{(a)(b)} = \hat{g}\_{[a][b]}\,,
	\end{equation}
	which gives:
	\begin{equation}
		\hat{g}\_{[a][E]} R\^{[E]}\_{\hphantom{[E]}[b][c][d]} = g\_{(a)(e)} R\^{[e]}\_{\hphantom{[e]}[b][c][d]}\,.
	\end{equation}
	Then we change frame for $R^{A}_{\hphantom{A}BCD}$, using \eqref{eq:original-frame-a} and \eqref{eq:original-frame-dual-a} and considering that the $z$ components vanish \eqref{eq:riemann-z}:
	\begin{equation}
		R\^{[e]}\_{\hphantom{[e]}[b][c][d]} = \mathrm{e}^{-2\alpha\phi} R\^{(e)}\_{\hphantom{(e)}(b)(c)(d)}\,.
	\end{equation}

	Finally, as a consequence of \eqref{eq:riemann-z}
	\begin{equation}
		R^{E}_{\hphantom{E}BCz} = 0
	\end{equation}
	and of \eqref{eq:original-frame-z}, we have
	\begin{equation}
		R^{E}\_{\hphantom{E}BC[z]} = 0\,,
	\end{equation}
	which finishes the proof for \eqref{eq:riemann-reduction-ABCz}.
\end{proof}

\subsection{Second derivative of the metric}

To express the relations between frame components of both Riemann tensors using the $D$-di\-men\-sion\-al fields, we need an analogue of proposition \ref{stat:metric-derivative}, with second derivatives of $\bm{\hat{g}}$.

\begin{proposition}
\label{stat:metric-second-derivative}
	The second derivative $\bm{\nabla}^{2} \bm{\hat{g}}$ can be expanded as follows:
	\begin{subequations}
	\label{eq:metric-second-derivative}
	\begin{align}
		\label{eq:metric-second-derivative-abcd}
		&\mathrm{e}^{4\alpha\phi}\,\hat{g}\_{[a][b];[c][d]} = \left( \mathrm{e}^{2\alpha\phi}\right)\_{;(c)(d)} g\_{(a)(b)} + \mathrm{e}^{2\beta\phi} \left( \mathcal{A}\_{(a);(c)} \mathcal{A}\_{(b);(d)} + \mathcal{A}\_{(a);(d)} \mathcal{A}\_{(b);(c)} \right),\\
		\label{eq:metric-second-derivative-zabc}
		&\mathrm{e}^{(3\alpha+\beta)\phi}\,\hat{g}\_{[z][a];[b][c]} = \mathrm{e}^{2\beta\phi} \mathcal{A}\_{(a);(b)(c)} + \left(\mathrm{e}^{2\beta\phi}\right)\_{;(b)} \mathcal{A}\_{(a);(c)} + \left(\mathrm{e}^{2\beta\phi}\right)\_{;(c)} \mathcal{A}\_{(a);(b)}\,,\\
		\label{eq:metric-second-derivative-zzab}
		&\mathrm{e}^{(2\alpha+2\beta)\phi}\,\hat{g}\_{[z][z];[a][b]} = \left(\mathrm{e}^{2\beta\phi}\right)\_{;(a)(b)},\\
		\label{eq:metric-second-derivative-ABCz}
		&\hat{g}\_{[A][B];[C][z]} = 0\,.
	\end{align}
	\end{subequations}
\end{proposition}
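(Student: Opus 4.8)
The plan is to mirror the proof of proposition~\ref{stat:metric-derivative} one derivative higher. I would first dispose of \eqref{eq:metric-second-derivative-ABCz} exactly as \eqref{eq:metric-derivative-ABz} was obtained: since $\bm{\partial}_{z}$ is Killing and the Christoffel symbols of $\bm{\nabla}$ carry no $z$ index, the once-differentiated tensor $\bm{\nabla}\bm{\hat{g}}$ is again $z$-independent, so running the argument behind \eqref{eq:metric-derivative-intermediate-z} on $\bm{\nabla}\bm{\hat{g}}$ in place of $\bm{\hat{g}}$ gives $\hat{g}_{AB;Cz} = 0$; contracting the final frame slot with $\hat{e}\_{[z]} \propto \bm{\partial}_{z}$, whose projection vanishes by \eqref{eq:original-frame-z}, then yields \eqref{eq:metric-second-derivative-ABCz}.

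For the three substantive identities I would convert both the outer tensor indices and the two derivative slots into the reduced frame. Because $\bm{\nabla}\bm{\hat{g}}$ is projectable, iterating the commutation \eqref{eq:nabla-projection} shows that, once the derivative directions are projected, the relevant components coincide with the second covariant derivative on $\mathcal{M}$ of the corresponding block of $\uppi\bm{\hat{g}}$; moreover the derivative slots require only the $\uppi$-parts of $\hat{e}\_{[c]},\hat{e}\_{[d]}$, since their $\bm{\partial}_{z}$-parts differentiate the projectable $\bm{\nabla}\bm{\hat{g}}$ to zero. Expanding the outer vectors through \eqref{eq:original-frame} then gives
\begin{equation*}
	\mathrm{e}^{4\alpha\phi}\,\hat{g}\_{[a][b];[c][d]} = \hat{g}\_{(a)(b);(c)(d)} - \mathcal{A}\_{(a)}\,\hat{g}\_{z(b);(c)(d)} - \mathcal{A}\_{(b)}\,\hat{g}\_{(a)z;(c)(d)} + \mathcal{A}\_{(a)}\mathcal{A}\_{(b)}\,\hat{g}\_{zz;(c)(d)}\,,
\end{equation*}
together with $\mathrm{e}^{(3\alpha+\beta)\phi}\,\hat{g}\_{[z][a];[b][c]} = \hat{g}\_{z(a);(b)(c)} - \mathcal{A}\_{(a)}\,\hat{g}\_{zz;(b)(c)}$ and $\mathrm{e}^{(2\alpha+2\beta)\phi}\,\hat{g}\_{[z][z];[a][b]} = \hat{g}\_{zz;(a)(b)}$. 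Into these I substitute the second covariant derivatives on $\mathcal{M}$ of the three blocks of $\uppi\bm{\hat{g}}$ already identified in \eqref{eq:metric-derivative-intermediate-abc}, \eqref{eq:metric-derivative-intermediate-zab} and \eqref{eq:metric-derivative-intermediate-zza}, namely $(\mathrm{e}^{2\alpha\phi})\_{;(c)(d)}\,g\_{(a)(b)} + (\mathrm{e}^{2\beta\phi}\mathcal{A}\_{(a)}\mathcal{A}\_{(b)})\_{;(c)(d)}$, $(\mathrm{e}^{2\beta\phi}\mathcal{A}\_{(a)})\_{;(c)(d)}$ and $(\mathrm{e}^{2\beta\phi})\_{;(c)(d)}$, each obtained by the Leibniz rule together with $\bm{\nabla}\bm{g} = 0$ \eqref{eq:nabla-definition-metric}.

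The one step needing care is the cancellation of the Maxwell potential in \eqref{eq:metric-second-derivative-abcd}. Expanding $(\mathrm{e}^{2\beta\phi}\mathcal{A}\_{(a)}\mathcal{A}\_{(b)})\_{;(c)(d)}$ by the Leibniz rule produces nine monomials, to be weighed against the subtracted $-\mathcal{A}\_{(a)}(\mathrm{e}^{2\beta\phi}\mathcal{A}\_{(b)})\_{;(c)(d)} - \mathcal{A}\_{(b)}(\mathrm{e}^{2\beta\phi}\mathcal{A}\_{(a)})\_{;(c)(d)}$ and the additive $\mathcal{A}\_{(a)}\mathcal{A}\_{(b)}(\mathrm{e}^{2\beta\phi})\_{;(c)(d)}$. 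I expect every monomial carrying an undifferentiated $\mathcal{A}\_{(a)}$ or $\mathcal{A}\_{(b)}$, the monomials carrying $(\mathrm{e}^{2\beta\phi})\_{;(c)(d)}$, and the two monomials carrying a second derivative $\mathcal{A}\_{(a);(c)(d)}$ or $\mathcal{A}\_{(b);(c)(d)}$ to cancel pairwise, leaving exactly $\mathrm{e}^{2\beta\phi}(\mathcal{A}\_{(a);(c)}\mathcal{A}\_{(b);(d)} + \mathcal{A}\_{(a);(d)}\mathcal{A}\_{(b);(c)})$, while the $g\_{(a)(b)}$ piece survives untouched because $\bm{g}$ is covariantly constant. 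The shorter analogue in \eqref{eq:metric-second-derivative-zabc}, where only the monomial $\mathcal{A}\_{(a)}(\mathrm{e}^{2\beta\phi})\_{;(b)(c)}$ cancels and the second derivative $\mathcal{A}\_{(a);(b)(c)}$ survives, and the trivial \eqref{eq:metric-second-derivative-zzab}, are immediate. This term-by-term bookkeeping is the only real obstacle; everything else reuses the projection-and-frame-change mechanism already exercised in proposition~\ref{stat:metric-derivative}.
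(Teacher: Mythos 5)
Your proposal is correct and, in its computational core, follows the paper's own route: the same frame expansion with $\mathcal{A}$ coefficients of the outer indices, the same intermediate expressions obtained by projecting $\bm{\nabla}^{2}\bm{\hat{g}}$ onto the blocks of \eqref{eq:kk-decomposition-metric-components}, and the same Leibniz bookkeeping, whose cancellations you describe accurately (only the two mixed first-derivative terms survive in \eqref{eq:metric-second-derivative-abcd}). The one place you genuinely diverge is the justification of the key technical fact that $\bm{\nabla}\bm{\hat{g}}$ is $z$-independent -- equation \eqref{eq:killing-covariant} -- which is needed both for $\hat{g}_{AB;Cz}=0$ and as the precondition \eqref{eq:nabla-projection-precondition} for applying \eqref{eq:nabla-projection} a second time. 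You argue it in coordinates: the components of $\bm{\hat{g}}$ are $z$-independent since $\bm{\partial}_{z}$ is Killing, and the Christoffel symbols of $\bm{\nabla}$ have no $z$ components (and no $z$ dependence), so the components of $\bm{\nabla}\bm{\hat{g}}$ are $z$-independent. The paper instead obtains $\hat{g}_{AB;Cz}=0$ by differentiating $\hat{g}_{AB;z}=0$ and commuting the derivatives via the extended Riemann tensor, whose $z$ components vanish by \eqref{eq:riemann-z}, and establishes \eqref{eq:killing-covariant} abstractly in appendix \ref{chap:lie-covariant-commutator} through the commutator of Lie and covariant derivatives, whose tensor $\gamma^{A}_{\hphantom{A}BC}$ vanishes here because $\bm{\nabla}\bm{\partial}_{z}=0$ and again by \eqref{eq:riemann-z}. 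Your coordinate argument is more elementary and perfectly valid -- it is exactly the ``reflection on coordinate components'' that the appendix deliberately avoids -- whereas the paper's route stays coordinate-free and produces a reusable general formula; the paper also notes in a footnote a third option, projecting only over the indices $A$, $B$ of $\hat{g}_{AB;CD}$, which sidesteps \eqref{eq:killing-covariant} altogether.
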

\begin{proof}
	By applying $\bm{\nabla}$ on \eqref{eq:metric-derivative-intermediate-z} and using the Riemann tensor \eqref{eq:riemann-definition} to commute the two derivatives, we see that
	\begin{equation}
		\hat{g}_{AB;Cz} = 0\,,
	\end{equation}
	thanks to \eqref{eq:riemann-z}.

	Following the same procedure as in the proof of proposition \ref{stat:metric-derivative}, we immediately see that:
	\begin{subequations}
	\begin{align}
		&\mathrm{e}^{4\alpha\phi}\,\hat{g}\_{[a][b];[c][d]} = \hat{g}\_{(a)(b);(c)(d)} - {\vphantom{g}\mathcal{A}}\_{(a)} \hat{g}\_{z(b);(c)(d)} - {\vphantom{g}\mathcal{A}}\_{(b)} \hat{g}\_{(a)z;(c)(d)} + \mathcal{A}\_{(a)} \mathcal{A}\_{(b)} \hat{g}\_{zz;(c)(d)}\,,\\
		&\mathrm{e}^{(3\alpha+\beta)\phi}\,\hat{g}\_{[z][a];[b][c]} = \hat{g}\_{(a)z;(b)(c)} - {\vphantom{g}\mathcal{A}}\_{(a)} \hat{g}\_{zz;(b)(c)}\,,\\
		&\mathrm{e}^{(2\alpha+2\beta)\phi}\,\hat{g}\_{[z][z];[a][b]} = \hat{g}\_{zz;(a)(b)}\,.
	\end{align}
	\end{subequations}
	Like in the proof of proposition \ref{stat:metric-derivative}, we use \eqref{eq:nabla-projection} to express projections of $\hat{g}_{AB;CD}$ using \eqref{eq:kk-decomposition-metric-components}:
	\begin{equation}
		\label{eq:pi-C-nabla2-commutator}
		\uppi\mathrm{C}\bm{\nabla}^{2}\bm{\hat{g}} = \uppi\bm{\nabla}^{2}\mathrm{C}\bm{\hat{g}} = \bm{\nabla}^{2}\uppi\mathrm{C}\bm{\hat{g}}\,.
	\end{equation}
	This time however, to satisfy the precondition \eqref{eq:nabla-projection-precondition}, we need to show that $\hat{g}_{AB;C}$ is independent of $z$, precisely that
	\begin{equation}
		\label{eq:killing-covariant}
		\mathcal{L}_{\bm{\partial}_z}\!\bm{\nabla} \bm{\hat{g}} = 0\,.
	\end{equation}
	In order not to disturb the flow, let's postpone this discussion to appendix \ref{chap:lie-covariant-commutator}, where we show that \eqref{eq:killing-covariant} indeed holds\footnote{The other possibility would be to realize that we don't need \eqref{eq:killing-covariant} at all, since in \eqref{eq:pi-C-nabla2-commutator}, the projection could be taken only over the indices $\scriptstyle A, B$ of $\hat{g}_{AB;CD}$, with no consequences on the result of \eqref{eq:metric-second-derivative-intermediate}.}. We can then continue by expressing $\bm{\nabla}^{2}\bm{\hat{g}}$ in terms of the $D$-dimensional fields:
	\begin{subequations}
	\label{eq:metric-second-derivative-intermediate}
	\begin{align}
		\hat{g}\_{(a)(b);(c)(d)} &= \left(\mathrm{e}^{2\alpha\phi}\right)\_{;(c)(d)} g\_{(a)(b)} + \left( \mathrm{e}^{2\beta\phi} \mathcal{A}\_{(a)} \mathcal{A}\_{(b)} \right)\_{;(c)(d)},\\
		\hat{g}\_{(a)z;(c)(d)} &= \left( \mathrm{e}^{2\beta\phi} \mathcal{A}\_{(a)} \right)\_{;(c)(d)},\\
		\hat{g}\_{zz;(c)(d)} &= \left( \mathrm{e}^{2\beta\phi} \right)\_{;(c)(d)},
	\end{align}
	\end{subequations}
	which gives rise to
	\begin{subequations}
	\begin{align}
		&\mathrm{e}^{4\alpha\phi}\,\hat{g}\_{[a][b];[c][d]} = \left( \mathrm{e}^{2\alpha\phi}\right)\_{;(c)(d)}  g\_{(a)(b)} + \left( \mathrm{e}^{2\beta\phi} \mathcal{A}\_{(a)} \mathcal{A}\_{(b)} \right)\_{;(c)(d)} - {} \nonumber\\
		& \quad {} - \mathcal{A}\_{(a)}\left( \mathrm{e}^{2\beta\phi} \mathcal{A}\_{(b)} \right)\_{;(c)(d)} - \mathcal{A}\_{(b)}\left( \mathrm{e}^{2\beta\phi} \mathcal{A}\_{(a)} \right)\_{;(c)(d)} + \mathcal{A}\_{(a)}\mathcal{A}\_{(b)}\left( \mathrm{e}^{2\beta\phi} \right)\_{;(c)(d)},\\
		&\mathrm{e}^{(3\alpha+\beta)\phi}\,\hat{g}\_{[z][a];[b][c]} = \left( \mathrm{e}^{2\beta\phi} \mathcal{A}\_{(a)} \right)\_{;(b)(c)} - \mathcal{A}\_{(a)} \left(\mathrm{e}^{2\beta\phi}\right)\_{;(b)(c)},\\
		&\mathrm{e}^{(2\alpha+2\beta)\phi}\,\hat{g}\_{[z][z];[a][b]} = \left(\mathrm{e}^{2\beta\phi}\right)\_{;(a)(b)}.
	\end{align}
	\end{subequations}

	For \eqref{eq:metric-second-derivative-ABCz}, we have
	\begin{equation}
		\hat{g}\_{AB;C[z]} = \mathrm{e}^{-\beta\phi} \hat{g}_{AB;Cz} = 0\,,
	\end{equation}
	again similarly as in the proof of proposition \ref{stat:metric-derivative}.
\end{proof}

For the components of $\bm{\nabla} \bm{\hat{\Gamma}}$ needed to express the Riemann tensor, this means:
\begin{lemma}
	Frame components of $\bm{\nabla} \bm{\hat{\Gamma}}$ can be expressed in terms of frame components of the $D$-dimensional fields and their derivatives. Some of such relations follow:
	\begin{subequations}
	\begin{align}
		&4 \mathrm{e}^{4\alpha\phi}\,\hat{\Gamma}\_{[a][b][[c];[d]]} = \mathrm{e}^{2\beta\phi} \left(\mathcal{F}\_{(a)(b)} \mathcal{F}\_{(c)(d)} - \mathcal{A}\_{(a);(b)} \owedge \mathcal{A}\_{(c);(d)}\right) + {}\nonumber\\
		&\hphantom{4 \mathrm{e}^{4\alpha\phi}\,\hat{\Gamma}\_{[a][b][[c];[d]]} = {}} + \left(\mathrm{e}^{2\alpha\phi}\right)\_{;(a)(b)} \owedge g\_{(c)(d)}\,,\\
		&\mathord{-}2 \mathrm{e}^{(3\alpha+\beta)\phi}\,\hat{\Gamma}\_{[a][b][z];[c]} = \mathrm{e}^{2\beta\phi} \mathcal{F}\_{\!(a)(b);(c)} + \left(\mathrm{e}^{2\beta\phi}\right)\_{;(c)} \mathcal{F}\_{\!(a)(b)} + {}\nonumber\\
		&\hphantom{\mathord{-}2 \mathrm{e}^{(3\alpha+\beta)\phi}\,\hat{\Gamma}\_{[a][b][z];[c]} = {}} + 2 \left(\mathrm{e}^{2\beta\phi}\right)\_{;[(a)} \mathcal{A}\_{(b)];(c)}\,,\\
		&\mathord{-}2 \mathrm{e}^{(-2\alpha-2\beta)\phi}\,\hat{\Gamma}\_{[a][z][z];[b]} = \left(\mathrm{e}^{2\beta\phi}\right)\_{;(a)(b)},\\
		&\hat{\Gamma}\_{[A][B][C];[z]} = 0\,.
	\end{align}
	\end{subequations}
\end{lemma}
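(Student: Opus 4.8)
The plan is to reuse the Christoffel-type identity \eqref{eq:christoffel} of proposition~\ref{stat:christoffel} and differentiate it once more. Since $\bm{\hat{\Gamma}}$ is a genuine tensor (definition~\ref{def:nabla-difference}) and $\bm{\nabla}$ is linear, applying $\bm{\nabla}$ to $\hat{\Gamma}_{ABC} = \tfrac12(\hat{g}_{AB;C} + \hat{g}_{CA;B} - \hat{g}_{BC;A})$ yields the tensor identity $\hat{\Gamma}_{ABC;D} = \tfrac12(\hat{g}_{AB;CD} + \hat{g}_{CA;BD} - \hat{g}_{BC;AD})$. Contracting each index with the appropriate frame vector of \eqref{eq:original-frame} — using, as in the proof of proposition~\ref{stat:metric-derivative}, the projection for the derivative index — is a pointwise linear operation, so every frame component of $\bm{\nabla}\bm{\hat{\Gamma}}$ is the same linear combination of frame components of $\bm{\nabla}^{2}\bm{\hat{g}}$, which are already in hand from proposition~\ref{stat:metric-second-derivative}. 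No new differentiation of the nontrivial fields is needed.

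For the purely tangential components I would write $\hat{\Gamma}\_{[a][b][c];[d]} = \tfrac12\bigl(\hat{g}\_{[a][b];[c][d]} + \hat{g}\_{[c][a];[b][d]} - \hat{g}\_{[b][c];[a][d]}\bigr)$ and substitute \eqref{eq:metric-second-derivative-abcd}. The decisive structural fact is that each component in proposition~\ref{stat:metric-second-derivative} is symmetric in its two derivative indices (the $(\mathrm{e}^{2\alpha\phi})\_{;(c)(d)}$ piece because $\phi$ is a scalar and $\bm{\nabla}$ is torsion-free, the $\mathcal{A}$-bilinear piece manifestly). Antisymmetrizing over $[c],[d]$ therefore annihilates the $\hat{g}\_{[a][b];[c][d]}$ contribution outright, and in the surviving $\mathcal{A}$-bilinears the antisymmetric part $\mathcal{A}\_{(a);(b)} - \mathcal{A}\_{(b);(a)} = -\mathcal{F}\_{(a)(b)}$ converts the products into the Maxwell square $\mathcal{F}\_{(a)(b)}\mathcal{F}\_{(c)(d)}$ and the two $\owedge$-terms recorded in the lemma.

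The mixed and vanishing components follow the same pattern but are shorter. For $\hat{\Gamma}\_{[a][b][z];[c]}$ I would first drop the $\hat{g}_{AB;z}$ term using \eqref{eq:metric-derivative-ABz}, leaving $\hat{\Gamma}\_{[a][b][z];[c]} = \tfrac12\bigl(\hat{g}\_{[z][a];[b][c]} - \hat{g}\_{[b][z];[a][c]}\bigr)$, then substitute \eqref{eq:metric-second-derivative-zabc}; antisymmetrizing over $[a],[b]$ turns $\mathcal{A}\_{(a);(b)(c)} - \mathcal{A}\_{(b);(a)(c)}$ into $-\mathcal{F}\_{(a)(b);(c)}$ and collects the single-derivative cross terms into the $\owedge$-contribution. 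Likewise $\hat{\Gamma}\_{[a][z][z];[b]} = -\tfrac12\hat{g}\_{[z][z];[a][b]}$ (the two terms carrying a $z$ in a derivative slot dropping by the symmetry in the derivative indices together with \eqref{eq:metric-second-derivative-ABCz}) reduces to \eqref{eq:metric-second-derivative-zzab}, while $\hat{\Gamma}\_{[A][B][C];[z]} = 0$ is immediate from \eqref{eq:metric-second-derivative-ABCz}.

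I expect the only real difficulty to be mechanical: carefully tracking the six second-derivative terms of the tangential component through the antisymmetrization and correctly repackaging the resulting $\mathcal{A}$-bilinears into the Maxwell squares and the $\owedge$-products, consistently with the conventions for $\owedge$ and for the index ordering of $\mathcal{F}_{ab} = \mathcal{A}_{b;a} - \mathcal{A}_{a;b}$. Conceptually nothing beyond propositions~\ref{stat:christoffel} and~\ref{stat:metric-second-derivative} is required.
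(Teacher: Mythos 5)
Your proposal is correct and is essentially the paper's own proof: the paper disposes of this lemma in a single sentence, namely that the relations follow from \eqref{eq:christoffel} differentiated by $\bm{\nabla}$ and expressed in frame components via proposition \ref{stat:metric-second-derivative}, which is precisely the computation you spell out (including the correct handling of the $z$-slot terms via $\hat{g}_{AB;z}=0$ with $\bm{\nabla}\bm{\partial}_{z}=0$, and the repackaging of the $\mathcal{A}$-bilinears into $\mathcal{F}_{(a)(b)}\mathcal{F}_{(c)(d)}$ plus the $\owedge$ products). Incidentally, carrying your computation through yields $-2\,\mathrm{e}^{(2\alpha+2\beta)\phi}\,\hat{\Gamma}_{[a][z][z];[b]} = (\mathrm{e}^{2\beta\phi})_{;(a)(b)}$, consistent with the pattern of the other two relations and with proposition \ref{stat:metric-second-derivative}, so the exponent $(-2\alpha-2\beta)$ printed in the lemma's third equation is evidently a sign typo.
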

\begin{definition}
	By $\owedge$, we mean a \enquote{generalized Kulkarni--Nomizu product}:
	\begin{align}
		A_{ab} \owedge B_{cd} &:= A_{ac} B_{db} + A_{bd} B_{ca} - A_{ad} B_{cb} - A_{bc} B_{da} = {}\nonumber\\
		&\hphantom{:}= 2\left(A_{a[c} B_{d]b} - A_{b[c} B_{d]a}\right).
	\end{align}
\end{definition}
\begin{proof}
	The relations are a direct consequence of \eqref{eq:christoffel} from proposition \ref{stat:christoffel}, differentiated by $\bm{\nabla}$ and expressed as frame components according to proposition \ref{stat:metric-second-derivative}.
\end{proof}

\subsection{Riemann tensor frame components}

Having expressed the frame components of $\bm{\hat{\Gamma}}$ and $\bm{\nabla} \bm{\hat{\Gamma}}$, it is now straightforward to compute the components of the Riemann tensor by means of lemma \ref{stat:riemann-covariant-difference}.
\begin{proposition}
	Frame components of the two Riemann tensors $R_{abcd}$ and $\hat{R}_{ABCD}$ have the following relation to each other:
	\begin{subequations}
	\label{eq:riemann-KK}
	\begin{align}
		\label{eq:riemann-KK-a}
		\mathrm{e}^{2\alpha\phi} \hat{R}\_{[a][b][c][d]} ={}& R\_{(a)(b)(c)(d)} + {}\nonumber\\*
		& {} + \tfrac{\mathrm{e}^{(-2\alpha+2\beta)\phi}}{8} \left(\mathcal{F}\_{(a)(b)} \owedge \mathcal{F}\_{(c)(d)} - 4 \mathcal{F}\_{(a)(b)}\mathcal{F}\_{(c)(d)}\right) + {}\nonumber\\*
		& {} + \left(\mathrm{e}^{\alpha\phi} \left(\mathrm{e}^{-\alpha\phi}\right)\_{;(a)(b)} - \tfrac{\alpha^{2}}{2} g\^{(e)(f)} \phi\_{;(e)} \phi\_{;(f)} g\_{(a)(b)} \right) \owedge g\_{(c)(d)}\,,\\*[5pt]
		\label{eq:riemann-KK-b}
		\mathrm{e}^{2\alpha\phi} \hat{R}\_{[a][b][c][z]} ={}& \mathrm{e}^{(-\alpha+\beta)\phi}\,\Big(-\tfrac{1}{2} \mathcal{F}\_{(a)(b);(c)} + \alpha g\^{(e)(f)} \phi\_{;(e)} \mathcal{F}\_{(f)[(a)} g\_{(b)](c)} + {}\nonumber\\*
		& \hphantom{\mathrm{e}^{(-\alpha+\beta)\phi}\,\Big(} + (\beta-\alpha)\left(\phi\_{;[(a)} \mathcal{F}\_{(b)](c)} - \phi\_{;(c)} \mathcal{F}\_{(a)(b)}\right)\Big)\,,\\*[5pt]
		\label{eq:riemann-KK-c}
		\mathrm{e}^{2\alpha\phi} \hat{R}\_{[a][z][c][z]} ={}& \tfrac{\beta\mathrm{e}^{(2\alpha-\beta)\phi}}{2\alpha-\beta} \left(\mathrm{e}^{(-2\alpha+\beta)\phi}\right)\_{;(a)(c)} + \tfrac{\mathrm{e}^{(-2\alpha+2\beta)\phi}}{4} \mathcal{F}^{2}\_{(a)(c)} - {}\nonumber\\*
		& {} - \alpha\beta g\^{(b)(d)} \phi\_{;(b)} \phi\_{;(d)} g\_{(a)(c)}\,.
	\end{align}
	\end{subequations}
\end{proposition}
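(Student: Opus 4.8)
The plan is to substitute the frame components of $\bm{\hat{\Gamma}}$ from corollary \ref{stat:reduction-Gamma} and those of $\bm{\nabla}\bm{\hat{\Gamma}}$ from the lemma giving the frame components of $\bm{\nabla}\bm{\hat{\Gamma}}$ directly into the frame form \eqref{eq:riemann-reduction} of the covariant difference of lemma \ref{stat:riemann-covariant-difference}, and then collect terms. In \eqref{eq:riemann-reduction-abcd} the closing summand contributes $\mathrm{e}^{-2\alpha\phi} R\_{(a)(b)(c)(d)}$, which multiplication by $\mathrm{e}^{2\alpha\phi}$ turns into the bare $R\_{(a)(b)(c)(d)}$ of \eqref{eq:riemann-KK-a}; for \eqref{eq:riemann-KK-b} and \eqref{eq:riemann-KK-c} the analogous term is already absent by \eqref{eq:riemann-reduction-ABCz} (they arise from $\hat{R}\_{[A][B][C][z]}$ with $[A][B][C] = [a][b][c]$ and $[a][z][c]$ respectively), so only the $\bm{\hat{\Gamma}}$-built pieces remain. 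Every component then splits into a \emph{linear} term $2\hat{\Gamma}\_{[A][B][[C];[D]]}$ and a \emph{quadratic} term $2\hat{\Gamma}\hat{\Gamma}$.

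First I would dispatch the linear term. Using $\hat{\Gamma}\_{[A][B][C];[z]} = 0$ together with the antisymmetrization, each such term reduces to a single ordinary frame derivative, e.g.\ $2\hat{\Gamma}\_{[a][b][[d];[c]]} = -2\hat{\Gamma}\_{[a][b][[c];[d]]}$, into which the three nonvanishing derivative components are inserted. This produces the genuine second-order content — the terms in $(\mathrm{e}^{2\alpha\phi})\_{;(a)(b)}$, $\mathcal{F}\_{(a)(b);(c)}$ and $(\mathrm{e}^{2\beta\phi})\_{;(a)(b)}$ feeding \eqref{eq:riemann-KK-a}, \eqref{eq:riemann-KK-b} and \eqref{eq:riemann-KK-c} — together with a first batch of lower-order pieces ($\mathcal{F}\_{(a)(b)}\mathcal{F}\_{(c)(d)}$ and $\mathcal{A}\_{(a);(b)} \owedge \mathcal{A}\_{(c);(d)}$ in the tangential case, and the $(\mathrm{e}^{2\beta\phi})\_{;(c)}\mathcal{F}\_{(a)(b)}$ and $(\mathrm{e}^{2\beta\phi})\_{;[(a)}\mathcal{A}\_{(b)];(c)}$ contributions in the mixed case).

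Next comes the quadratic term. Its antisymmetrization expands into two products, each of which is then split over the contracted index $[E]$ into a tangential part ($[E]=[e]$) and the extra value $[E]=[z]$. The tangential part is fed either by the purely scalar connection difference $\hat{\Gamma}\_{[a][b][c]}$, whose summands are all proportional to $(\mathrm{e}^{-\alpha\phi})\_{;(e)} g\_{(a)(c)}$ and which therefore produces products of first derivatives of $\phi$ contracted with metrics (the $\phi\_{;}\phi\_{;}\,g\,g$ structures, including the trace term $g\^{(e)(f)}\phi\_{;(e)}\phi\_{;(f)} g\_{(a)(c)}$ of \eqref{eq:riemann-KK-c}, which arises precisely from the cross product of a $\hat{\Gamma}\_{[e][a][c]}$ with a $\hat{\Gamma}\_{[e][z][z]}$), or by the antisymmetric mixed component $\hat{\Gamma}\_{[a][b][z]} \propto \mathcal{F}\_{(a)(b)}$, whose self-product supplies the $\mathcal{F}^{2}\_{(a)(c)}$ term of \eqref{eq:riemann-KK-c}. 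The $[E]=[z]$ part instead uses the symmetric mixed component $\hat{\Gamma}\_{[z][a][b]} \propto \mathcal{A}\_{(a);(b)} + \mathcal{A}\_{(b);(a)}$ and the components $\hat{\Gamma}\_{[z][z][a]}, \hat{\Gamma}\_{[a][z][z]} \propto (\mathrm{e}^{\beta\phi})\_{;(a)}$, contributing the remaining $\mathcal{A}$-bilinears and $\phi\_{;}\phi\_{;}$ pieces.

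The hard part will be the bookkeeping that reassembles these scattered contributions into the compact forms of the statement, and two reorganizations carry the weight. First, gauge invariance must be made manifest: the symmetric part of $\mathcal{A}\_{(a);(b)}$ carried by $\hat{\Gamma}\_{[z][a][b]}$ and the $-\mathcal{A}\_{(a);(b)} \owedge \mathcal{A}\_{(c);(d)}$ of the linear piece must cancel against the $[E]=[z]$ quadratic contribution so that only the gauge-invariant $\mathcal{F} \owedge \mathcal{F}$, $\mathcal{F}\mathcal{F}$ and $\mathcal{F}\_{(a)(b);(c)}$ combinations survive, and likewise the bare $(\mathrm{e}^{2\beta\phi})\_{;[(a)}\mathcal{A}\_{(b)];(c)}$ terms in \eqref{eq:riemann-KK-b} must collapse onto the $\phi\_{;}\mathcal{F}$ structure shown there. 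Second, the exponential prefactors must be repackaged: one checks $\mathrm{e}^{-2\alpha\phi}(\mathrm{e}^{2\alpha\phi})\_{;(a)(b)} = 2\alpha(\phi\_{;(a)(b)} + 2\alpha\phi\_{;(a)}\phi\_{;(b)})$ and $\mathrm{e}^{\alpha\phi}(\mathrm{e}^{-\alpha\phi})\_{;(a)(b)} = -\alpha(\phi\_{;(a)(b)} - \alpha\phi\_{;(a)}\phi\_{;(b)})$, so that the single second derivative from the linear term plus the $\phi\_{;}\phi\_{;}$ terms from the quadratic parts collapse into $\mathrm{e}^{\alpha\phi}(\mathrm{e}^{-\alpha\phi})\_{;(a)(b)} - \tfrac{\alpha^{2}}{2} g\^{(e)(f)}\phi\_{;(e)}\phi\_{;(f)} g\_{(a)(b)}$ in \eqref{eq:riemann-KK-a}; similarly, using $\tfrac{\beta\mathrm{e}^{(2\alpha-\beta)\phi}}{2\alpha-\beta}(\mathrm{e}^{(-2\alpha+\beta)\phi})\_{;(a)(c)} = -\beta(\phi\_{;(a)(c)} + (\beta-2\alpha)\phi\_{;(a)}\phi\_{;(c)})$ identifies the leading term of \eqref{eq:riemann-KK-c}. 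Keeping the Kulkarni--Nomizu antisymmetrizations and the raised-$[z]$ contractions mutually consistent throughout is where slips are most likely, but the computation is entirely mechanical once the split of the $[E]$-sum and these exponential identities are in hand.
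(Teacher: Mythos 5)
Your proposal follows essentially the same route as the paper's proof: direct substitution of the frame components of $\bm{\hat{\Gamma}}$ (corollary \ref{stat:reduction-Gamma}) and of $\bm{\nabla}\bm{\hat{\Gamma}}$ into the frame form \eqref{eq:riemann-reduction} of lemma \ref{stat:riemann-covariant-difference}, with the $[E]$-sum split and the two reorganizations you identify --- reassembly of the $\mathcal{A}$-bilinears into gauge-invariant $\mathcal{F}$ combinations and the exponential repackaging identities --- being precisely the two identities the paper states to collapse the result into \eqref{eq:riemann-KK-a}--\eqref{eq:riemann-KK-c}. The computation is correct in structure and your auxiliary identities check out.
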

\begin{proof}
	By direct substitution from corollary \ref{stat:reduction-Gamma}, we get:
	\begin{align}
		2\hat{\Gamma}\_{[E][a][[d]} \hat{\Gamma}\vphantom{\Gamma}\^{[E]}\_{\hphantom{[E]}[c]][b]} ={}& -\tfrac{1}{8} \mathrm{e}^{(-4\alpha + 2\beta)\phi} (\mathcal{A}\_{(a);(b)} + \mathcal{A}\_{(b);(a)})\owedge(\mathcal{A}\_{(c);(d)} + \mathcal{A}\_{(d);(c)}) - {}\nonumber\\
		& {} - \tfrac{\alpha^{2}}{2} \mathrm{e}^{-2\alpha\phi} g\^{(e)(f)} \phi\_{;(e)} \phi\_{;(f)} g\_{(a)(b)} \owedge g\_{(c)(d)} + {}\nonumber\\
		& {} + 3 \left( \mathrm{e}^{-\alpha\phi} \right)\_{\!;(a)} \left( \mathrm{e}^{-\alpha\phi} \right)\_{\!;(b)} \owedge g\_{(c)(d)}
	\end{align}
	Together with following identities:
	\begin{subequations}
	\begin{gather}
		4 \mathcal{A}\_{(a);(b)} \owedge \mathcal{A}\_{(c);(d)} - (\mathcal{A}\_{(a);(b)} + \mathcal{A}\_{(b);(a)}) \owedge (\mathcal{A}\_{(c);(d)} + \mathcal{A}\_{(d);(c)}) = \mathcal{F}\_{(a)(b)} \owedge \mathcal{F}\_{(c)(d)}\,,\\
		3 \left(\mathrm{e}^{-\alpha\phi}\right)\_{\!;(a)} \left(\mathrm{e}^{-\alpha\phi}\right)\_{\!;(b)} - \frac{1}{2} \left(\mathrm{e}^{2\alpha\phi}\right)\_{\!;(a)(b)} = \mathrm{e}^{\alpha\phi} \left(\mathrm{e}^{-\alpha\phi}\right)\_{\!;(a)(b)},
	\end{gather}
	\end{subequations}
	this leads to \eqref{eq:riemann-KK-a}.

	Using the same technique, we obtain \eqref{eq:riemann-KK-b} and \eqref{eq:riemann-KK-c}.
\end{proof}

\begin{remark}
	Setting $\hat{R}_{AB} = 0$ in
	\begin{subequations}
	\begin{align}
		\hat{R}\_{[a][c]} &= g\^{(b)(d)} \hat{R}\_{[a][b][c][d]} + \hat{R}\_{[a][z][c][z]}\,,\\
		\hat{R}\_{[a][z]} &= -g\^{(b)(c)} \hat{R}\_{[a][b][c][z]}\,,\\
		\hat{R}\_{[z][z]} &= g\^{(a)(c)} \hat{R}\_{[a][z][c][z]}\,,
	\end{align}
	\end{subequations}
	we obtain the Einstein equations \eqref{eq:reduced-einstein-eq} for vacuum $\bm{\hat{g}}$. Specifically, the frame components of the three equations \eqref{eq:reduced-einstein-eq} correspond to the following linear combinations:
	\begin{subequations}
	\begin{align}
		\mathrm{e}^{2\alpha\phi} \hat{R}\_{[a][b]} - \tfrac{\alpha}{\beta} \mathrm{e}^{2\alpha\phi} \hat{R}\_{[z][z]} g\_{(a)(b)} &= 0\,,\\
		2 \mathrm{e}^{((D-1)\alpha + 2\beta)\phi} \hat{R}\_{[a][b]} &= 0\,,\\
		4 \mathrm{e}^{2\alpha\phi} \hat{R}\_{[z][z]} &= 0\,.
	\end{align}
	\end{subequations}
\end{remark}
\begin{definition}
	By $\hat{R}_{AB}$ and $R_{ab}$, we mean the Ricci tensors of the original and reduced spacetime:
	\begin{align}
		\hat{R}_{AB} &= \hat{R}\vphantom{R}^{C}_{\hphantom{C}ACB}\,, & R_{ab} &= R^{c}_{\hphantom{c}acb}\,.
	\end{align}
\end{definition}

\subsection{Weyl tensor frame components}

Finally, we can express the $(D+1)$-dimensional Riemann tensor (which is equal to the Weyl tensor) in terms of the $D$-dimensional Weyl tensor.

\begin{proposition}
	\label{stat:weyl-KK}
	Let $D \ge 4$. Then the Weyl tensor $\hat{C}_{ABCD}$ of a $(D+1)$-dimensional vacuum spacetime has the following relation to the Weyl tensor $C_{abcd}$ of its $D$-dimensional Kaluza--Klein reduction:
	\begin{subequations}\label{eq:weyl-KK}
	\begin{align}
		\label{eq:weyl-KK-a}
		&\mathrm{e}^{2\alpha\phi} \hat{C}\_{[a][b][c][d]} = C\_{(a)(b)(c)(d)} + \tfrac{\mathrm{e}^{(-2\alpha+2\beta) \phi}}{8} \left( \mathcal{F}\_{(a)(b)} \owedge \mathcal{F}\_{(c)(d)} - 4 \mathcal{F}\_{(a)(b)} \mathcal{F}\_{(c)(d)} \right) + {}\nonumber\\
		&\quad {} + \Bigg(\tfrac{\mathrm{e}^{(-2\alpha+2\beta)\phi}}{2(D-2)} \mathcal{F}^{2}\_{(a)(b)} + \tfrac{\beta \mathrm{e}^{(2\alpha-\beta) \phi}}{(D-2)(\beta-2\alpha)} \left( \mathrm{e}^{(-2\alpha+\beta)\phi} \right)\_{\!;(a)(b)} - {}\nonumber\\
		&\quad\hphantom{{} + \Bigg(} - \tfrac{9 \beta \mathrm{e}^{-\left(\frac{D-4}{3}\alpha+\beta\right) \phi}}{2(D-1)(D-2)((D-4) \alpha + 3\beta)} g\^{(e)(f)} \left( \mathrm{e}^{\left(\frac{D-4}{3}\alpha+\beta\right) \phi} \right)\_{\!;(e)(f)} g\_{(a)(b)} \Bigg) \owedge g\_{(c)(d)}\,,\\
		\label{eq:weyl-KK-b}
		&\mathord{-}2\mathrm{e}^{(3\alpha-\beta)\phi} \hat{C}\_{[a][b][c][z]} = (\beta-\alpha)(\phi\_{;(b)} \mathcal{F}\_{(a)(c)} + \phi\_{;(a)} \mathcal{F}\_{(c)(b)} + 2 \phi\_{;(c)} \mathcal{F}\_{(a)(b)}) + \vphantom{\frac{1}{2}}\nonumber\\
		&\hphantom{\mathord{-}2\mathrm{e}^{(3\alpha-\beta)\phi} \hat{C}\_{[a][b][c][z]} = {}} + \alpha g\^{(e)(f)} \phi\_{;(e)} \left( \mathcal{F}\_{(a)(f)} g\_{(b)(c)} + \mathcal{F}\_{(f)(b)} g\_{(a)(c)}\right) + {}\nonumber\\
		&\hphantom{\mathord{-}2\mathrm{e}^{(3\alpha-\beta)\phi} \hat{C}\_{[a][b][c][z]} = {}} + \mathcal{F}\_{(a)(b);(c)}\,,\\[2pt]
		\label{eq:weyl-KK-c}
		&\mathrm{e}^{2\alpha\phi} \hat{C}\_{[a][z][c][z]} = \tfrac{\beta \mathrm{e}^{(2\alpha-\beta)\phi}}{2\alpha-\beta} \left( \mathrm{e}^{(-2\alpha+\beta)\phi}\right)\_{\!;(a)(c)} - \alpha\beta g\^{(b)(d)} \phi\_{;(b)} \phi\_{;(d)} g\_{(a)(c)} + {}\nonumber\\
		&\hphantom{\mathrm{e}^{2\alpha\phi} \hat{C}\_{[a][z][c][z]} = {}} + \tfrac{\mathrm{e}^{(-2\alpha+2\beta)\phi}}{4} \mathcal{F}^{2}\_{(a)(c)}\,.
	\end{align}
	\end{subequations}
\end{proposition}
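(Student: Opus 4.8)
The whole argument hinges on the fact that the lift $(\mathcal{\hat M},\bm{\hat g})$ is vacuum, hence Ricci-flat, so that its Weyl tensor is simply its Riemann tensor, $\hat C_{ABCD}=\hat R_{ABCD}$. For the mixed components this already settles everything: \eqref{eq:weyl-KK-c} is nothing but \eqref{eq:riemann-KK-c} read with $\hat C$ in place of $\hat R$, and \eqref{eq:weyl-KK-b} is obtained from \eqref{eq:riemann-KK-b} by multiplying through by $-2\mathrm{e}^{(\alpha-\beta)\phi}$ (which turns the prefactor $\mathrm{e}^{2\alpha\phi}$ into $-2\mathrm{e}^{(3\alpha-\beta)\phi}$) and then expanding the antisymmetrizers, using $\mathcal{F}_{cb}=-\mathcal{F}_{bc}$ to recover the unantisymmetrized form displayed there. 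Thus the only component requiring real work is the purely $D$-dimensional block \eqref{eq:weyl-KK-a}, whose right-hand side in \eqref{eq:riemann-KK-a} still carries the full reduced Riemann tensor $R_{(a)(b)(c)(d)}$ rather than the reduced Weyl tensor.

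For \eqref{eq:weyl-KK-a} the plan is to trade $R_{abcd}$ for $C_{abcd}$ by the standard $D$-dimensional trace decomposition, which in terms of the product $\owedge$ reads
\begin{equation*}
	R_{abcd} = C_{abcd} + \tfrac{1}{D-2}\,R_{ab}\owedge g_{cd} - \tfrac{g^{ef}R_{ef}}{2(D-1)(D-2)}\,g_{ab}\owedge g_{cd}\,.
\end{equation*}
Substituting this into \eqref{eq:riemann-KK-a} in frame components leaves the $\mathcal{F}_{(a)(b)}\owedge\mathcal{F}_{(c)(d)}-4\mathcal{F}_{(a)(b)}\mathcal{F}_{(c)(d)}$ piece untouched, so it reappears verbatim in \eqref{eq:weyl-KK-a}, and collects everything else into a single symmetric tensor contracted with $\owedge g_{(c)(d)}$. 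Since the reduced spacetime is \emph{not} vacuum, I would eliminate the reduced Ricci tensor and its trace by the reduced Einstein equations \eqref{eq:reduced-einstein-eq}, to which vacuum of the lift is equivalent. Inserting $R_{(a)(b)}$ from \eqref{eq:reduced-einstein-eq-a}, one checks that the $\phi_{;(a)(b)}$ contributions (from $\tfrac{1}{D-2}R_{ab}\owedge g_{cd}$ together with the $\mathrm{e}^{\alpha\phi}(\mathrm{e}^{-\alpha\phi})_{;(a)(b)}$ term already present in \eqref{eq:riemann-KK-a}) combine into the coefficient $\tfrac{\beta}{D-2}$, the quadratic $\phi_{;(a)}\phi_{;(b)}$ terms into $\tfrac{\beta(\beta-2\alpha)}{D-2}$, and that these two together repackage precisely into the $(\mathrm{e}^{(-2\alpha+\beta)\phi})_{;(a)(b)}$ term of \eqref{eq:weyl-KK-a}, while the $\mathcal{F}^{2}_{(a)(b)}$ part of the reduced Ricci tensor supplies the $\tfrac{\mathrm{e}^{(-2\alpha+2\beta)\phi}}{2(D-2)}\mathcal{F}^{2}_{(a)(b)}$ coefficient.

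The delicate step, which I expect to be the main obstacle, is the pure-trace part, i.e.\ the coefficient multiplying $g_{(a)(b)}$ inside the $\owedge g_{(c)(d)}$ bracket. It receives contributions from the scalar curvature $g^{ef}R_{ef}$, from the $\tfrac{\alpha}{2\beta}\mathcal{F}^{2}g_{ab}$ term of the reduced Ricci tensor, and from the $-\tfrac{\alpha^{2}}{2}g^{(e)(f)}\phi_{;(e)}\phi_{;(f)}g_{(a)(b)}$ term of \eqref{eq:riemann-KK-a}, leaving a combination of $g^{(e)(f)}\phi_{;(e)(f)}$, $g^{(e)(f)}\phi_{;(e)}\phi_{;(f)}$ and $\mathcal{F}^{2}$ that must be reorganised into the single expression appearing in \eqref{eq:weyl-KK-a}. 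Here the scalar field equation \eqref{eq:reduced-einstein-eq-c} is used to trade the residual $\mathcal{F}^{2}$ for the $\phi$-gradient terms; after this substitution the coefficient of $g^{(e)(f)}\phi_{;(e)(f)}$ becomes $-\tfrac{3\beta}{2(D-1)(D-2)}$ and that of $g^{(e)(f)}\phi_{;(e)}\phi_{;(f)}$ becomes $-\tfrac{3\beta}{2(D-1)(D-2)}\bigl(\tfrac{D-4}{3}\alpha+\beta\bigr)$. Writing $\mu:=\tfrac{D-4}{3}\alpha+\beta$, so that $3\mu=(D-4)\alpha+3\beta$ is exactly the exponent governing the Maxwell equation \eqref{eq:reduced-einstein-eq-b}, the identity $g^{(e)(f)}(\mathrm{e}^{\mu\phi})_{;(e)(f)}=\mu\,\mathrm{e}^{\mu\phi}\bigl(g^{(e)(f)}\phi_{;(e)(f)}+\mu\,g^{(e)(f)}\phi_{;(e)}\phi_{;(f)}\bigr)$ shows that both coefficients coincide with $-\tfrac{9\beta\,\mathrm{e}^{-\mu\phi}}{2(D-1)(D-2)\,(3\mu)}$ times that expression: the spurious-looking $9=3^{2}$ and the factor $(D-4)\alpha+3\beta$ in the denominator are precisely what cancel the $\mu$ produced by differentiating the exponential. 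Keeping the exponential prefactors and the dimension-dependent rational coefficients consistent through this collection is the entire difficulty; once the trace term is matched, \eqref{eq:weyl-KK-a} follows, and together with the immediate mixed components the proof is complete.
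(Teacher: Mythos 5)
Your proposal is correct and follows essentially the same route as the paper: vacuum of the lift gives $\hat{C}_{ABCD}=\hat{R}_{ABCD}$, the reduced Riemann tensor is traded for its Weyl tensor via the standard trace (Schouten) decomposition, and the resulting Ricci terms are eliminated through the reduced Einstein equations --- which are precisely the contractions of the Riemann relation \eqref{eq:riemann-KK} --- with \eqref{eq:reduced-einstein-eq-c} absorbing the residual $\mathcal{F}^{2}$ trace term. The coefficient bookkeeping you spell out (the $\tfrac{\beta}{D-2}$ and $\tfrac{\beta(\beta-2\alpha)}{D-2}$ collection into $\left(\mathrm{e}^{(-2\alpha+\beta)\phi}\right)_{;(a)(b)}$, and the repackaging via $\mu=\tfrac{D-4}{3}\alpha+\beta$ of the pure-trace part) is exactly the computation the paper leaves implicit after displaying the Schouten tensor, and it checks out.
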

\begin{proof}
	As the lifted spacetime is vacuum,
	\begin{subequations}
	\begin{align}
		\hat{R}_{ABCD} &= \hat{C}_{ABCD}\,,\\
		R_{abcd} &= C_{abcd} + g_{ab} \owedge S_{cd}\,.
	\end{align}
	\end{subequations}
	Here, $\bm{S}$ is the Schouten tensor of the reduced spacetime:
	\begin{equation}
		S_{ab} \equiv \frac{1}{D-2} \left(R_{ab} - \frac{R}{2\left(D-1\right)}\,g_{ab}\right),\qquad
		\text{where }R \equiv g^{ab} R_{ab}\,,
	\end{equation}
	which can be computed by means of contractions of \eqref{eq:riemann-KK} (those contractions are put in \eqref{eq:reduced-einstein-eq}):
	\begin{subequations}
	\begin{align}
		&(D-2)\,S_{ab} = \left(\beta+\left(D-2\right)\alpha\right) \phi_{;ab} + \left(\beta^{2}-2\alpha\beta-\left(D-2\right)\alpha^{2}\right) \phi_{;a}\phi_{;b} + {}\nonumber\\
		&\qquad {} + \tfrac{\left(D-2\right)\alpha^{2} + 2\alpha\beta}{2}\,g^{ef} \phi_{;e}\phi_{;f} g_{ab} + \tfrac{\mathrm{e}^{(-2\alpha+2\beta)\phi}}{2} \left(\mathcal{F}^{2}_{ab} - \tfrac{3}{4(D-1)}\,\mathcal{F}^{2} g_{ab}\right).
	\end{align}
	\end{subequations}
\end{proof}

Table \ref{tab:weyl-generators} shows generating sets for $(D+1)$-dimensional Weyl tensor frame components with given boost weight \emph{(b.w.)} \citep[\bgroup\it cf.\egroup][table III]{alignment-special-tensors}.
\begin{table}[htbp]
\centering
\caption{Generating Weyl frame components}\label{tab:weyl-generators}
{\edef\baselinestrut{\vrule width0pt height\the\baselineskip\relax}
\begin{tabular}{*{5}{c}}
	\toprule
	\bw2 & \bw1 & \bw0 & \bw{-1} & \bw{-2}\\
	\midrule
	$\hat{C}\_{[0][i][0][j]}$ & $\hat{C}\_{[0][i][j][k]}$ & $\hat{C}\_{[0][i][1][j]}$ & $\hat{C}\_{[1][i][j][k]}$ & $\hat{C}\_{[1][i][1][j]}$\\
	\baselinestrut$\hat{C}\_{[0][i][0][z]}$ & $\hat{C}\_{[0][z][i][z]}$ & $\hat{C}\_{[i][j][k][l]}$ & $\hat{C}\_{[1][z][i][z]}$ & $\hat{C}\_{[1][i][1][z]}$\\
	\baselinestrut& $\hat{C}\_{[0][i][j][z]}$ & $\hat{C}\_{[i][j][k][z]}$ & $\hat{C}\_{[1][i][j][z]}$ &\\
	\baselinestrut&& $\hat{C}\_{[0][1][i][z]}$ &&\\
	\bottomrule
\end{tabular}}
\end{table}
The rest of the components can be expressed as linear combination of such components with the corresponding boost weight, by means of the Weyl tensor symmetries.

\section{Kaluza--Klein alignment conditions}

Assuming the lifted spacetime is vacuum, we would now like to determine the conditions necessary and sufficient for it to be of the same or more special algebraic type as the reduced spacetime, with respect to corresponding null direction. Specifically, for $\mathrm{bo}_{\langle\bm{\hat{\ell}}\rangle}\,\bm{\hat{C}}$ to be at most $b$ provided that $\mathrm{bo}_{\langle\bm{\ell}\rangle}\,\bm{C}$ is at most $b$.\footnote{It should be noted that one can also ask whether $\bm{\hat{g}}$ is more special than $\bm{g}$. For a subset of the cases studied here, this question has been addressed by \cite{ortaggio-warp}.} For ease of expression, we will call this condition a \emph{Weyl alignment preserving Kaluza--Klein lift} to vacuum.

\subsection{Type I and II Kaluza--Klein reductions}

Requiring the corresponding $(D+1)$-dimensional Riemann tensor components to vanish, we arrive at the following results for Kaluza--Klein reductions of types I and II, where for the sake of brevity, we establish the following function of the scalar field:
\begin{equation}
	\Phi :=
	\begin{cases}
		\phi & \text{for } \beta = 2\alpha\,,\\[2pt]
		\frac{\mathrm{e}^{(\beta-2\alpha)\phi}}{\beta-2\alpha} & \text{otherwise.}
	\end{cases}
\end{equation}

\begin{proposition}\label{stat:I}
	Let $D \ge 4$. Let $\bm{\ell}$ be a WAND of a $D$-dimensional spacetime. Suppose that this spacetime is a Kaluza--Klein reduction of a $(D+1)$-dimensional vacuum spacetime. Then $\bm{\hat{\ell}}$ is a WAND of this $(D+1)$-dimensional spacetime, iff the following holds locally:
	\begin{subequations}\label{eq:conditions-I}
	\begin{align}
		\label{eq:conditions-Ia}
		\Phi\_{;(0)(0)} &= 0\,,\\
		\label{eq:conditions-Ib}
		\mathcal{F}\_{(0)(i)} &= 0\,,\\
		\label{eq:conditions-Ic}
		\mathcal{F}\_{(i)(j)} \kappa\^{(j)} &= 0\,,\\
		\label{eq:conditions-Id}
		\mathcal{F}\_{(0)(1)} \kappa\^{(i)} &= 0\,.
	\end{align}
	\end{subequations}
\end{proposition}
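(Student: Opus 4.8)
The plan is to characterise $\bm{\hat\ell}$ being a WAND by the vanishing of the boost-weight-$2$ frame components of $\bm{\hat C}$, and then to read off the conditions from the reduction formulas \eqref{eq:weyl-KK-a} and \eqref{eq:weyl-KK-b}. By Table~\ref{tab:weyl-generators} the boost-weight-$2$ part of $\bm{\hat C}$ is generated by $\hat{C}\_{[0][i][0][j]}$ and $\hat{C}\_{[0][i][0][z]}$, so $\bm{\hat\ell}$ is a WAND iff both of these vanish. Throughout I would freely use the null-frame identities $g\_{(0)(0)} = g\_{(0)(i)} = 0$ and the antisymmetry $\mathcal{F}\_{(0)(0)} = 0$, together with the hypothesis that $\bm{\ell}$ is a WAND of the reduced spacetime, which is exactly $C\_{(0)(i)(0)(j)} = 0$.

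First I would handle $\hat{C}\_{[0][i][0][j]}$. Substituting $[0][i][0][j]$ into \eqref{eq:weyl-KK-a} and expanding the $\owedge$ products, every term carrying $g\_{(0)(0)}$ or $g\_{(0)(i)}$ drops out; the final $\owedge g\_{(c)(d)}$ term survives only through its $g\_{(j)(i)} = \delta_{ij}$ contraction, and the $\mathcal{F}$-terms collapse to $-\tfrac{3}{4}K\,\mathcal{F}\_{(0)(i)}\mathcal{F}\_{(0)(j)}$ with $K := \mathrm{e}^{(-2\alpha+2\beta)\phi} > 0$. Since $C\_{(0)(i)(0)(j)} = 0$, the equation $\hat{C}\_{[0][i][0][j]} = 0$ takes the form $-\tfrac{3}{4}K\,\mathcal{F}\_{(0)(i)}\mathcal{F}\_{(0)(j)} + c\,\delta_{ij} = 0$, where the scalar $c$ gathers $\mathcal{F}^{2}\_{(0)(0)} = \sum_{k}(\mathcal{F}\_{(0)(k)})^{2}$ and the second-derivative combination that the definition of $\Phi$ packages as $\Phi\_{;(0)(0)}$ (the $\beta = 2\alpha$ case being recovered as a limit). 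Now $\mathcal{F}\_{(0)(i)}\mathcal{F}\_{(0)(j)}$ has rank at most one while $\delta_{ij}$ has rank $D-2 \ge 2$; this rank mismatch, which is the one place $D \ge 4$ is used, forces both $c = 0$ and $\mathcal{F}\_{(0)(i)} = 0$. Feeding $\mathcal{F}\_{(0)(i)} = 0$ back into $c = 0$ leaves $\Phi\_{;(0)(0)} = 0$ because $\beta \neq 0$. This proves $\hat{C}\_{[0][i][0][j]} = 0$ is equivalent to \eqref{eq:conditions-Ia} and \eqref{eq:conditions-Ib}.

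Next I would treat $\hat{C}\_{[0][i][0][z]}$ through \eqref{eq:weyl-KK-b}. Under the same null-frame identities its algebraic part reduces to $3(\beta-\alpha)\phi\_{;(0)}\mathcal{F}\_{(0)(i)}$, which vanishes once \eqref{eq:conditions-Ib} is imposed, so all content sits in the derivative term $\mathcal{F}\_{(0)(i);(0)}$. The key step is to rewrite this frame component of a covariant derivative as the frame directional derivative of $\mathcal{F}\_{(0)(i)}$ minus the Ricci-rotation-coefficient terms; the directional derivative vanishes by \eqref{eq:conditions-Ib}, and the connection terms, evaluated with $\nabla_{\bm{\ell}}\bm{\ell} = \bm{\kappa}$, $L\_{(i)(0)} = \kappa\^{(i)}$, $\kappa\^{(1)} = 0$ and again $\mathcal{F}\_{(0)(j)} = 0$, collapse to
\begin{equation*}
	\mathcal{F}\_{(0)(i);(0)} = \kappa\^{(j)} \mathcal{F}\_{(i)(j)} + \kappa\^{(i)} \mathcal{F}\_{(0)(1)}\,.
\end{equation*}
Hence $\hat{C}\_{[0][i][0][z]} = 0$ is equivalent to $\kappa\^{(j)} \mathcal{F}\_{(i)(j)} + \kappa\^{(i)} \mathcal{F}\_{(0)(1)} = 0$.

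Finally I would split this single vector equation into \eqref{eq:conditions-Ic} and \eqref{eq:conditions-Id}. Contracting with $\kappa\^{(i)}$ annihilates the first term by antisymmetry of $\mathcal{F}$ and leaves $\mathcal{F}\_{(0)(1)}\sum_{i}(\kappa\^{(i)})^{2} = 0$; since the spatial metric is positive definite, either all $\kappa\^{(i)} = 0$, whereupon \eqref{eq:conditions-Ic} and \eqref{eq:conditions-Id} both hold trivially, or $\mathcal{F}\_{(0)(1)} = 0$, which is \eqref{eq:conditions-Id} and reduces the equation to \eqref{eq:conditions-Ic}; the converse implication is immediate. Collecting the two paragraphs gives the stated equivalence. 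I expect the main obstacle to be the Ricci-rotation-coefficient re-expansion of $\mathcal{F}\_{(0)(i);(0)}$, where one must resist treating the frame-component derivative as a plain directional derivative of the vanishing function $\mathcal{F}\_{(0)(i)}$ and instead track the surviving $\kappa\^{(i)}$ and $\mathcal{F}\_{(0)(1)}$ contributions correctly; the rank argument, by contrast, is short but indispensable, as it is what ties the result to $D \ge 4$.
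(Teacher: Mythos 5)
Your proposal is correct and follows essentially the same route as the paper: vanishing of the b.w.~2 generators $\hat{C}\_{[0][i][0][j]}$, $\hat{C}\_{[0][i][0][z]}$ from \eqref{eq:weyl-KK-a}--\eqref{eq:weyl-KK-b}, the same Leibniz re-expansion $\mathcal{F}\_{(0)(i);(0)} = \left(\mathcal{F}\_{(i)(j)} + \mathcal{F}\_{(0)(1)} g\_{(i)(j)}\right)\kappa\^{(j)}$, and the same final split into \eqref{eq:conditions-Ic}--\eqref{eq:conditions-Id}. Your rank argument and your contraction with $\kappa\^{(i)}$ are just in-line versions of what the paper phrases as the traceless/trace decomposition and lemma \ref{stat:kernel-decomposition}, respectively, so the two proofs coincide in substance.
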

Before we prove the proposition, let's recall an elementary linear algebra observation:
\begin{lemma}
\label{stat:kernel-decomposition}
	The kernel of a real matrix with vanishing symmetric traceless part, is a subset of the intersection of the kernels of its symmetric and its antisymmetric part.
\end{lemma}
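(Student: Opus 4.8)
The plan is to convert the hypothesis into a statement about definiteness. Write the real $n\times n$ matrix as $M = S + A$, with symmetric part $S = \tfrac12(M+M^{\top})$ and antisymmetric part $A = \tfrac12(M-M^{\top})$. The assumption that the symmetric traceless part of $M$ vanishes is exactly the statement that $S = \lambda I$ with $\lambda = \tfrac1n\operatorname{tr}M$; thus $S$ is a scalar multiple of the identity, and in particular semidefinite (zero when $\lambda=0$, definite otherwise). This is the structural fact that drives the inclusion, and it is what would fail for a generic symmetric part.

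With this in hand, I would take an arbitrary $v\in\ker M$ and read off information from the associated quadratic form. Since $A^{\top}=-A$, the scalar $v^{\top}Av$ equals its own transpose $v^{\top}A^{\top}v=-v^{\top}Av$ and therefore vanishes, so
\[
0 = v^{\top}Mv = v^{\top}Sv + v^{\top}Av = v^{\top}Sv = \lambda\,v^{\top}v.
\]
If $\lambda\neq0$ this forces $v^{\top}v=0$ and hence $v=0$; if $\lambda=0$ then $S=0$. In either case $Sv=0$, that is $v\in\ker S$. Substituting back, $Av = Mv - Sv = 0$, so $v\in\ker A$ as well. Thus every $v\in\ker M$ lies in $\ker S\cap\ker A$, which is the asserted inclusion.

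I do not expect a real obstacle here. The single step deserving a word of justification is the passage from $v^{\top}Sv=0$ to $Sv=0$; it is immediate in our situation because $S$ is a multiple of the identity, and more generally it holds for any semidefinite $S$—semidefiniteness being precisely what the vanishing of the symmetric traceless part provides.
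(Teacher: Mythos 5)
Your proof is correct, and there is nothing in the paper to compare it against: lemma \ref{stat:kernel-decomposition} is invoked as ``an elementary linear algebra observation'' and stated without proof, immediately followed by the proof of proposition \ref{stat:I} which uses it. Your argument supplies the missing details in the natural way: the hypothesis gives $S = \lambda I$, antisymmetry of $A$ kills the cross term in $v^{\top} M v$, and the two cases $\lambda = 0$ (where $S = 0$) and $\lambda \neq 0$ (where $v^{\top}v = 0$ forces $v = 0$) both yield $Sv = 0$ and hence $Av = Mv - Sv = 0$. An equivalent shortcut: for $\lambda \neq 0$ the matrix $M = \lambda I + A$ is automatically invertible, because a real antisymmetric matrix has purely imaginary spectrum, so $\ker M$ is trivial and the inclusion is vacuous; the only substantive case is $\lambda = 0$, where the statement is immediate --- this matches the paper's use of the lemma, where the interesting conclusion is that $\mathcal{F}_{(0)(1)}$ and $\kappa^{(i)}$ cannot both be nonzero. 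Two minor remarks on your write-up: ``definite'' should strictly read ``positive or negative definite'' depending on the sign of $\lambda$ (harmless, since you never use definiteness, only $S = \lambda I$); and the closing claim that semidefiniteness is ``precisely'' what the hypothesis provides overstates it --- $S = \lambda I$ is strictly stronger than semidefiniteness --- although the generalized lemma you gesture at (a semidefinite symmetric part suffices, via $v^{\top} S v = 0 \Rightarrow Sv = 0$) is indeed true and would work just as well.
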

\begin{proof}[Proof of proposition \ref{stat:I}]
	The relevant equations from proposition \ref{stat:weyl-KK} correspond to the \bw2 column of table \ref{tab:weyl-generators}:
	\begin{subequations}
	\begin{align}
		\mathrm{e}^{2\alpha\phi} \hat{C}\_{[0][i][0][j]} ={}& C\_{[0][i][0][j]} + \tfrac{\beta \mathrm{e}^{(2\alpha-\beta)\phi}}{(D-2)(\beta-2\alpha)} \left( \mathrm{e}^{(-2\alpha+\beta)\phi} \right)\_{;(0)(0)} g\_{(i)(j)} + {}\nonumber\\
		& {} + \tfrac{\mathrm{e}^{(-2\alpha+2\beta)\phi}}{2(D-2)} \mathcal{F}^{2}\_{(0)(0)} g\_{(i)(j)} - \tfrac{3 \mathrm{e}^{(-2\alpha+2\beta)\phi}}{4} \mathcal{F}\_{(0)(i)} \mathcal{F}\_{(0)(j)}\,,\\
		\hat{C}\_{[0][i][0][z]} ={}& \mathord{-}\tfrac{\mathrm{e}^{(-3\alpha+\beta)\phi}}{2} \left(3 (\beta-\alpha) \phi\_{;(0)} \mathcal{F}\_{(0)(i)} + \mathcal{F}\_{(0)(i);(0)} \right).
	\end{align}
	\end{subequations}

	Given the algebraic specialness assumption on $\bm{C}$, the traceless part of the equation
	\begin{equation}
		\hat{C}\_{[0][i][0][j]} = 0
	\end{equation}
	is equivalent to the condition \eqref{eq:conditions-Ib}.\footnote{This wouldn't be necessarily true for $D=3$, which we don't consider here.} The trace of that equation is then equivalent to the condition \eqref{eq:conditions-Ia}. The equation
	\begin{equation}
		\hat{C}\_{[0][i][0][z]} = 0
	\end{equation}
	is then equivalent to
	\begin{equation}
		\mathcal{F}\_{(0)(i);(0)} = 0\,.
	\end{equation}
	However, thanks to \eqref{eq:conditions-Ib} we can use the Leibniz rule to express $\mathcal{F}\_{(0)(i);(0)}$ using derivatives only on the frame vectors:
	\begin{align}
		\mathcal{F}\_{(0)(i);(0)} &= \left(\mathcal{F}_{ab} \ell^{a} e\_{(i)}^{b}\right)\_{\!;(0)} - \mathcal{F}_{ab} \left(\ell^{a} e\_{(i)}^{b}\right)\_{\!;(0)} = -\mathcal{F}\_{(a)(i)} \kappa\^{(a)} - \mathcal{F}\_{(0)(1)} \varepsilon\^{(1)}_{b} \left( e\_{(i)}^{b} \right)\_{\!;(0)} = {}\nonumber\\
		 &= \left( \mathcal{F}\_{(i)(j)} + \mathcal{F}\_{(0)(1)} g\_{(i)(j)} \right) \kappa\^{(j)}.
	\end{align}
	Equivalence with conditions \eqref{eq:conditions-Ic} and \eqref{eq:conditions-Id} is then a consequence of lemma \ref{stat:kernel-decomposition}.
\end{proof}
\begin{remark}
	The condition \eqref{eq:conditions-Ia} is equivalent to claiming that
	\begin{equation}
		\mathrm{bo}_{\langle\bm{\ell}\rangle}\,\bm{\nabla}\bm{\nabla}\Phi < 2\,,
	\end{equation}
	while the condition \eqref{eq:conditions-Ib} is equivalent to claiming that
	\begin{equation}
		\mathrm{bo}_{\langle\bm{\ell}\rangle}\,\bm{\mathcal{F}} < 1\,.
	\end{equation}
	The conditions \eqref{eq:conditions-Ic} and \eqref{eq:conditions-Id} can be replaced by one condition
	\begin{equation}
		\mathrm{bo}_{\langle\bm{\ell}\rangle}\,\bm{\nabla}\bm{\mathcal{F}} < 2\,.
	\end{equation}
\end{remark}

\begin{proposition}\label{stat:II}
	Let $D \ge 4$. Let $\bm{\ell}$ be a mWAND of a $D$-dimensional spacetime. Suppose that this spacetime is a Kaluza--Klein reduction of a $(D+1)$-dimensional vacuum spacetime. Then $\bm{\hat{\ell}}$ is a mWAND of this $(D+1)$-dimensional spacetime, iff the following holds locally:
	\begin{subequations}\label{eq:conditions-II}
	\begin{align}
		\label{eq:conditions-IIa}
		\Phi\_{;(0)(i)} &= 0\,,\\
		\label{eq:conditions-IIb}
		\Big(\mathcal{F}\_{(i)(k)} + \mathcal{F}\_{(0)(1)} g\_{(i)(k)}\Big)\,\Big(L\^{(k)}\_{\hphantom{(k)}(j)} + \alpha\phi\_{;(0)} \delta\^{(k)}\_{(j)}\Big) &= \beta \phi\_{;(0)} \mathcal{F}\_{(i)(j)}\,,
	\end{align}
	\end{subequations}
	together with the conditions \eqref{eq:conditions-I}.
\end{proposition}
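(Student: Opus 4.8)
The plan is to reduce the claim, via Proposition~\ref{stat:I}, to the vanishing of the three boost-weight $+1$ generating Weyl frame components of Table~\ref{tab:weyl-generators}, namely $\hat{C}\_{[0][i][j][k]}$, $\hat{C}\_{[0][z][i][z]}$ and $\hat{C}\_{[0][i][j][z]}$. Since a mWAND is a type~II aligned direction, $\bm{\hat{\ell}}$ is a mWAND exactly when $\mathrm{bo}_{\langle\bm{\hat{\ell}}\rangle}\,\bm{\hat{C}} \le 0$, i.e. when all boost-weight $+2$ and $+1$ components vanish; and these generators exhaust the $+1$ subspace up to Weyl symmetries. The hypothesis that $\bm{\ell}$ is a mWAND means $\mathrm{bo}_{\langle\bm{\ell}\rangle}\,\bm{C} \le 0$, so $\bm{\ell}$ is in particular a WAND and the boost-weight $+1$ component $C\_{(0)(i)(j)(k)}$ vanishes. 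Proposition~\ref{stat:I} already shows the boost-weight $+2$ part vanishes iff \eqref{eq:conditions-I} holds, so I would assume \eqref{eq:conditions-I} and prove that the three $+1$ components vanish iff \eqref{eq:conditions-IIa} and \eqref{eq:conditions-IIb} hold.

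I would evaluate the three components by specialising \eqref{eq:weyl-KK-c}, \eqref{eq:weyl-KK-a} and \eqref{eq:weyl-KK-b} and simplifying with the consequences of \eqref{eq:conditions-I}: $\mathcal{F}\_{(0)(i)} = 0$, $g\_{(0)(i)} = 0$, hence $\mathcal{F}^{2}\_{(0)(i)} = 0$, together with $\left(\mathrm{e}^{(-2\alpha+\beta)\phi}\right)\_{;(0)(i)} = (\beta-2\alpha)\,\Phi\_{;(0)(i)}$ (the $\Phi$ notation being defined precisely so that this, and the prefactor $\tfrac{\beta}{2\alpha-\beta}$, remain finite in the limit $\beta\to2\alpha$). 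For $\hat{C}\_{[0][z][i][z]}$, the $\mathcal{F}^{2}\_{(0)(i)}$ and $g\_{(0)(i)}$ terms of \eqref{eq:weyl-KK-c} drop and the component becomes proportional to $\Phi\_{;(0)(i)}$, so its vanishing is equivalent to \eqref{eq:conditions-IIa}. For $\hat{C}\_{[0][i][j][k]}$, the $\bm{C}$ contribution $C\_{(0)(i)(j)(k)}$ vanishes by the mWAND hypothesis, the $\mathcal{F}$-bilinear bracket in \eqref{eq:weyl-KK-a} vanishes since every term carries a factor $\mathcal{F}\_{(0)(\cdot)}$, and the surviving $\owedge g\_{(j)(k)}$ term is again proportional to $\Phi\_{;(0)(\cdot)}$; this reproduces \eqref{eq:conditions-IIa} and yields no new condition.

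The crux is $\hat{C}\_{[0][i][j][z]}$, from \eqref{eq:weyl-KK-b}. After discarding the terms killed by $\mathcal{F}\_{(0)(i)} = 0$ and $g\_{(0)(j)} = 0$, and using $g\^{(e)(f)} \phi\_{;(e)} \mathcal{F}\_{(0)(f)} = \phi\_{;(0)} \mathcal{F}\_{(0)(1)}$, one is left with
\begin{equation*}
	\mathord{-}2\mathrm{e}^{(3\alpha-\beta)\phi}\,\hat{C}\_{[0][i][j][z]} = (\beta-\alpha)\,\phi\_{;(0)}\,\mathcal{F}\_{(j)(i)} + \alpha\,\phi\_{;(0)}\,\mathcal{F}\_{(0)(1)}\,g\_{(i)(j)} + \mathcal{F}\_{(0)(i);(j)}\,.
\end{equation*}
The remaining work mirrors the treatment of $\mathcal{F}\_{(0)(i);(0)}$ in the proof of Proposition~\ref{stat:I}: since $\mathcal{F}\_{(0)(i)}$ vanishes identically, $\mathcal{F}\_{(0)(i);(j)} = -\mathcal{F}_{ab}\,(\ell^{a} e\_{(i)}^{b})\_{;(j)}$, and expanding the covariant derivatives of $\bm{\ell}$ and $\bm{e}\_{(i)}$ into Ricci rotation coefficients, using $\mathcal{F}\_{(0)b} = \mathcal{F}\_{(0)(1)}\,\ell_{b}$, $\ell\_{(i)} = 0$ and $L\_{(0)(j)} = 0$, gives
\begin{equation*}
	\mathcal{F}\_{(0)(i);(j)} = \bigl(\mathcal{F}\_{(i)(k)} + \mathcal{F}\_{(0)(1)}\,g\_{(i)(k)}\bigr)\,L\^{(k)}\_{\hphantom{(k)}(j)}\,.
\end{equation*}

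I would finish by absorbing the algebraic terms into this factor: rewriting $\alpha\,\phi\_{;(0)}\,\mathcal{F}\_{(0)(1)}\,g\_{(i)(j)}$ as $\alpha\,\phi\_{;(0)}\bigl(\mathcal{F}\_{(i)(k)} + \mathcal{F}\_{(0)(1)}\,g\_{(i)(k)}\bigr)\,\delta\^{(k)}\_{(j)} - \alpha\,\phi\_{;(0)}\,\mathcal{F}\_{(i)(j)}$ collapses the $L$ and $\alpha\phi\_{;(0)}\delta$ pieces into the single factor $L\^{(k)}\_{\hphantom{(k)}(j)} + \alpha\phi\_{;(0)}\delta\^{(k)}\_{(j)}$, while the leftover $\mathcal{F}\_{(i)(j)}$ coefficients $(\alpha-\beta)-\alpha = -\beta$ (after $\mathcal{F}\_{(j)(i)} = -\mathcal{F}\_{(i)(j)}$) move to the right-hand side; the equation $\hat{C}\_{[0][i][j][z]} = 0$ then reads exactly \eqref{eq:conditions-IIb}. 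Collecting the three equivalences, together with \eqref{eq:conditions-I} from Proposition~\ref{stat:I}, proves the proposition. I expect the only delicate points to be the antisymmetry and index-raising bookkeeping in the re-expression of $\mathcal{F}\_{(0)(i);(j)}$, and the regrouping that produces the precise shifted factor in \eqref{eq:conditions-IIb}; everything else is routine specialisation of \eqref{eq:weyl-KK}.
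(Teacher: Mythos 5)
Your proposal is correct and takes essentially the same route as the paper's proof: working modulo \eqref{eq:conditions-I}, you evaluate the boost-weight $+1$ generators of table \ref{tab:weyl-generators} via proposition \ref{stat:weyl-KK}, identify $\hat{C}_{[0][z][i][z]}=0$ with \eqref{eq:conditions-IIa}, and convert $\hat{C}_{[0][i][j][z]}=0$ into \eqref{eq:conditions-IIb} using exactly the paper's Leibniz-rule identity $\mathcal{F}_{(0)(i);(j)} = \mathcal{F}_{(i)(k)} L^{(k)}_{\hphantom{(k)}(j)} + \mathcal{F}_{(0)(1)} L_{(i)(j)}$ of \eqref{eq:F-0ij-leibniz}, followed by the same regrouping into the shifted factor $L^{(k)}_{\hphantom{(k)}(j)} + \alpha\phi_{;(0)}\delta^{(k)}_{(j)}$. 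The only difference is cosmetic: you check by direct computation that $\hat{C}_{[0][i][j][k]}$ reproduces \eqref{eq:conditions-IIa} and yields no new condition, whereas the paper simply notes that these components vanish identically once the other boost-weight $\ge 1$ components do.
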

\begin{proof}
	We again use proposition \ref{stat:weyl-KK}, this time requiring to vanish both \bw2 and $1$ components of $\bm{\hat{C}}$. Under the assumptions of \eqref{eq:conditions-I}, two additional components from table \ref{tab:weyl-generators} are:
	\begin{subequations}\label{eq:conditions-II-proof}
	\begin{align}
		\label{eq:conditions-II-proof-a}
		\hat{C}\_{[0][z][i][z]} &= \tfrac{\beta \mathrm{e}^{-\beta\phi}}{2\alpha-\beta} \left( \mathrm{e}^{(-2\alpha+\beta)\phi}\right)\_{\!;(0)(i)},\\
		\label{eq:conditions-II-proof-b}
		\hat{C}\_{[0][i][j][z]} &= -\tfrac{\mathrm{e}^{(-3\alpha+\beta)\phi}}{2} \Big( (\alpha-\beta)\phi\_{;(0)} \mathcal{F}\_{(i)(j)} + \alpha \phi\_{;(0)} \mathcal{F}\_{(0)(1)} g\_{(i)(j)} + \mathcal{F}\_{(0)(i);(j)} \Big)\,.
	\end{align}
	\end{subequations}
	The components $\hat{C}\_{[0][i][j][k]}$ vanish identically, assuming that the other $\text{b.w.} \ge 1$ components of $\bm{\hat{C}}$ vanish.

	The equation
	\begin{equation}
		\hat{C}\_{[0][z][i][z]} = 0
	\end{equation}
	is equivalent to the condition \eqref{eq:conditions-IIa}.

	Similarly as in the proof of proposition \ref{stat:I}, we can exploit \eqref{eq:conditions-Ib} while employing the Leibniz rule to express $\mathcal{F}\_{(0)(i);(j)}$ using derivatives only on the frame vectors:
	\begin{equation}\label{eq:F-0ij-leibniz}
		\mathcal{F}\_{(0)(i);(j)} = \mathcal{F}\_{(i)(k)} L\^{(k)}\_{\hphantom{(k)}(j)} + \mathcal{F}\_{(0)(1)} L\_{(i)(j)}\,.
	\end{equation}
	Substituting to \eqref{eq:conditions-II-proof-b}, we arrive at the condition \eqref{eq:conditions-IIb}.
\end{proof}
\begin{remark}
	The conditions \eqref{eq:conditions-Ia}, \eqref{eq:conditions-IIa} are equivalent to claiming that
	\begin{equation}
		\mathrm{bo}_{\langle\bm{\ell}\rangle}\,\bm{\nabla}\bm{\nabla}\Phi < 1\,.
	\end{equation}
\end{remark}
\begin{remark}
	In the following Ricci identity:
	\begin{equation}
		\Phi\_{;(0)(0)(i)} - \Phi\_{;(0)(i)(0)} = R\^{(a)}\_{\hphantom{(a)}(0)(0)(i)} \Phi\_{;(a)}\,,
	\end{equation}
	we can express the left hand side using the Leibniz rule, thanks to \eqref{eq:conditions-Ia} and \eqref{eq:conditions-IIa}:
	\begin{subequations}
	\begin{align}
		\Phi\_{;(0)(0)(i)} &= 0\,,\\
		\Phi\_{;(0)(i)(0)} &= -\left(\Phi\_{;(i)(j)} - \Phi\_{;(0)(1)} g\_{(i)(j)}\right) \kappa\^{(j)}\,,
	\end{align}
	\end{subequations}
	while in the Ricci decomposition of the right hand side, we get rid of $\bm{C}$ due to the algebraic speciality condition and express the $R_{ab}$ using the Einstein equation \eqref{eq:reduced-einstein-eq-a}:
	\begin{align}
		R\_{(a)(0)(0)(i)} &= \tfrac{1}{D-2} \left( g\_{(a)(0)} R\_{(0)(i)} - g\_{(a)(i)} R\_{(0)(0)} \right) = {}\nonumber\\
		{} &= \alpha (\alpha-\beta) \left( g\_{(a)(0)} \phi\_{;(0)} \phi\_{;(i)} - g\_{(a)(i)} \phi\_{;(0)} \phi\_{;(0)} \right),
	\end{align}
	arriving at the following secondary constraint:
	\begin{equation}\label{eq:II-secondary-constraint}
		\left(\Phi\_{;(i)(j)} - \Phi\_{;(0)(1)} g\_{(i)(j)}\right) \kappa\^{(j)} = 0\,.
	\end{equation}
\end{remark}
\begin{remark}
	Antisymmetrizing \eqref{eq:conditions-IIb}, we get:
	\begin{equation}\label{eq:conditions-II-consequence-1}
		\Big(\mathrm{e}^{(2\beta-2\alpha)\phi} \mathcal{F}\_{(i)(j)}\Big)\vphantom{\mathcal{F}}\_{;(0)} = 0\,.
	\end{equation}
	On the other hand, from the trace of \eqref{eq:conditions-IIb}, we obtain:
	\begin{equation}\label{eq:conditions-II-consequence-2}
		\Big(\mathrm{e}^{(3\beta-2\alpha)\phi} \mathcal{F}\_{(0)(1)}\Big)\vphantom{\mathcal{F}}\_{;(0)} = 0\,,
	\end{equation}
	using \eqref{eq:reduced-einstein-eq-b}.
\end{remark}
\begin{remark}
	The conditions \eqref{eq:conditions-I} and \eqref{eq:conditions-II} imply that
	\begin{equation}
		\mathcal{F}\_{(a)(i)} \kappa\^{(i)} = 0\,.
	\end{equation}
\end{remark}
\begin{proof}
	Thanks to \eqref{eq:conditions-Ib} and \eqref{eq:conditions-Ic}, it only remains to show that $\mathcal{F}\_{(1)(i)} \kappa\^{(i)} = 0$. Without loss of generality, we can assume $\mathcal{F}\_{(0)(1)} = 0$, thanks to \eqref{eq:conditions-Id}. But under such assumption, we can use the Leibniz rule on \eqref{eq:conditions-II-consequence-2}, obtaining
	\begin{equation}\label{eq:F-010-leibniz}
		0 = \mathcal{F}\_{(0)(1);(0)} = \mathcal{F}\_{(1)(i)} \kappa\^{(i)}.
	\end{equation}
\end{proof}
\begin{remark}
	The special case of $\bm{\mathcal{A}} = 0$ and $\phi = \text{const.}$ was studied earlier in \cite{ortaggio-warp} as a Killing-case subset of metrics with a warped extra dimension. For this case, the results correspond with those covered by propositions \ref{stat:I} and \ref{stat:II}: both \eqref{eq:conditions-I} and \eqref{eq:conditions-II} are trivially satisfied.
\end{remark}
\begin{example}
	Algebraic specialness of Kaluza--Klein lift of the 4-di\-men\-sion\-al Robinson--Trautman spacetime is discussed in \cite[section 3.2]{real-2013}:
	\begin{equation}
		\bm{g} = -U\,\bm{\mathrm{d}}u^{2} - \bm{\mathrm{d}}u\!\vee\!\bm{\mathrm{d}}r + \tfrac{r^2}{G^2}\,(\bm{\mathrm{d}}x^2 + \bm{\mathrm{d}}y^2)\,,\\
	\end{equation}
	using Kaluza--Klein fields in the following form:
	\begin{subequations}
	\begin{align}
		\mathcal{A} &= C\,\bm{\mathrm{d}}u\,,\\
		\phi &= 0\,,
	\end{align}
	\end{subequations}
	where
	\begin{subequations}
	\begin{align}
		U &= -\tfrac{k}{r} + G^{2} \Delta\ln G - 2r\,(\ln G)_{,u}\,,\\
		0 &= k_{,x} = k_{,y} = k_{,r} = G_{,r} = C_{,r}
	\end{align}
	\end{subequations}
	and where we have introduced the spatial Laplace operator:
	\begin{equation}
		\Delta := \tfrac{\partial^{2}}{\partial x^{2}} + \tfrac{\partial^{2}}{\partial y^{2}}\,.
	\end{equation}
	Suppose that $\bm{g}$ is a Kaluza--Klein reduction of a vacuum spacetime. That means that the equations \eqref{eq:reduced-einstein-eq} must hold:
	\begin{subequations}
	\begin{align}
		G^{2} \left( C_{,x} C_{,x} + C_{,y} C_{,y} \right) &= -2k_{,u} + 6 k (\ln G)_{,u} + G^{2} \Delta(G^2 \Delta \ln G)\,,\\
		\Delta C &= 0\,.
	\end{align}
	\end{subequations}
	The null vector
	\begin{equation}
		\bm{\ell} := \bm{\partial}_{r}
	\end{equation}
	is a mWAND of $\bm{g}$. We can now introduce the following null frame:
	\begin{subequations}
	\begin{align}
		\bm{e}\_{(0)} &= \bm{\partial}_{r}\,,\\
		\bm{e}\_{(1)} &= -\bm{\partial}_{u} + \tfrac{U}{2} \bm{\partial}_{r}\,,\\
		\bm{e}\_{(2)} &= \tfrac{G}{r} \bm{\partial}_{x}\,,\\
		\bm{e}\_{(3)} &= \tfrac{G}{r} \bm{\partial}_{y}
	\end{align}
	\end{subequations}
	and see that the equations \eqref{eq:conditions-I} and \eqref{eq:conditions-II} hold, because the only independent nonzero components of the Maxwell tensor are $\mathcal{F}\_{(1)(i)}$; specifically:
	\begin{subequations}
	\begin{align}
		\mathcal{F}\_{(1)(2)} &= \frac{G}{r}\,C_{,x}\,,\\*
		\mathcal{F}\_{(1)(3)} &= \frac{G}{r}\,C_{,y}\,.
	\end{align}
	\end{subequations}
	Therefore, according to proposition \ref{stat:II}, $\bm{\hat{\ell}} := \bm{\ell}$ is a mWAND\footnote{We will see later in the equation \eqref{eq:conditions-III-d} that $\bm{\ell}$ and $\bm{\hat{\ell}}$ can only be both 3-WANDs (each of its corresponding spacetime), if $\mathcal{F}_{ab} = 0$.} of $\bm{\hat{g}}$\,.
\end{example}

\subsection{Type III Kaluza--Klein reduction}\label{chap:III}

Having a method for deducing the Kaluza--Klein reduction alignment conditions, let's work out the analog for type III.

\begin{proposition}\label{stat:III}
	Let $D \ge 4$. Let $\bm{\ell}$ be a 3-WAND of a $D$-dimensional spacetime. Suppose that this spacetime is a Kaluza--Klein reduction of a $(D+1)$-dimensional vacuum spacetime. Then $\bm{\hat{\ell}}$ is a 3-WAND of this $(D+1)$-dimensional spacetime, iff the following holds locally:
	\begin{subequations}\label{eq:III}
	\begin{align}
		\label{eq:III-a}
		\hat{C}\_{[i][j][k][l]} &= 0\,,\\
		\label{eq:III-b}
		\hat{C}\_{[i][j][k][z]} &= 0\,,\\
		\label{eq:III-c}
		\hat{C}\_{[0][i][1][j]} &= 0\,,\\
		\label{eq:III-d}
		\hat{C}\_{[0][1][i][z]} &= 0\,,
	\end{align}
	\end{subequations}
	together with conditions \eqref{eq:conditions-I} and \eqref{eq:conditions-II}.
\end{proposition}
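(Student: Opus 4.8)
The plan is to exploit the boost-weight decomposition of $\bm{\hat C}$ together with the layered structure of the preceding propositions. By definition, $\bm{\hat\ell}$ is a 3-WAND exactly when $\mathrm{bo}_{\langle\bm{\hat\ell}\rangle}\,\bm{\hat C} < 0$, equivalently when every frame component of $\hat C$ of boost weight $\ge 0$ vanishes. I would therefore organize the argument by working down the columns of Table \ref{tab:weyl-generators}, treating boost weights $2$, $1$ and $0$ in turn, and show that the union of the resulting conditions is precisely \eqref{eq:conditions-I}, \eqref{eq:conditions-II} and \eqref{eq:III}.

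First I would dispose of boost weights $2$ and $1$. Since a 3-WAND is in particular both a WAND and an mWAND (recall $\mathrm{III} \subset \mathrm{II} \subset \mathrm{I}$), the hypotheses of Propositions \ref{stat:I} and \ref{stat:II} are met for the given $\bm{\ell}$. Those propositions already assert that the boost weight $2$ and boost weight $1$ components of $\hat C$ vanish iff \eqref{eq:conditions-I} and \eqref{eq:conditions-II} hold; recall moreover that the component $\hat{C}\_{[0][i][j][k]}$ was shown in the proof of Proposition \ref{stat:II} to vanish automatically once the higher ones do. This contributes exactly the conditions \eqref{eq:conditions-I} and \eqref{eq:conditions-II} to the final list.

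It then remains to characterize vanishing of the boost weight $0$ part of $\hat C$. Reading off the corresponding column of Table \ref{tab:weyl-generators}, a generating set for these components is $\{\hat{C}\_{[i][j][k][l]},\ \hat{C}\_{[i][j][k][z]},\ \hat{C}\_{[0][i][1][j]},\ \hat{C}\_{[0][1][i][z]}\}$, so that by the Weyl symmetries every other boost weight $0$ component is a linear combination of these four. Hence the entire boost weight $0$ part vanishes iff these four components vanish, which is exactly \eqref{eq:III-a}--\eqref{eq:III-d}. Assembling the three boost-weight levels then yields the claimed equivalence in both directions: type III of $\bm{\hat C}$ forces all boost weight $\ge 0$ components to vanish and hence forces \eqref{eq:conditions-I}, \eqref{eq:conditions-II} and \eqref{eq:III}; conversely these conditions make every generator in the three columns vanish, so $\mathrm{bo}_{\langle\bm{\hat\ell}\rangle}\,\bm{\hat C} < 0$.

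The step I expect to require the most care is the completeness of the boost weight $0$ generating set: one must be certain that the four listed components really do span all such components modulo the Weyl symmetries (antisymmetry within each pair, interchange of pairs, the first Bianchi identity, and -- crucially -- the trace-free conditions, since these are what tie $\hat{C}\_{[0][i][1][j]}$, $\hat{C}\_{[i][j][k][l]}$ and $\hat{C}\_{[0][1][i][z]}$ together). This is precisely the content imported from \cite{alignment-special-tensors}, so the obstacle is really one of citing and matching conventions rather than new computation. A secondary point worth verifying is that the full type III hypothesis on $\bm{C}$ (not merely type II) is what guarantees the $D$-dimensional Weyl contributions $C\_{(0)(i)(1)(j)}$ and $C\_{(i)(j)(k)(l)}$ drop out when \eqref{eq:III-c} and \eqref{eq:III-a} are subsequently unpacked via Proposition \ref{stat:weyl-KK}.
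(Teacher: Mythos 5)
Your proposal is correct and is essentially the paper's own (largely implicit) argument: the proposition follows by combining Propositions \ref{stat:I} and \ref{stat:II} for the boost weight $2$ and $1$ levels (using $\text{III} \subset \text{II} \subset \text{I}$ to invoke them) with the fact that the four components in \eqref{eq:III} are exactly the b.w.\ $0$ generating set of Table \ref{tab:weyl-generators}, all other b.w.\ $0$ components being determined by Weyl symmetries. You also correctly identify that the heavier analysis following the proposition in section \ref{chap:III} (reducing \eqref{eq:III} to \eqref{eq:conditions-III}, where the full 3-WAND hypothesis on $\bm{C}$ is genuinely used) is a subsequent refinement, not part of the proof of this statement.
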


The system \eqref{eq:III} implies
\begin{subequations}\label{eq:III-consequences}
\begin{align}
	\label{eq:III-consequences-a}
	\hat{C}\_{[i][z][k][z]} &= 0\,,\\
	\label{eq:III-consequences-b}
	\hat{C}\_{[0][1][i][j]} &= 0\,,\\
	\label{eq:III-consequences-c}
	\hat{C}\_{[0][1][0][1]} &= 0\,,\\
	\label{eq:III-consequences-d}
	\hat{C}\_{[0][z][1][z]} &= 0\,,\\
	\label{eq:III-consequences-e}
	\hat{C}\_{[0][i][1][z]} &= 0\,,\\
	\label{eq:III-consequences-f}
	\hat{C}\_{[1][i][0][z]} &= 0\,,
\end{align}
\end{subequations}
due to Weyl tensor symmetries.

Suppose that \eqref{eq:conditions-I} and \eqref{eq:conditions-II} are met.

Under the assumptions demanded by proposition \ref{stat:III}, the system \eqref{eq:III} is equivalent to
\begin{subequations}\label{eq:conditions-III}
\begin{align}
	\label{eq:conditions-III-a}
	\mathcal{F}\_{(i)(j)} &= 0\,,\\
	\label{eq:conditions-III-b}
	\mathcal{F}\_{(0)(1)} &= 0\,,\\
	\label{eq:conditions-III-c}
	\mathcal{F}_{ab}\,\phi\_{;(0)} &= 0\,,\\
	\label{eq:conditions-III-d}
	\mathcal{F}_{ab}\,L\_{(i)(j)} &= 0\,,\\
	\label{eq:conditions-III-e}
	\Phi\_{;(i)(j)} - \Phi\_{;(0)(1)} g\_{(i)(j)} &= 0\,.
\end{align}
\end{subequations}

\begin{remark}
	Condition \eqref{eq:conditions-III-e} can be replaced by a more explicit equivalent:
	\begin{subequations}
	\begin{align}
		\label{eq:conditions-III-phi-01}
		\mathrm{e}^{(2\alpha-\beta)\phi} \Phi\_{;(0)(1)} &= -\alpha\,g^{ab} \phi_{;a} \phi_{;b}\,,\\
		\label{eq:conditions-III-phi-ij}
		\Phi\_{;(i)(j)} &= \tfrac{1}{D-2}\,g\^{(k)(l)} \Phi\_{;(k)(l)} g\_{(i)(j)}\,.
	\end{align}
	\end{subequations}
\end{remark}

\begin{remark}
	Note that for (weakly) geodetic $\bm{\ell}$ and vanishing Maxwell field, the type III conditions \eqref{eq:conditions-III} are already satisfied for type II, thanks to the secondary constraint \eqref{eq:II-secondary-constraint}.
\end{remark}

\subsubsection{Speciality of Maxwell field}

Conditions \eqref{eq:conditions-III-a} and \eqref{eq:conditions-III-b} are equivalent to claiming that $\mathrm{bo}_{\langle\bm{\ell}\rangle}\,\bm{\mathcal{F}} < 0$. To show how these two conditions follow from \eqref{eq:III}, we start with noticing that supposing \eqref{eq:conditions-Ib} is met, the equation \eqref{eq:III-consequences-b} is equivalent to
\begin{equation}\label{eq:III-F-at-least-N-intermediate}
	\mathcal{F}\_{(0)(1)} \mathcal{F}\_{(i)(j)} = 0\,.
\end{equation}
Now, we take a specific linear combination of \eqref{eq:III-consequences-c} and \eqref{eq:III-consequences-d} and simplify it using the $\scriptstyle \s{[z][z]}$ component of the vacuum Einstein equation \eqref{eq:reduced-einstein-eq-c}:
\begin{align}
	0 &= \frac{4}{3} (D-1)\,\mathrm{e}^{(4\alpha-2\beta)\phi} \left( (D-2) \hat{C}\_{[0][1][0][1]} - 2\hat{C}\_{[0][z][1][z]} \right) = {}\nonumber\\
	{} &= \mathcal{F}^{2} - (D-1)(D-4) \mathcal{F}\_{(0)(1)} \mathcal{F}\_{(0)(1)} = {}\nonumber\\
	{} &= g\^{(i)(j)} \mathcal{F}^{2}\_{(i)(j)} - (D-2)(D-3) \mathcal{F}\_{(0)(1)} \mathcal{F}\_{(0)(1)}\,,
\end{align}
where we employed \eqref{eq:conditions-Ib}. Considering \eqref{eq:III-F-at-least-N-intermediate}, we see that \eqref{eq:conditions-III-a} and \eqref{eq:conditions-III-b} follow.

\subsubsection{Geodeticity}

\begin{proposition}
	Conditions \eqref{eq:conditions-III-a}, \eqref{eq:conditions-III-b} and \eqref{eq:conditions-Ib} together with \eqref{eq:reduced-einstein-eq-b} imply that either $\bm{\mathcal{F}}$ vanishes or $\bm{\ell}$ is weakly geodetic:
	\begin{equation}\label{eq:III-geodeticity}
		\mathcal{F}_{ab}\,\kappa\^{(i)} = 0\,.
	\end{equation}
\end{proposition}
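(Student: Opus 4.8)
The plan is to repackage \eqref{eq:conditions-III-a}, \eqref{eq:conditions-III-b} and \eqref{eq:conditions-Ib} into the single geometric statement that $\bm{\mathcal{F}}$ is an aligned \emph{null} field. As already noted, these three conditions are equivalent to $\mathrm{bo}_{\langle\bm{\ell}\rangle}\,\bm{\mathcal{F}} < 0$, so the only surviving frame components are the $\mathcal{F}\_{(1)(i)}$; hence $\mathcal{F}_{ab}\,\ell^{a} = 0$ and
\begin{equation*}
	\mathcal{F}_{ab} = \ell_{a}\omega_{b} - \omega_{a}\ell_{b}\,,\qquad \omega_{a} := \mathcal{F}\_{(1)(i)} e\_{(i)a}\,,\qquad \ell^{a}\omega_{a} = 0\,.
\end{equation*}
From this form two purely algebraic identities follow at once and will carry the whole argument: the field is null, $\mathcal{F}^{2} = \mathcal{F}_{ab}\mathcal{F}^{ab} = 0$, and its square is rank one along $\bm{\ell}$,
\begin{equation*}
	\mathcal{F}_{ab}\mathcal{F}^{cb} = f^{2}\,\ell_{a}\ell^{c}\,,\qquad f^{2} := \mathcal{F}\_{(1)(i)}\mathcal{F}\_{(1)(i)} \ge 0\,.
\end{equation*}

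The key step is to differentiate the weighted square $\psi\,\mathcal{F}_{ab}\mathcal{F}^{cb}$, where $\psi := \mathrm{e}^{((D-4)\alpha+3\beta)\phi}$ is the weight appearing in \eqref{eq:reduced-einstein-eq-b}, and to evaluate its divergence in two ways. Splitting the derivative by the Leibniz rule so that the single power of $\psi$ rides on the factor controlled by the field equation,
\begin{equation*}
	\nabla_{c}\bigl(\psi\,\mathcal{F}_{ab}\mathcal{F}^{cb}\bigr) = \mathcal{F}_{ab}\,\nabla_{c}\bigl(\psi\,\mathcal{F}^{cb}\bigr) + \psi\,\mathcal{F}^{cb}\,\nabla_{c}\mathcal{F}_{ab}\,,
\end{equation*}
I would dispose of both terms on the right: the first vanishes directly by the Maxwell equation \eqref{eq:reduced-einstein-eq-b} (which, by antisymmetry, also reads $\nabla_{c}(\psi\,\mathcal{F}^{cb}) = 0$), while the second vanishes because closedness of $\bm{\mathcal{F}}$ (it equals $\bm{\mathrm{d}}\bm{\mathcal{A}}$, so $\mathcal{F}_{[ab;c]} = 0$) lets one antisymmetrize $\mathcal{F}^{cb}\nabla_{c}\mathcal{F}_{ab}$ into a multiple of $\nabla_{a}\mathcal{F}^{2}$, which is zero by nullity. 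Hence the left-hand side is identically zero.

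Evaluating the same divergence through the rank-one identity instead gives
\begin{equation*}
	0 = \nabla_{c}\bigl(\psi f^{2}\,\ell_{a}\ell^{c}\bigr) = \bigl(\psi f^{2}\bigr)\_{;(0)}\ell_{a} + \psi f^{2}\,\kappa_{a} + \psi f^{2}\,(\nabla_{c}\ell^{c})\,\ell_{a}\,,
\end{equation*}
with $\kappa_{a} = \ell_{a;c}\ell^{c}$ the non-geodeticity covector. Every term except $\psi f^{2}\kappa_{a}$ is proportional to $\ell_{a}$, so $\psi f^{2}\kappa_{a}$ must itself be a multiple of $\ell_{a}$. Contracting with $e\_{(i)}^{a}$ and using the null-frame orthogonality $\ell\_{(i)} = 0$ together with $\kappa_{a}e\_{(i)}^{a} = \kappa\^{(i)}$ annihilates the right-hand side, leaving $\psi f^{2}\kappa\^{(i)} = 0$; since $\psi > 0$ this is $f^{2}\kappa\^{(i)} = 0$. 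As $f^{2} = \mathcal{F}\_{(1)(i)}\mathcal{F}\_{(1)(i)}$ vanishes exactly where $\bm{\mathcal{F}}$ does, this is precisely the claimed dichotomy \eqref{eq:III-geodeticity}.

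The parts demanding care are the two null-field identities, in particular $\mathcal{F}^{2} = 0$, which needs \eqref{eq:conditions-III-a} to kill $\mathcal{F}\_{(i)(j)}$ and not merely alignment, and the weight bookkeeping in the divergence. The main conceptual point, and the one I expect to be the real obstacle, is that a direct $\bm{\ell}$-contraction of \eqref{eq:reduced-einstein-eq-b} only produces the single scalar $\mathcal{F}\_{(1)(i)}\kappa\^{(i)} = 0$; obtaining the full product $\mathcal{F}_{ab}\kappa\^{(i)} = 0$ requires the quadratic stress-tensor combination above, which is what exposes $\kappa_{a}$ with a free index rather than already contracted against $\omega$.
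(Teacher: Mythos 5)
Your proof is correct, and it takes a genuinely different route from the paper's. The paper stays inside its frame-component formalism: from the hypotheses and the Leibniz rule it first gets $\mathcal{F}_{(0)(i);(j)}=0$ (see \eqref{eq:III-F0ij}), then uses the $(i)(j)(0)$ component of closedness plus the Leibniz rule again to obtain the linear-dependence relation \eqref{eq:F-kappa-dependence}, and separately contracts the Maxwell equation \eqref{eq:reduced-einstein-eq-b} with $\bm{\ell}$ to obtain the orthogonality relation \eqref{eq:F-kappa-orthogonality}; the dichotomy follows by combining these two pointwise. You instead run the classical null-electromagnetic-field argument (the geodesy part of the Mariot--Robinson-type theorems): the hypotheses make $\bm{\mathcal{F}}$ a null two-form aligned with $\bm{\ell}$, its square is rank one, $\mathcal{F}_{ab}\mathcal{F}^{cb}=f^{2}\ell_{a}\ell^{c}$, and the divergence of the weighted square vanishes --- the weighted Maxwell equation killing one Leibniz term and closedness killing the other, since $\mathcal{F}^{cb}\nabla_{c}\mathcal{F}_{ab}=\tfrac{1}{4}\nabla_{a}\mathcal{F}^{2}$ and $\mathcal{F}^{2}\equiv 0$ on the whole neighborhood where the hypotheses hold (so its gradient vanishes, not merely its value). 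Attaching the dilaton weight $\mathrm{e}^{((D-4)\alpha+3\beta)\phi}$ inside the quadratic before differentiating is exactly the right adaptation of that classical argument to the reduced field equation, and it is what your approach buys: the free index $a$ survives, so a single covariant identity yields $f^{2}\kappa_{a}\propto\ell_{a}$, whereas the paper must supply the missing directional information through the dependence relation; conversely, the paper's route needs only the $\bm{\ell}$-contraction of \eqref{eq:reduced-einstein-eq-b} and produces intermediate component identities that it reuses elsewhere. Two small points of hygiene: your symbol $\psi$ collides with the paper's later definition $\psi:=\mathrm{e}^{\beta\phi}$ in \eqref{eq:psi-definition}, so the weight should be renamed; and the closing identification \enquote{$f^{2}=0$ iff $\bm{\mathcal{F}}=0$} deserves its one-line justification, namely that $f^{2}=\sum_{i}\bigl(\mathcal{F}_{(1)(i)}\bigr)^{2}$ is a Euclidean sum of squares of the only surviving frame components.
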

\begin{proof}
	We have already learned in \eqref{eq:F-0ij-leibniz} that the assumptions \eqref{eq:conditions-III-a}, \eqref{eq:conditions-III-b} and \eqref{eq:conditions-Ib} imply
	\begin{equation}\label{eq:III-F0ij}
		\mathcal{F}\_{(0)(i);(j)} = 0\,.
	\end{equation}
	Thus
	\begin{equation}
		\mathcal{F}\_{(i)(j);(0)} = \mathcal{F}\_{(0)(j);(i)} - \mathcal{F}\_{(0)(i);(j)} = 0\,,
	\end{equation}
	which after using Leibniz rule gives the condition of linear dependence of $\mathcal{F}\_{(1)(i)}$ and $\kappa\_{(i)}$:
	\begin{equation}\label{eq:F-kappa-dependence}
		\kappa\_{(i)} \mathcal{F}\_{(1)(j)} - \kappa\_{(j)} \mathcal{F}\_{(1)(i)} = 0\,.
	\end{equation}

	On the other hand, contracting \eqref{eq:reduced-einstein-eq-b} with $\bm{\ell}$ and substituting \eqref{eq:III-F0ij} and \eqref{eq:conditions-III-b}, we get
	\begin{equation}
		\mathcal{F}\_{(0)(1);(0)} = 0
	\end{equation}
	which, as we have already learned in \eqref{eq:F-010-leibniz}, is the orthogonality condition of $\mathcal{F}\_{(1)(i)}$ and $\kappa\^{(i)}$:
	\begin{equation}\label{eq:F-kappa-orthogonality}
		\kappa\^{(i)} \mathcal{F}\_{(1)(i)} = 0\,.
	\end{equation}

	Combining \eqref{eq:F-kappa-dependence} and \eqref{eq:F-kappa-orthogonality}, we get the desired result that either $\bm{\mathcal{F}}$ or $\kappa\^{(i)}$ vanishes.
\end{proof}

\subsubsection{Speciality of scalar field gradient}

In order to show how the condition \eqref{eq:conditions-III-c} follows from \eqref{eq:III}, we expand \eqref{eq:III-consequences-e}:
\begin{equation}
	0 = \hat{C}\_{[0][i][1][z]} = -\frac{\mathrm{e}^{(-3\alpha+\beta)\phi}}{2} \left(\beta \phi\_{;(0)} \mathcal{F}\_{(1)(i)} + \mathcal{F}\_{(0)(i);(1)} \right).
\end{equation}
Now, due to \eqref{eq:conditions-Ib}, we have
\begin{align}
	\mathcal{F}\_{(0)(i);(1)} ={}& \left(\mathcal{F}_{ab} \ell^{a} e\_{(i)}^{b}\right)\_{\!;(1)} - \mathcal{F}_{ab} \left(\ell^{a} e\_{(i)}^{b}\right)\_{\!;(1)} = {}\nonumber\\
	{}={}& \mathord{-}\mathcal{F}\_{(a)(i)} \ell\^{(a)}\_{\hphantom{(a)};(1)} = -\mathcal{F}\_{(1)(i)} \ell\_{(0);(1)} = 0\,,
\end{align}
where in the last step, we used the identity
\begin{equation}
	\ell\_{(0);(a)} = \ell\_{(b);(a)} \ell\^{(b)} = \tfrac{1}{2} \left(\ell\_{(b)} \ell\^{(b)}\right)\_{\!;(a)} = 0\,.
\end{equation}
We can conclude that \eqref{eq:conditions-III-c} is indeed a necessary condition for \eqref{eq:III}.

\subsubsection{Vanishing optical matrix}

The condition \eqref{eq:conditions-III-d} is a consequence of linear dependence of $\mathcal{F}\_{(1)(i)}$ and $\ell\_{(i);(j)}$
\begin{equation}
	\mathcal{F}\_{(1)(i)} \ell\_{(j);(k)} - \mathcal{F}\_{(1)(j)} \ell\_{(i);(k)} = 0
\end{equation}
and of orthogonality of these two quantities
\begin{equation}
	\mathcal{F}\_{(1)(i)} \ell\^{(i)}\_{\hphantom{(i)};(j)} = 0\,.
\end{equation}
The linear dependence follows from the equation \eqref{eq:III-b}, considering \eqref{eq:conditions-Ib}, \eqref{eq:conditions-III-a} and \eqref{eq:conditions-III-c}:
\begin{equation}
	0 = 2 \mathrm{e}^{(3\alpha-\beta)\phi} \hat{C}\_{[i][j][k][z]} = -\mathcal{F}\_{(i)(j);(k)} = \mathcal{F}\_{(1)(i)} \ell\_{(j);(k)} - \mathcal{F}\_{(1)(j)} \ell\_{(i);(k)}\,.
\end{equation}
On the other hand, the orthogonality follows from the equation \eqref{eq:III-d}:
\begin{equation}
	0 = -2 \mathrm{e}^{(3\alpha-\beta)\phi} \hat{C}\_{[0][1][i][z]} = \mathcal{F}\_{(0)(1);(i)} = \mathcal{F}\_{(1)(j)} \ell\^{(j)}\_{\hphantom{(j)};(i)}\,.
\end{equation}

\subsubsection{B.w. 0 components of scalar field second gradient}

Condition \eqref{eq:conditions-III-phi-01} follows from \eqref{eq:III-consequences-d}, considering \eqref{eq:conditions-III-a} and \eqref{eq:conditions-III-b}.

The trace of \eqref{eq:III-c} is already satisfied thanks to \eqref{eq:III-consequences-c} and \eqref{eq:III-consequences-d}. On the other hand, its traceless part is equivalent to \eqref{eq:conditions-III-phi-ij}.

The condition \eqref{eq:III-a} is already satisfied given \eqref{eq:conditions-III}, which completes the proof of the other direction of the equivalence.

\subsection{Type III with nonvanishing Maxwell field}
\label{chap:III-kundt}

We see that the case of $\mathcal{F}_{ab} \neq 0$ is an important subcase of type III Weyl alignment preserving Kaluza--Klein reduction. In this section, we will try to elaborate on that case in more detail. Suppose
\begin{equation}
	\mathcal{F}_{ab} \neq 0\,.
\end{equation}
In order to satisfy \eqref{eq:conditions-III-d} and \eqref{eq:III-geodeticity}, the $D$-dimensional spacetime must then be Kundt.

From \eqref{eq:conditions-III-c}, we have
\begin{equation}\label{eq:kundt-phi-0}
	\phi\_{;(0)} = 0
\end{equation}
and applying the Leibniz rule:
\begin{equation}
	\phi\_{;(0)(1)} = -\phi\_{;(i)} L\^{(i)}\_{\hphantom{(i)}(1)}\,.
\end{equation}
The condition \eqref{eq:conditions-III-phi-01} can thus be rewritten as
\begin{equation}\label{eq:kundt-phi-ij}
	g\^{(i)(j)} \phi\_{;(i)} \left( \phi\_{;(j)} - \tfrac{1}{\alpha} L\_{(j)(1)} \right) = 0\,.
\end{equation}
We see that for $\phi\_{;(i)} \neq 0$, the $L\_{(i)(1)}$ would have to be also nonvanishing, and the Kundt spacetime $(\mathcal{M}, \bm{g})$ cannot be recurrent\footnote{By \emph{recurrent Kundt}, we mean such Kundt spacetime that admits a recurrent ($k_{a;b} = k_{a} p_{b}$) null vector field.} \citep{clas-hd-review}.

Knowing that the $D$-dimensional spacetime is Kundt, we can choose coordinates $u$, $r$, $x^{2}$, \dots, $x^{D-1}$ in such a way that $r$ would be the affine parameter along the null geodetic $\bm{\ell}$ and that the metric would take the following form \citep{podolsky-kundt}:
\begin{subequations}\label{eq:kundt-metric}
\begin{align}
	\bm{g} &= \bm{\mathrm{d}}u\!\vee\!\bm{\mathrm{d}}r + 2H\,\bm{\mathrm{d}}u^{2} + W_{i}\,\bm{\mathrm{d}}u\!\vee\!\bm{\mathrm{d}}x^{i} + g_{ij}\,\bm{\mathrm{d}}x^{i}\,\bm{\mathrm{d}}x^{j},\\
	{}^{\sharp\sharp}\bm{g} &= \bm{\partial}_{u}\!\vee\!\bm{\partial}_{r} + (-2H + W^{i} W_{i})\,\bm{\partial}_{r}^{2} - W^{i} \bm{\partial}_{r}\!\vee\!\bm{\partial}_{i} + g^{ij} \bm{\partial}_{i}\!\vee\!\bm{\partial}_{j}\,,
\end{align}
\end{subequations}
where
\begin{subequations}
\begin{gather}
	g_{ik} g^{kj} = \delta_{i}^{j}\,,\\
	W^{i} = g^{ij} W_{j}\,.
\end{gather}
\end{subequations}
Then
\begin{equation}\label{eq:kundt-gij-r}
	g_{ij,r} = 0
\end{equation}
and, according to \eqref{eq:kundt-phi-0}:
\begin{equation}\label{eq:kundt-phi-r}
	\phi_{,r} = 0\,.
\end{equation}
In these coordinates, \eqref{eq:kundt-phi-ij} thus becomes:
\begin{equation}\label{eq:kundt-coords-phi-ij}
	g^{ij} \phi_{,i} \left(\phi_{,j} - \tfrac{1}{2\alpha} W_{j,r}\right) = 0\,,
\end{equation}
which can be obtained for example in the following null frame:
\begin{subequations}\label{eq:kundt-frame}
\begin{align}
	\bm{\varepsilon}\^{(0)} &= \bm{\mathrm{d}}r + H\bm{\mathrm{d}}u + W_{i} \bm{\mathrm{d}}x^{i},& \bm{\ell} &= \bm{\partial}_{r}\,,\\
	\bm{\varepsilon}\^{(1)} &= \bm{\mathrm{d}}u\,,& \bm{n} &= \bm{\partial}_{u} - H \bm{\partial}_{r}\,,\\
	\bm{\varepsilon}\^{(i)} &= \varepsilon\^{(i)}_{j} \bm{\mathrm{d}}x^{j}, & \bm{e}\_{(i)} &= e\_{(i)}^{j}\,(\bm{\partial}_{j} - W_{j} \bm{\partial}_{r})\,,
	\shortintertext{where\footnotemark}
	\bigl(\varepsilon\^{(i)}_{j}\bigr)_{\!,r} &= 0\,, & \bigl(e\^{j}_{(i)}\bigr)_{\!,r} &= 0\,,
\end{align}
\end{subequations}
\footnotetext{This additional condition isn't actually helpful for our calculations, but it is a natural choice, given \eqref{eq:kundt-gij-r}.}%
in which
\begin{equation}
	L\_{(1)(i)} = L\_{(i)(1)} = \tfrac{1}{2} W_{j,r} e^{j}\_{(i)}\,.
\end{equation}
Note that the original condition \eqref{eq:kundt-phi-ij} is frame-invariant, as will be also any further computations.

In fact, thanks to corollaries \ref{stat:reduction-rho} and \ref{stat:reduction-kappa} (and especially the remark below them), the $(D+1)$-dimensional spacetime must be Kundt as well (with respect to $\bm{\hat{\ell}}$, and hence it must be a VSI spacetime \citep{vsi-hd}). Therefore \citep[\bgroup\it cf.\egroup][]{podolsky-kundt-clas},
\begin{subequations}\label{eq:kundt-big-transverse-flat}
\begin{gather}
	0 = \hat{C}_{ijkl} = \hat{R}_{ijkl} = {}^{\mathrm{S}}\!\hat{R}_{ijkl}\,,\\
	0 = \hat{C}_{ijkz} = \hat{R}_{ijkz} = {}^{\mathrm{S}}\!\hat{R}_{ijkz}\,,\\
	0 = \hat{C}_{izkz} = \hat{R}_{izkz} = {}^{\mathrm{S}}\!\hat{R}_{izkz}\,,
\end{gather}
\end{subequations}
where the superscript ${}^{\mathrm{S}}$ marks tensor quantities corresponding to the metric $\hat{g}_{AB}$ (or $g_{ab}$, depending on the presence of the hat symbol) pulled back to the surface of constant $u$ and $r$.
\begin{remark}
	The component calculations of \cite{podolsky-kundt-clas} assume coordinates different than those lifted from \eqref{eq:kundt-metric}. This can be fixed by choosing the gauge of
	\begin{equation}
		\mathcal{A}_{r} = 0
	\end{equation}
	and rescaling the coordinate $r$ to be the affine parameter in the lifted spacetime:
	\begin{equation}
		r \mapsto \mathrm{e}^{2\alpha\phi} r\,.
	\end{equation}
	In the end, the result \eqref{eq:kundt-big-transverse-flat} is invariant under transformations of $r$ and the conclusion \eqref{eq:kundt-transverse-weyl} below is independent of any gauge imposed on $\bm{\mathcal{A}}$.
\end{remark}
Suppose that
\begin{equation}
	D \ge 5\,.
\end{equation}
Now, $\bm{{}^{\mathrm{S}}\!g}$ is a Kaluza--Klein reduction of the flat $\bm{{}^{\mathrm{S}}\!\hat{g}}$. Expressing ${}^{\mathrm{S}}\!C_{ijkl}$ in terms of ${}^{\mathrm{S}}\!\hat{R}_{ijkl}$, ${}^{\mathrm{S}}\!\hat{R}_{ijkz}$ and ${}^{\mathrm{S}}\!\hat{R}_{izkz}$ in a similar way to when we expressed $\hat{R}\_{[A][B][C][D]}$ in terms of $C\_{(a)(b)(c)(d)}$, we would end up with
\begin{equation}\label{eq:kundt-transverse-weyl}
	{}^{\mathrm{S}}\!C_{ijkl} = 0\,,
\end{equation}
because
\begin{equation}
	{}^{\mathrm{S}}\!\nabla_{ij} \phi = \nabla_{ij} \phi
\end{equation}
and the equation coming from \eqref{eq:weyl-KK-a} is invariant under the transformation
\begin{align}
	D &\mapsto D-2\,,\\
	g^{ab} \Phi_{;ab} = D\Phi\_{;(0)(1)} &\mapsto {}^{\mathrm{S}}\!g^{ij} {}^{\mathrm{S}}\!\nabla_{ij} \Phi = g\^{(i)(j)} \Phi\_{;(i)(j)} = (D-2) \Phi\_{;(0)(1)}\,.
\end{align}
For $D \ge 6$, this means, according to the Weyl--Schouten theorem, that the metric ${}^{\mathrm{S}}\!g_{ij}$ is conformally flat (and for $D=4$, the ${}^{\mathrm{S}}\!g_{ij}$ is conformally flat automatically). Therefore, we can choose such coordinates $x^{i}$ that \citep[\bgroup\it see also\egroup][discussing form-invariance of \eqref{eq:kundt-metric} under coordinate transformations]{podolsky-kundt}:
\begin{equation}\label{eq:conformally-flat-metric}
	g_{ij} = a^{2}\,\eta_{ij}\,.
\end{equation}
Here, $\eta_{ij}$ is the Kronecker delta.

From now on in this section, suppose that
\begin{equation}
	D = 4 \quad\vee\quad D \ge 6\,.
\end{equation}
Equation \eqref{eq:conditions-III-e} together with the $\scriptstyle \s{[z][z]}$ component of the vacuum Einstein equation \eqref{eq:reduced-einstein-eq-c} are equivalent to
\begin{subequations}
\begin{align}
	\label{eq:III-phi-01-frame}
	\phi\_{;(0)(1)} &= -\alpha g\^{(k)(l)} \phi\_{;(k)} \phi\_{;(l)}\,,\\
	\label{eq:III-phi-ij-frame}
	\phi\_{;(i)(j)} + (\beta-2\alpha) \phi\_{;(i)} \phi\_{;(j)} &= -\alpha g\^{(k)(l)} \phi\_{;(k)} \phi\_{;(l)}\,g\_{(i)(j)}\,,
\end{align}
\end{subequations}
which is equivalent (remember \eqref{eq:kundt-coords-phi-ij}) to
\begin{subequations}
\begin{align}
	\label{eq:III-phi-01}
	&\eta^{ij} \phi_{,i} \left(\phi_{,j} - \tfrac{1}{2\alpha} W_{j,r}\right) = 0\,,\\
	\label{eq:III-phi-ij}
	&\phi_{,ij} - \phi_{,i}\,(\ln a)_{,j} - \phi_{,j}\,(\ln a)_{,i} + \eta^{kl} \phi_{,k}\,(\ln a)_{,l}\,\eta_{ij} = {}\nonumber\\
	&\qquad {} = (2\alpha - \beta)\,\phi_{,i} \phi_{,j} - \alpha\,\eta^{kl} \phi_{,k} \phi_{,l}\,\eta_{ij}\,.
\end{align}
\end{subequations}
A trivial solution of this system would be $\phi$ that only depends on the coordinate $u$. Assuming the opposite,
\begin{equation}
	\phi_{,i} \neq 0\,,
\end{equation}
the equation \eqref{eq:III-phi-ij} can be explicitly solved for both the conformal factor $a$ and the scalar field $\phi$.

\subsubsection{Conformal factor}

Contracting \eqref{eq:III-phi-ij} with $\eta^{jm} \phi_{,m}$, we get an integrable equation for $a$:
\begin{equation}
	\left((\alpha - \beta)\,\phi + \ln\frac{a}{\sqrt{\eta^{kl} \phi_{,k} \phi_{,l}}}\right)_{\!,i} = 0\,.
\end{equation}
Integrating and substituting back for $a$, we get the following equivalent of \eqref{eq:III-phi-ij}:
\begin{subequations}
\begin{align}
	\label{eq:III-a-equation}
	&a = b\,\sqrt{\phi_{,i} \phi_{,i}}\,\mathrm{e}^{(\beta-\alpha)\phi}\,,\\
	\label{eq:III-phi-equation}
	&\phi_{,k} \phi_{,k} \phi_{,ij} + \phi_{,k} \phi_{,l} \phi_{,kl} \eta_{ij} + \beta \left( \phi_{,k} \phi_{,k} \right)^{2} \eta_{ij} = {}\nonumber\\
	&\quad {} = \beta \phi_{,k} \phi_{,k} \phi_{,i} \phi_{,j} + \phi_{,i} \phi_{,jk} \phi_{,k} + \phi_{,j} \phi_{,ik} \phi_{,k}\,.
\end{align}
\end{subequations}
where $b$ is an arbitrary function of the coordinate $u$ and where for brevity, the Einstein summation convention is used.

\subsubsection{Scalar field}

The equation \eqref{eq:III-phi-equation} is quartic in $\phi$, but under substitution
\begin{equation}\label{eq:psi-definition}
	\psi := \mathrm{e}^{\beta \phi}\,,
\end{equation}
the quartic terms cancel out, resulting in a homogeneous cubic:
\begin{equation}\label{eq:III-psi-equation}
	\psi_{,k} \psi_{,k} \psi_{,ij} + \psi_{,k} \psi_{,l} \psi_{,kl} \eta_{ij} = \psi_{,i} \psi_{,jk} \psi_{,k} + \psi_{,j} \psi_{,ik} \psi_{,k}\,.
\end{equation}

Comparing the components of $\psi_{,ij}$ in a frame aligned with $\psi_{,i}$, it is manifest that each hypersurface of constant $\psi$ has at each point the second fundamental form proportional to the first fundamental form, making the contours totally umbilical, and therefore also of constant mean curvature (in our case where the ambient manifold is flat and Riemannian) \cite{totally-umbilical}. For $D > 4$, this means that each hypersurface of constant $\psi$ is locally either part of a plane, or part of a hypersphere (considering the coordinates $u$ and $r$ fixed). Keeping this in mind, it is now a simple exercise (involving only systems of ordinary differential equations) to solve \eqref{eq:III-psi-equation} for the six qualitatively different arrangements of contours as listed below, showing that in
\begin{equation}
	D \ge 6\,,
\end{equation}
for any point where $\psi \neq 0$, there is\footnote{We assume that $\psi \in \mathrm{C}^{2}\big(\mathbb{R}^{D-2}\big)$ for any fixed $u$ and $r$.} a neighborhood that is a union of sets in which $\psi$ takes at least one of the following forms: either the contours are arranged as concentric spheres:
\begin{subequations}\label{eq:III-psi-solution}
\begin{equation}\label{eq:III-psi-solution-a}
	\psi = C_{0} + C_{1} \ln \left(x - \xsub0\right)^{2},
\end{equation}
or as non-concentric non-overlapping spheres:
\begin{equation}\label{eq:III-psi-solution-b}
	\psi = C_{0} + C_{1} \ln\frac{\left(x - \xsub0\right)^{2}}{\left(x - \xsub1\right)^{2}}\,,
\end{equation}
overlapping spheres:
\begin{equation}\label{eq:III-psi-solution-c}
	\psi = C_{0} + C_{1} \arg \left(\pm(x - \zeta)^{2}\right),
\end{equation}
tangent spheres:
\begin{equation}\label{eq:III-psi-solution-d}
	\psi = C_{0} + \frac{\left(x - \xsub0\right) \cdot \nu}{\left(x - \xsub0\right)^{2}}\,,
\end{equation}
overlapping planes:
\begin{equation}\label{eq:III-psi-solution-e}
	\psi = C_{0} + C_{1} \arg\,\bigl( \pm(x - \xsub0) \cdot \mu \bigr)\,,
\end{equation}
or as non-overlapping planes:
\begin{equation}\label{eq:III-psi-solution-f}
	\psi = C_{0} + x \cdot \nu\,,
\end{equation}
\end{subequations}
where $C_{0}$ and $C_{1}$ are some functions of the coordinate $u$ and where $\xsub0$, $\xsub1$ and $\nu$ are some $(D-2)$-dimensional functions of $u$ and where $\zeta$ and $\mu$ are some $(D-2)$-dimensional complex-valued functions of $u$ satisfying $\mu^{2} = 0$. Conversely, real-valued $\psi$ in any of these forms \eqref{eq:III-psi-solution} is a solution to the equation \eqref{eq:III-psi-equation}.
\begin{remark}
	The two solutions provided for overlapping spheres \eqref{eq:III-psi-solution-c} are equivalent, but with different sets of discontinuity, so they can be glued together as necessary, possibly forming a nontrivial topology with singularity at the $(D-4)$-dimensional ring on $(x - \zeta)^{2} = 0$. Similar situation occurs with the solution for overlapping planes \eqref{eq:III-psi-solution-e}.
\end{remark}

Figure \ref{fig:psi-contours} depicts contour sections of $\psi$ for the different types of $D>4$ solutions of \eqref{eq:III-psi-equation}, in planes respecting the solution symmetries (i.e. either perpendicular to the translation symmetry vector, or tangent with the rotation symmetry vector -- which is aligned with the horizontal axis).
\begin{figure}[htbp]
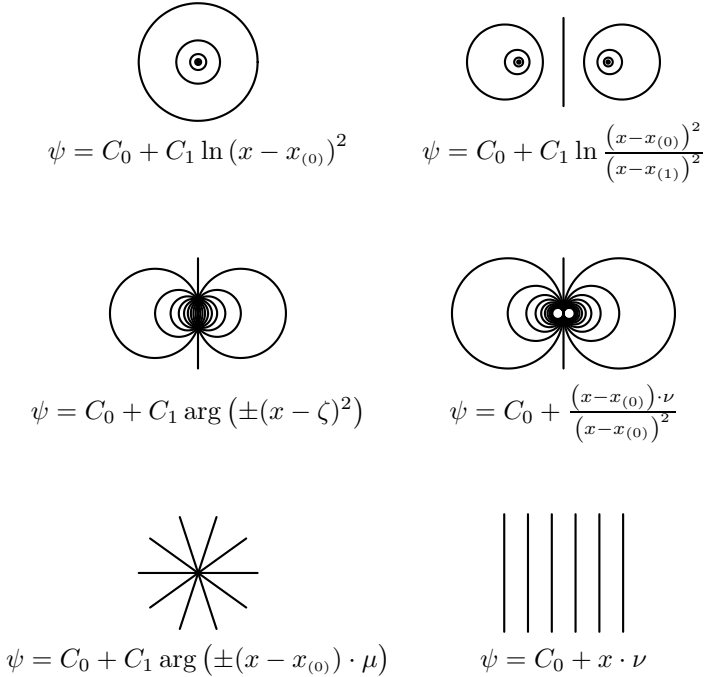

\centering
\begin{tabular}{cc}
\begin{tikzpicture}[gnuplot]
%% generated with GNUPLOT 5.4p1 (Lua 5.4; terminal rev. Jun 2020, script rev. 114)
%% Po 16. ledna 2023, 12:23:11
\path (0.000,0.000) rectangle (1.600,2.500);
\gpcolor{rgb color={0.000,0.000,0.000}}
\gpsetlinetype{gp lt border}
\gpsetdashtype{gp dt solid}
\gpsetlinewidth{2.00}
\draw[gp path] (1.583,1.280)--(1.581,1.329)--(1.577,1.379)--(1.569,1.428)--(1.558,1.476)%
  --(1.544,1.524)--(1.527,1.571)--(1.507,1.616)--(1.484,1.660)--(1.459,1.703)--(1.430,1.744)%
  --(1.400,1.783)--(1.367,1.820)--(1.331,1.855)--(1.294,1.888)--(1.254,1.918)--(1.213,1.945)%
  --(1.170,1.970)--(1.125,1.992)--(1.079,2.011)--(1.032,2.028)--(0.984,2.041)--(0.936,2.051)%
  --(0.886,2.058)--(0.837,2.062)--(0.787,2.063)--(0.737,2.061)--(0.688,2.055)--(0.639,2.046)%
  --(0.591,2.035)--(0.543,2.020)--(0.497,2.002)--(0.452,1.982)--(0.408,1.958)--(0.366,1.932)%
  --(0.325,1.903)--(0.286,1.872)--(0.250,1.838)--(0.216,1.802)--(0.184,1.764)--(0.154,1.724)%
  --(0.127,1.682)--(0.103,1.639)--(0.082,1.594)--(0.063,1.547)--(0.048,1.500)--(0.035,1.452)%
  --(0.026,1.403)--(0.020,1.354)--(0.016,1.304)--(0.016,1.255)--(0.020,1.205)--(0.026,1.156)%
  --(0.035,1.107)--(0.048,1.059)--(0.063,1.012)--(0.082,0.965)--(0.103,0.920)--(0.127,0.877)%
  --(0.154,0.835)--(0.184,0.795)--(0.216,0.757)--(0.250,0.721)--(0.286,0.687)--(0.325,0.656)%
  --(0.366,0.627)--(0.408,0.601)--(0.452,0.577)--(0.497,0.557)--(0.543,0.539)--(0.591,0.524)%
  --(0.639,0.513)--(0.688,0.504)--(0.737,0.498)--(0.787,0.496)--(0.837,0.497)--(0.886,0.501)%
  --(0.936,0.508)--(0.984,0.518)--(1.032,0.531)--(1.079,0.548)--(1.125,0.567)--(1.170,0.589)%
  --(1.213,0.614)--(1.254,0.641)--(1.294,0.671)--(1.331,0.704)--(1.367,0.739)--(1.400,0.776)%
  --(1.430,0.815)--(1.459,0.856)--(1.484,0.899)--(1.507,0.943)--(1.527,0.988)--(1.544,1.035)%
  --(1.558,1.083)--(1.569,1.131)--(1.577,1.180)--(1.581,1.230)--(1.583,1.279);
\draw[gp path] (1.088,1.280)--(1.087,1.298)--(1.085,1.316)--(1.083,1.334)--(1.078,1.352)%
  --(1.073,1.369)--(1.067,1.387)--(1.060,1.403)--(1.051,1.420)--(1.042,1.435)--(1.032,1.450)%
  --(1.020,1.465)--(1.008,1.478)--(0.995,1.491)--(0.981,1.503)--(0.967,1.514)--(0.951,1.524)%
  --(0.936,1.534)--(0.919,1.542)--(0.902,1.549)--(0.885,1.555)--(0.867,1.560)--(0.850,1.563)%
  --(0.831,1.566)--(0.813,1.567)--(0.795,1.568)--(0.777,1.567)--(0.758,1.565)--(0.740,1.562)%
  --(0.723,1.557)--(0.705,1.552)--(0.688,1.545)--(0.672,1.538)--(0.655,1.529)--(0.640,1.519)%
  --(0.625,1.509)--(0.611,1.497)--(0.597,1.485)--(0.585,1.472)--(0.573,1.458)--(0.562,1.443)%
  --(0.552,1.428)--(0.543,1.412)--(0.535,1.395)--(0.529,1.378)--(0.523,1.361)--(0.518,1.343)%
  --(0.515,1.325)--(0.513,1.307)--(0.511,1.289)--(0.511,1.270)--(0.513,1.252)--(0.515,1.234)%
  --(0.518,1.216)--(0.523,1.198)--(0.529,1.181)--(0.535,1.164)--(0.543,1.147)--(0.552,1.131)%
  --(0.562,1.116)--(0.573,1.101)--(0.585,1.087)--(0.597,1.074)--(0.611,1.062)--(0.625,1.050)%
  --(0.640,1.040)--(0.655,1.030)--(0.672,1.021)--(0.688,1.014)--(0.705,1.007)--(0.723,1.002)%
  --(0.740,0.997)--(0.758,0.994)--(0.777,0.992)--(0.795,0.991)--(0.813,0.992)--(0.831,0.993)%
  --(0.850,0.996)--(0.867,0.999)--(0.885,1.004)--(0.902,1.010)--(0.919,1.017)--(0.936,1.025)%
  --(0.951,1.035)--(0.967,1.045)--(0.981,1.056)--(0.995,1.068)--(1.008,1.081)--(1.020,1.094)%
  --(1.032,1.109)--(1.042,1.124)--(1.051,1.139)--(1.060,1.156)--(1.067,1.172)--(1.073,1.190)%
  --(1.078,1.207)--(1.083,1.225)--(1.085,1.243)--(1.087,1.261)--cycle;
\draw[gp path] (0.906,1.280)--(0.905,1.286)--(0.905,1.293)--(0.904,1.300)--(0.902,1.306)%
  --(0.900,1.313)--(0.898,1.319)--(0.895,1.325)--(0.892,1.331)--(0.889,1.337)--(0.885,1.342)%
  --(0.881,1.348)--(0.876,1.353)--(0.871,1.357)--(0.866,1.362)--(0.861,1.366)--(0.855,1.370)%
  --(0.850,1.373)--(0.844,1.376)--(0.837,1.379)--(0.831,1.381)--(0.824,1.383)--(0.818,1.384)%
  --(0.811,1.385)--(0.805,1.385)--(0.798,1.386)--(0.791,1.385)--(0.784,1.384)--(0.778,1.383)%
  --(0.771,1.382)--(0.765,1.380)--(0.759,1.377)--(0.752,1.375)--(0.746,1.371)--(0.741,1.368)%
  --(0.735,1.364)--(0.730,1.360)--(0.725,1.355)--(0.720,1.350)--(0.716,1.345)--(0.712,1.340)%
  --(0.709,1.334)--(0.705,1.328)--(0.702,1.322)--(0.700,1.316)--(0.698,1.309)--(0.696,1.303)%
  --(0.695,1.296)--(0.694,1.290)--(0.694,1.283)--(0.694,1.276)--(0.694,1.269)--(0.695,1.263)%
  --(0.696,1.256)--(0.698,1.250)--(0.700,1.243)--(0.702,1.237)--(0.705,1.231)--(0.709,1.225)%
  --(0.712,1.219)--(0.716,1.214)--(0.720,1.209)--(0.725,1.204)--(0.730,1.199)--(0.735,1.195)%
  --(0.741,1.191)--(0.746,1.188)--(0.752,1.184)--(0.759,1.182)--(0.765,1.179)--(0.771,1.177)%
  --(0.778,1.176)--(0.784,1.175)--(0.791,1.174)--(0.798,1.173)--(0.805,1.174)--(0.811,1.174)%
  --(0.818,1.175)--(0.824,1.176)--(0.831,1.178)--(0.837,1.180)--(0.844,1.183)--(0.850,1.186)%
  --(0.855,1.189)--(0.861,1.193)--(0.866,1.197)--(0.871,1.202)--(0.876,1.206)--(0.881,1.211)%
  --(0.885,1.217)--(0.889,1.222)--(0.892,1.228)--(0.895,1.234)--(0.898,1.240)--(0.900,1.246)%
  --(0.902,1.253)--(0.904,1.259)--(0.905,1.266)--(0.905,1.273)--cycle;
\draw[gp path] (0.839,1.280)--(0.838,1.282)--(0.838,1.284)--(0.838,1.287)--(0.837,1.289)%
  --(0.837,1.292)--(0.836,1.294)--(0.835,1.296)--(0.834,1.298)--(0.832,1.301)--(0.831,1.303)%
  --(0.829,1.305)--(0.828,1.306)--(0.826,1.308)--(0.824,1.310)--(0.822,1.311)--(0.820,1.313)%
  --(0.818,1.314)--(0.816,1.315)--(0.813,1.316)--(0.811,1.317)--(0.809,1.317)--(0.806,1.318)%
  --(0.804,1.318)--(0.801,1.318)--(0.799,1.319)--(0.796,1.318)--(0.794,1.318)--(0.792,1.318)%
  --(0.789,1.317)--(0.787,1.316)--(0.784,1.315)--(0.782,1.314)--(0.780,1.313)--(0.778,1.312)%
  --(0.776,1.311)--(0.774,1.309)--(0.772,1.307)--(0.770,1.306)--(0.769,1.304)--(0.767,1.302)%
  --(0.766,1.300)--(0.765,1.297)--(0.764,1.295)--(0.763,1.293)--(0.762,1.290)--(0.761,1.288)%
  --(0.761,1.286)--(0.761,1.283)--(0.761,1.281)--(0.761,1.278)--(0.761,1.276)--(0.761,1.273)%
  --(0.761,1.271)--(0.762,1.269)--(0.763,1.266)--(0.764,1.264)--(0.765,1.262)--(0.766,1.259)%
  --(0.767,1.257)--(0.769,1.255)--(0.770,1.253)--(0.772,1.252)--(0.774,1.250)--(0.776,1.248)%
  --(0.778,1.247)--(0.780,1.246)--(0.782,1.245)--(0.784,1.244)--(0.787,1.243)--(0.789,1.242)%
  --(0.792,1.241)--(0.794,1.241)--(0.796,1.241)--(0.799,1.240)--(0.801,1.241)--(0.804,1.241)%
  --(0.806,1.241)--(0.809,1.242)--(0.811,1.242)--(0.813,1.243)--(0.816,1.244)--(0.818,1.245)%
  --(0.820,1.246)--(0.822,1.248)--(0.824,1.249)--(0.826,1.251)--(0.828,1.253)--(0.829,1.254)%
  --(0.831,1.256)--(0.832,1.258)--(0.834,1.261)--(0.835,1.263)--(0.836,1.265)--(0.837,1.267)%
  --(0.837,1.270)--(0.838,1.272)--(0.838,1.275)--(0.838,1.277)--cycle;
\draw[gp path] (0.814,1.280)--(0.814,1.281)--(0.814,1.282)--(0.813,1.283)--(0.813,1.284)%
  --(0.813,1.285)--(0.812,1.286)--(0.812,1.287)--(0.811,1.288)--(0.810,1.289)--(0.809,1.290)%
  --(0.809,1.291)--(0.808,1.291)--(0.807,1.292)--(0.806,1.292)--(0.805,1.293)--(0.804,1.293)%
  --(0.803,1.293)--(0.802,1.294)--(0.801,1.294)--(0.800,1.294)--(0.799,1.294)--(0.798,1.294)%
  --(0.797,1.294)--(0.796,1.293)--(0.795,1.293)--(0.794,1.293)--(0.793,1.292)--(0.792,1.292)%
  --(0.792,1.291)--(0.791,1.291)--(0.790,1.290)--(0.789,1.290)--(0.789,1.289)--(0.788,1.288)%
  --(0.787,1.287)--(0.787,1.286)--(0.786,1.285)--(0.786,1.284)--(0.786,1.283)--(0.785,1.282)%
  --(0.785,1.281)--(0.785,1.280)--(0.785,1.279)--(0.785,1.278)--(0.785,1.277)--(0.786,1.276)%
  --(0.786,1.275)--(0.786,1.274)--(0.787,1.273)--(0.787,1.272)--(0.788,1.271)--(0.789,1.270)%
  --(0.789,1.269)--(0.790,1.269)--(0.791,1.268)--(0.792,1.268)--(0.792,1.267)--(0.793,1.267)%
  --(0.794,1.266)--(0.795,1.266)--(0.796,1.266)--(0.797,1.265)--(0.798,1.265)--(0.799,1.265)%
  --(0.800,1.265)--(0.801,1.265)--(0.802,1.265)--(0.803,1.266)--(0.804,1.266)--(0.805,1.266)%
  --(0.806,1.267)--(0.807,1.267)--(0.808,1.268)--(0.809,1.268)--(0.809,1.269)--(0.810,1.270)%
  --(0.811,1.271)--(0.812,1.272)--(0.812,1.273)--(0.813,1.274)--(0.813,1.275)--(0.813,1.276)%
  --(0.814,1.277)--(0.814,1.278)--(0.814,1.279)--cycle;
\draw[gp path] (0.805,1.280)--(0.805,1.281)--(0.804,1.281)--(0.804,1.282)--(0.804,1.283)%
  --(0.803,1.283)--(0.803,1.284)--(0.802,1.284)--(0.801,1.284)--(0.801,1.285)--(0.800,1.285)%
  --(0.799,1.285)--(0.798,1.285)--(0.798,1.284)--(0.797,1.284)--(0.796,1.284)--(0.796,1.283)%
  --(0.795,1.283)--(0.795,1.282)--(0.795,1.281)--(0.794,1.281)--(0.794,1.280)--(0.794,1.279)%
  --(0.794,1.278)--(0.795,1.278)--(0.795,1.277)--(0.795,1.276)--(0.796,1.276)--(0.796,1.275)%
  --(0.797,1.275)--(0.798,1.275)--(0.798,1.274)--(0.799,1.274)--(0.800,1.274)--(0.801,1.274)%
  --(0.801,1.275)--(0.802,1.275)--(0.803,1.275)--(0.803,1.276)--(0.804,1.276)--(0.804,1.277)%
  --(0.804,1.278)--(0.805,1.278)--(0.805,1.279)--cycle;
%% coordinates of the plot area
\gpdefrectangularnode{gp plot 1}{\pgfpoint{0.016cm}{0.496cm}}{\pgfpoint{1.583cm}{2.063cm}}
\end{tikzpicture}
%% gnuplot variables&\input{plots/b}\\[-17pt]%
$\psi = C_{0} + C_{1} \ln \left(x - \xsub0\right)^{2}$&%
$\psi = C_{0} + C_{1} \ln\frac{\left(x - \xsub0\right)^{2}}{\left(x - \xsub1\right)^{2}}$\\[20pt]%
\input{plots/c}&\input{plots/d}\\[-13pt]%
$\psi = C_{0} + C_{1} \arg\left(\pm(x - \zeta)^{2}\right)$&%
$\psi = C_{0} + \frac{\left(x - \xsub0\right) \cdot \nu}{\left(x - \xsub0\right)^{2}}$\\[20pt]%
\input{plots/e}&\begin{tikzpicture}[gnuplot]
%% generated with GNUPLOT 5.4p1 (Lua 5.4; terminal rev. Jun 2020, script rev. 114)
%% Po 16. ledna 2023, 12:23:24
\path (0.000,0.000) rectangle (1.600,2.500);
\gpcolor{rgb color={0.000,0.000,0.000}}
\gpsetlinetype{gp lt border}
\gpsetdashtype{gp dt solid}
\gpsetlinewidth{2.00}
\draw[gp path] (0.016,0.496)--(0.016,0.512)--(0.016,0.528)--(0.016,0.543)--(0.016,0.559)%
  --(0.016,0.575)--(0.016,0.591)--(0.016,0.607)--(0.016,0.623)--(0.016,0.638)--(0.016,0.654)%
  --(0.016,0.670)--(0.016,0.686)--(0.016,0.702)--(0.016,0.718)--(0.016,0.733)--(0.016,0.749)%
  --(0.016,0.765)--(0.016,0.781)--(0.016,0.797)--(0.016,0.813)--(0.016,0.828)--(0.016,0.844)%
  --(0.016,0.860)--(0.016,0.876)--(0.016,0.892)--(0.016,0.908)--(0.016,0.923)--(0.016,0.939)%
  --(0.016,0.955)--(0.016,0.971)--(0.016,0.987)--(0.016,1.003)--(0.016,1.018)--(0.016,1.034)%
  --(0.016,1.050)--(0.016,1.066)--(0.016,1.082)--(0.016,1.097)--(0.016,1.113)--(0.016,1.129)%
  --(0.016,1.145)--(0.016,1.161)--(0.016,1.177)--(0.016,1.192)--(0.016,1.208)--(0.016,1.224)%
  --(0.016,1.240)--(0.016,1.256)--(0.016,1.272)--(0.016,1.287)--(0.016,1.303)--(0.016,1.319)%
  --(0.016,1.335)--(0.016,1.351)--(0.016,1.367)--(0.016,1.382)--(0.016,1.398)--(0.016,1.414)%
  --(0.016,1.430)--(0.016,1.446)--(0.016,1.462)--(0.016,1.477)--(0.016,1.493)--(0.016,1.509)%
  --(0.016,1.525)--(0.016,1.541)--(0.016,1.556)--(0.016,1.572)--(0.016,1.588)--(0.016,1.604)%
  --(0.016,1.620)--(0.016,1.636)--(0.016,1.651)--(0.016,1.667)--(0.016,1.683)--(0.016,1.699)%
  --(0.016,1.715)--(0.016,1.731)--(0.016,1.746)--(0.016,1.762)--(0.016,1.778)--(0.016,1.794)%
  --(0.016,1.810)--(0.016,1.826)--(0.016,1.841)--(0.016,1.857)--(0.016,1.873)--(0.016,1.889)%
  --(0.016,1.905)--(0.016,1.921)--(0.016,1.936)--(0.016,1.952)--(0.016,1.968)--(0.016,1.984)%
  --(0.016,2.000)--(0.016,2.016)--(0.016,2.031)--(0.016,2.047)--(0.016,2.063);
\draw[gp path] (0.329,0.496)--(0.329,0.512)--(0.329,0.528)--(0.329,0.543)--(0.329,0.559)%
  --(0.329,0.575)--(0.329,0.591)--(0.329,0.607)--(0.329,0.623)--(0.329,0.638)--(0.329,0.654)%
  --(0.329,0.670)--(0.329,0.686)--(0.329,0.702)--(0.329,0.718)--(0.329,0.733)--(0.329,0.749)%
  --(0.329,0.765)--(0.329,0.781)--(0.329,0.797)--(0.329,0.813)--(0.329,0.828)--(0.329,0.844)%
  --(0.329,0.860)--(0.329,0.876)--(0.329,0.892)--(0.329,0.908)--(0.329,0.923)--(0.329,0.939)%
  --(0.329,0.955)--(0.329,0.971)--(0.329,0.987)--(0.329,1.003)--(0.329,1.018)--(0.329,1.034)%
  --(0.329,1.050)--(0.329,1.066)--(0.329,1.082)--(0.329,1.097)--(0.329,1.113)--(0.329,1.129)%
  --(0.329,1.145)--(0.329,1.161)--(0.329,1.177)--(0.329,1.192)--(0.329,1.208)--(0.329,1.224)%
  --(0.329,1.240)--(0.329,1.256)--(0.329,1.272)--(0.329,1.287)--(0.329,1.303)--(0.329,1.319)%
  --(0.329,1.335)--(0.329,1.351)--(0.329,1.367)--(0.329,1.382)--(0.329,1.398)--(0.329,1.414)%
  --(0.329,1.430)--(0.329,1.446)--(0.329,1.462)--(0.329,1.477)--(0.329,1.493)--(0.329,1.509)%
  --(0.329,1.525)--(0.329,1.541)--(0.329,1.556)--(0.329,1.572)--(0.329,1.588)--(0.329,1.604)%
  --(0.329,1.620)--(0.329,1.636)--(0.329,1.651)--(0.329,1.667)--(0.329,1.683)--(0.329,1.699)%
  --(0.329,1.715)--(0.329,1.731)--(0.329,1.746)--(0.329,1.762)--(0.329,1.778)--(0.329,1.794)%
  --(0.329,1.810)--(0.329,1.826)--(0.329,1.841)--(0.329,1.857)--(0.329,1.873)--(0.329,1.889)%
  --(0.329,1.905)--(0.329,1.921)--(0.329,1.936)--(0.329,1.952)--(0.329,1.968)--(0.329,1.984)%
  --(0.329,2.000)--(0.329,2.016)--(0.329,2.031)--(0.329,2.047)--(0.329,2.063);
\draw[gp path] (0.643,0.496)--(0.643,0.512)--(0.643,0.528)--(0.643,0.543)--(0.643,0.559)%
  --(0.643,0.575)--(0.643,0.591)--(0.643,0.607)--(0.643,0.623)--(0.643,0.638)--(0.643,0.654)%
  --(0.643,0.670)--(0.643,0.686)--(0.643,0.702)--(0.643,0.718)--(0.643,0.733)--(0.643,0.749)%
  --(0.643,0.765)--(0.643,0.781)--(0.643,0.797)--(0.643,0.813)--(0.643,0.828)--(0.643,0.844)%
  --(0.643,0.860)--(0.643,0.876)--(0.643,0.892)--(0.643,0.908)--(0.643,0.923)--(0.643,0.939)%
  --(0.643,0.955)--(0.643,0.971)--(0.643,0.987)--(0.643,1.003)--(0.643,1.018)--(0.643,1.034)%
  --(0.643,1.050)--(0.643,1.066)--(0.643,1.082)--(0.643,1.097)--(0.643,1.113)--(0.643,1.129)%
  --(0.643,1.145)--(0.643,1.161)--(0.643,1.177)--(0.643,1.192)--(0.643,1.208)--(0.643,1.224)%
  --(0.643,1.240)--(0.643,1.256)--(0.643,1.272)--(0.643,1.287)--(0.643,1.303)--(0.643,1.319)%
  --(0.643,1.335)--(0.643,1.351)--(0.643,1.367)--(0.643,1.382)--(0.643,1.398)--(0.643,1.414)%
  --(0.643,1.430)--(0.643,1.446)--(0.643,1.462)--(0.643,1.477)--(0.643,1.493)--(0.643,1.509)%
  --(0.643,1.525)--(0.643,1.541)--(0.643,1.556)--(0.643,1.572)--(0.643,1.588)--(0.643,1.604)%
  --(0.643,1.620)--(0.643,1.636)--(0.643,1.651)--(0.643,1.667)--(0.643,1.683)--(0.643,1.699)%
  --(0.643,1.715)--(0.643,1.731)--(0.643,1.746)--(0.643,1.762)--(0.643,1.778)--(0.643,1.794)%
  --(0.643,1.810)--(0.643,1.826)--(0.643,1.841)--(0.643,1.857)--(0.643,1.873)--(0.643,1.889)%
  --(0.643,1.905)--(0.643,1.921)--(0.643,1.936)--(0.643,1.952)--(0.643,1.968)--(0.643,1.984)%
  --(0.643,2.000)--(0.643,2.016)--(0.643,2.031)--(0.643,2.047)--(0.643,2.063);
\draw[gp path] (0.956,0.496)--(0.956,0.512)--(0.956,0.528)--(0.956,0.543)--(0.956,0.559)%
  --(0.956,0.575)--(0.956,0.591)--(0.956,0.607)--(0.956,0.623)--(0.956,0.638)--(0.956,0.654)%
  --(0.956,0.670)--(0.956,0.686)--(0.956,0.702)--(0.956,0.718)--(0.956,0.733)--(0.956,0.749)%
  --(0.956,0.765)--(0.956,0.781)--(0.956,0.797)--(0.956,0.813)--(0.956,0.828)--(0.956,0.844)%
  --(0.956,0.860)--(0.956,0.876)--(0.956,0.892)--(0.956,0.908)--(0.956,0.923)--(0.956,0.939)%
  --(0.956,0.955)--(0.956,0.971)--(0.956,0.987)--(0.956,1.003)--(0.956,1.018)--(0.956,1.034)%
  --(0.956,1.050)--(0.956,1.066)--(0.956,1.082)--(0.956,1.097)--(0.956,1.113)--(0.956,1.129)%
  --(0.956,1.145)--(0.956,1.161)--(0.956,1.177)--(0.956,1.192)--(0.956,1.208)--(0.956,1.224)%
  --(0.956,1.240)--(0.956,1.256)--(0.956,1.272)--(0.956,1.287)--(0.956,1.303)--(0.956,1.319)%
  --(0.956,1.335)--(0.956,1.351)--(0.956,1.367)--(0.956,1.382)--(0.956,1.398)--(0.956,1.414)%
  --(0.956,1.430)--(0.956,1.446)--(0.956,1.462)--(0.956,1.477)--(0.956,1.493)--(0.956,1.509)%
  --(0.956,1.525)--(0.956,1.541)--(0.956,1.556)--(0.956,1.572)--(0.956,1.588)--(0.956,1.604)%
  --(0.956,1.620)--(0.956,1.636)--(0.956,1.651)--(0.956,1.667)--(0.956,1.683)--(0.956,1.699)%
  --(0.956,1.715)--(0.956,1.731)--(0.956,1.746)--(0.956,1.762)--(0.956,1.778)--(0.956,1.794)%
  --(0.956,1.810)--(0.956,1.826)--(0.956,1.841)--(0.956,1.857)--(0.956,1.873)--(0.956,1.889)%
  --(0.956,1.905)--(0.956,1.921)--(0.956,1.936)--(0.956,1.952)--(0.956,1.968)--(0.956,1.984)%
  --(0.956,2.000)--(0.956,2.016)--(0.956,2.031)--(0.956,2.047)--(0.956,2.063);
\draw[gp path] (1.270,0.496)--(1.270,0.512)--(1.270,0.528)--(1.270,0.543)--(1.270,0.559)%
  --(1.270,0.575)--(1.270,0.591)--(1.270,0.607)--(1.270,0.623)--(1.270,0.638)--(1.270,0.654)%
  --(1.270,0.670)--(1.270,0.686)--(1.270,0.702)--(1.270,0.718)--(1.270,0.733)--(1.270,0.749)%
  --(1.270,0.765)--(1.270,0.781)--(1.270,0.797)--(1.270,0.813)--(1.270,0.828)--(1.270,0.844)%
  --(1.270,0.860)--(1.270,0.876)--(1.270,0.892)--(1.270,0.908)--(1.270,0.923)--(1.270,0.939)%
  --(1.270,0.955)--(1.270,0.971)--(1.270,0.987)--(1.270,1.003)--(1.270,1.018)--(1.270,1.034)%
  --(1.270,1.050)--(1.270,1.066)--(1.270,1.082)--(1.270,1.097)--(1.270,1.113)--(1.270,1.129)%
  --(1.270,1.145)--(1.270,1.161)--(1.270,1.177)--(1.270,1.192)--(1.270,1.208)--(1.270,1.224)%
  --(1.270,1.240)--(1.270,1.256)--(1.270,1.272)--(1.270,1.287)--(1.270,1.303)--(1.270,1.319)%
  --(1.270,1.335)--(1.270,1.351)--(1.270,1.367)--(1.270,1.382)--(1.270,1.398)--(1.270,1.414)%
  --(1.270,1.430)--(1.270,1.446)--(1.270,1.462)--(1.270,1.477)--(1.270,1.493)--(1.270,1.509)%
  --(1.270,1.525)--(1.270,1.541)--(1.270,1.556)--(1.270,1.572)--(1.270,1.588)--(1.270,1.604)%
  --(1.270,1.620)--(1.270,1.636)--(1.270,1.651)--(1.270,1.667)--(1.270,1.683)--(1.270,1.699)%
  --(1.270,1.715)--(1.270,1.731)--(1.270,1.746)--(1.270,1.762)--(1.270,1.778)--(1.270,1.794)%
  --(1.270,1.810)--(1.270,1.826)--(1.270,1.841)--(1.270,1.857)--(1.270,1.873)--(1.270,1.889)%
  --(1.270,1.905)--(1.270,1.921)--(1.270,1.936)--(1.270,1.952)--(1.270,1.968)--(1.270,1.984)%
  --(1.270,2.000)--(1.270,2.016)--(1.270,2.031)--(1.270,2.047)--(1.270,2.063);
\draw[gp path] (1.583,0.496)--(1.583,0.512)--(1.583,0.528)--(1.583,0.543)--(1.583,0.559)%
  --(1.583,0.575)--(1.583,0.591)--(1.583,0.607)--(1.583,0.623)--(1.583,0.638)--(1.583,0.654)%
  --(1.583,0.670)--(1.583,0.686)--(1.583,0.702)--(1.583,0.718)--(1.583,0.733)--(1.583,0.749)%
  --(1.583,0.765)--(1.583,0.781)--(1.583,0.797)--(1.583,0.813)--(1.583,0.828)--(1.583,0.844)%
  --(1.583,0.860)--(1.583,0.876)--(1.583,0.892)--(1.583,0.908)--(1.583,0.923)--(1.583,0.939)%
  --(1.583,0.955)--(1.583,0.971)--(1.583,0.987)--(1.583,1.003)--(1.583,1.018)--(1.583,1.034)%
  --(1.583,1.050)--(1.583,1.066)--(1.583,1.082)--(1.583,1.097)--(1.583,1.113)--(1.583,1.129)%
  --(1.583,1.145)--(1.583,1.161)--(1.583,1.177)--(1.583,1.192)--(1.583,1.208)--(1.583,1.224)%
  --(1.583,1.240)--(1.583,1.256)--(1.583,1.272)--(1.583,1.287)--(1.583,1.303)--(1.583,1.319)%
  --(1.583,1.335)--(1.583,1.351)--(1.583,1.367)--(1.583,1.382)--(1.583,1.398)--(1.583,1.414)%
  --(1.583,1.430)--(1.583,1.446)--(1.583,1.462)--(1.583,1.477)--(1.583,1.493)--(1.583,1.509)%
  --(1.583,1.525)--(1.583,1.541)--(1.583,1.556)--(1.583,1.572)--(1.583,1.588)--(1.583,1.604)%
  --(1.583,1.620)--(1.583,1.636)--(1.583,1.651)--(1.583,1.667)--(1.583,1.683)--(1.583,1.699)%
  --(1.583,1.715)--(1.583,1.731)--(1.583,1.746)--(1.583,1.762)--(1.583,1.778)--(1.583,1.794)%
  --(1.583,1.810)--(1.583,1.826)--(1.583,1.841)--(1.583,1.857)--(1.583,1.873)--(1.583,1.889)%
  --(1.583,1.905)--(1.583,1.921)--(1.583,1.936)--(1.583,1.952)--(1.583,1.968)--(1.583,1.984)%
  --(1.583,2.000)--(1.583,2.016)--(1.583,2.031)--(1.583,2.047)--(1.583,2.063);
%% coordinates of the plot area
\gpdefrectangularnode{gp plot 1}{\pgfpoint{0.016cm}{0.496cm}}{\pgfpoint{1.583cm}{2.063cm}}
\end{tikzpicture}
%% gnuplot variables\\[-13pt]%
$\psi = C_{0} + C_{1} \arg\,\bigl(\pm(x - \xsub0) \cdot \mu \bigr)$&%
$\psi = C_{0} + x \cdot \nu$%
\end{tabular}
\smallskip
\caption{Potentials compatible with $D \ge 6$ type III alignment preserving KK reduction}\label{fig:psi-contours}
\end{figure}

\begin{remark}
	Having found the explicit form of $\psi$, we could proceed to compute also the explicit form of the conformal factor $a^{2}$ by means of the relation \eqref{eq:III-a-equation}.
\end{remark}

\begin{remark}
	In $D = 4$, the equation \eqref{eq:III-psi-equation} admits more solutions, such as for example a linear combination of three fields $\psi$ in the form of \eqref{eq:III-psi-solution-a}. But from the trace of \eqref{eq:III-psi-equation}, we know that all such solutions must be harmonic (unlike for $D > 4$):
	\begin{equation}
		\psi_{,ii} = 0\,.
	\end{equation}
\end{remark}

\subsubsection{Maxwell field}

In frame components, the speciality condition for $\bm{\mathcal{F}}$ reads:
\begin{equation}
	\bm{\mathcal{F}} = \mathcal{F}\_{(1)(i)}\,\bm{\varepsilon}\^{(1)}\!\wedge \bm{\varepsilon}\^{(i)},
\end{equation}
which is translated to coordinates as
\begin{equation}
	\bm{\mathcal{F}} = \mathcal{F}_{ui}\,\bm{\mathrm{d}}u \wedge \bm{\mathrm{d}}x\^{(i)},
\end{equation}
or in the contravariant version:
\begin{equation}\label{eq:kundt-F-alignment-contravariant}
	{}^{\sharp\sharp}\bm{\mathcal{F}} = a^{-2} \mathcal{F}_{ui}\,\bm{\partial}_{r} \wedge \bm{\partial}_{i}\,.
\end{equation}

Until now, we have been neglecting the equation \eqref{eq:reduced-einstein-eq-b} for $\bm{\mathcal{F}}$. In coordinates, it becomes
\begin{equation}
	\Big( a^{D-2}\,\mathrm{e}^{((D-4)\alpha + 3\beta)\phi} \mathcal{F}^{IJ} \Big)_{\!,J} = 0\,,
\end{equation}
where the indices $\scriptstyle I$, $\scriptstyle J$ label components in coordinates $r$, $u$, $x^{i}$.
Looking at \eqref{eq:kundt-F-alignment-contravariant}, we see that the $u$ component is satisfied identically. The $i$ component demands that $\mathcal{F}_{ui}$ is independent of the coordinate $r$:
\begin{equation}
	\mathcal{F}_{ui,r} = 0\,.
\end{equation}
Finally, for the $r$ component, we get after substituting for $a$ using \eqref{eq:III-a-equation} and \eqref{eq:psi-definition}:
\begin{equation}\label{eq:kundt-maxwell-equation}
	\Big( \psi^{3} \left(\psi_{,j}\psi_{,j}\right)^{\!\frac{D-4}{2}}\!\mathcal{F}_{ui} \Big)\vphantom{\mathcal{F}}_{,i} = 0\,.
\end{equation}

\begin{remark}
	Applying the Poincar\'e lemma on $\mathcal{F}_{[ab;c]} = 0$, which in coordinates reduces to
	\begin{equation}
		\mathcal{F}_{ui,j} = \mathcal{F}_{uj,i}\,,
	\end{equation}
	we see that one can locally always choose a gauge in which $\mathcal{F}_{ui}$ is a gradient of the potential:
	\begin{equation}
		\mathcal{F}_{ui} = -\mathcal{A}_{u,i}\,.
	\end{equation}
\end{remark}

\subsubsection{Einstein equation}

We also have to satisfy the equation \eqref{eq:reduced-einstein-eq-a}. The components of the Ricci tensor at its left-hand side are expressed in \cite{podolsky-kundt-clas}. The $rr$ component is satisfied trivially. The $ri$ component says that $W_{i}$ is a polynomial in $r$ of order at most $1$. The $ru$ component says that $H$ is a polynomial in $r$ of order at most $2$ and expresses the highest order in terms of the remaining components of $g_{ab}$. The $ij$ component reads as follows:
\begin{equation}\label{eq:kundt-R-ij}
	R_{ij} = \SR_{ij} + W_{(i,j)r} - \tfrac{1}{2} W_{i,r} W_{j,r} - \tfrac{1}{2} g^{kl} W_{k,r} (2 g_{l(i,j)} - g_{ij,l})\,,
\end{equation}
where $g^{ij} \equiv {}^{\mathrm{S}}\!g^{ij}$ and $\SR_{ij}$ is the Ricci tensor of the $(D-2)$-dimensional Riemannian metric ${}^{\mathrm{S}}\!g_{ij}$.

We won't deal here directly with the $ui$ and $uu$ equations, the only ones that restrict the $u$-de\-pen\-dence of the fields. Instead, we will extract an interesting information from them, which is unrelated to $u$-de\-pen\-dence -- using a trick that involves a particular \bw0 Weyl tensor component. Expressing the component $\hat{R}_{ruij}$ in terms of the metric's partial derivatives, we get\footnote{The computation \eqref{eq:kundt-Wir-closed-computation} is to be performed in gauge $\mathcal{A}_{r} = 0$, but the result about $W_{i,r}$ being a closed form, is gauge-invariant.} \citep[\bgroup \it cf.\egroup][]{podolsky-kundt-clas}:
\begin{equation}\label{eq:kundt-Wir-closed-computation}
	0 = \hat{C}_{ruij} = \hat{R}_{ruij} = \mathrm{e}^{2\alpha\phi} W_{[i,j]r}\,.
\end{equation}
This means that $W_{i,r}$ is locally a gradient of some function (let us call it $-2 \ln f$, for reasons which will become apparent later):
\begin{equation}
	W_{i,r} = \left( -2 \ln f \right)_{,i}\,.
\end{equation}
Substituting this into \eqref{eq:kundt-R-ij}, we get (after multiplying by the nonvanishing $f$) a linear equation for $f$:
\begin{equation}\label{eq:III-ij-1}
	2f_{,ij} - g^{kl} f_{,k} (2g_{l(i,j)} - g_{ij,l}) + (R_{ij} - \SR_{ij}) f = 0\,.
\end{equation}
Here, $R_{ij}$ is to be substituted from \eqref{eq:reduced-einstein-eq-a} using $\phi_{;ij}$ from \eqref{eq:III-phi-ij}:
\begin{equation}
	R_{ij} = (D-2)(\alpha-\beta)\alpha\,\phi_{,i} \phi_{,j} - \big(\beta+(D-2)\alpha\big)\alpha\,s\,\eta_{ij}\,,
\end{equation}
while the conformally flat $\SR_{ij}$ can be expressed easily by means of the conformal factor $a^{2}$, which is related to $\phi$ by \eqref{eq:III-a-equation}; after further eliminating $\phi_{,ij}$ using \eqref{eq:III-phi-equation}:
\begin{align}
	\SR_{ij} ={}& (D-4) \left(\alpha(\alpha-\beta)\,\phi_{,i} \phi_{,j} - \frac{s_{\!,ij}}{2s} + 3\,\frac{s_{\!,i} s_{\!,j}}{4s^{2}}\right) - {}\nonumber\\*
	& {} - \left((\alpha-\beta)\big((D-4)\alpha + (D-3)\beta\big) s + \frac{s_{\!,kk}}{2s} + (D-6)\frac{s_{\!,k} s_{\!,k}}{4s^{2}}\right) \eta_{ij}\,.
\end{align}
In the above expressions, $s$ is defined as the \enquote{square} of the gradient of $\phi$:
\begin{equation}
	s := \phi_{,i} \phi_{,i}\,.
\end{equation}

In \eqref{eq:III-ij-1}, we can also express the spatial part of the metric by means of \eqref{eq:conformally-flat-metric}:
\begin{equation}\label{eq:III-ij-2}
	2f_{,ij} - f_{,i} (\ln{a^{2}})_{,j} - f_{,j} (\ln{a^{2}})_{,i} + f_{,k} (\ln{a^{2}})_{,k}\,\eta_{ij} + (R_{ij} - \SR_{ij}) f = 0\,.
\end{equation}
The conformal factor $a^{2}$ can be in turn expressed by means of \eqref{eq:III-a-equation} in terms of $\phi$. Using the relation \eqref{eq:III-phi-01}, we can extract from this equation information unrelated to $f$. Indeed, contracting \eqref{eq:III-ij-2} with $\eta^{jk} \phi_{,k}$ and substituting from \eqref{eq:III-a-equation}, we get after some manipulation:
\begin{equation}
	\eqref{eq:III-ij-2}_{ij} \phi_{,j} = (2\alpha\beta s\,\eta_{ij} + R_{ij} - \SR_{ij})\,f\phi_{,j} - 2s^{-1}\eqref{eq:III-phi-equation}_{ij} f_{,j}\,,
\end{equation}
where $\text{(x)}_{ij}$ marks the left hand side of the corresponding equation rearranged as $\text{(x)}_{ij} = 0$. This means that $\phi_{,i}$ is an eigenvector of the matrix $\SR_{ij} - R_{ij}$ with the corresponding eigenvalue equal to $2\alpha\beta s$:
\begin{equation}\label{eq:phi-eigenvector-of-ricci}
	(R_{ij} - \SR_{ij})\,\phi_{,j} + 2 \alpha\beta s\,\phi_{,i} = 0\,.
\end{equation}

On the other hand, under a partial substitution of
\begin{equation}
	\varsigma := \frac{1}{\sqrt{\psi_{,i} \psi_{,i}}}\,,
\end{equation}
the overall expression for the \enquote{residual} Ricci curvature $\SR_{ij} - R_{ij}$ simplifies to:
\begin{multline}\label{eq:III-SRij-Rij}
	\SR_{ij} - R_{ij} - 2\alpha\beta\,\phi_{,i}\phi_{,j} + 2\alpha^{2}\,(\phi_{,i} \phi_{,j} - s \eta_{ij}) = {}\\*
	{} = (D-4) \left(\frac{\varsigma_{,ij}}{\varsigma} - \frac{\varsigma_{,k} \varsigma_{,k}}{\varsigma^{2}} \eta_{ij}\right) + \left(\frac{\varsigma_{,kk}}{\varsigma} - \frac{\varsigma_{,k} \varsigma_{,k}}{\varsigma^{2}}\right) \eta_{ij}\,.
\end{multline}
With this relation, it is a matter of direct calculation to show that for all the solutions \eqref{eq:III-psi-solution}, the scalar field gradient $\phi_{,i}$ is indeed an eigenvector of the \enquote{residual} Ricci curvature $\SR_{ij} - R_{ij}$:
\begin{equation}\label{eq:kundt-eigenvector-in-psi}
	\psi_{,i} \in \ker \varsigma_{,ij}\,.
\end{equation}
Similarly, for all solutions \eqref{eq:III-psi-solution}, it can also be shown that
\begin{equation}
	\label{eq:kundt-eigenvalue}
	\psi_{,i} \left(\vphantom{\frac{\frac{1}{1}}{1}}(D-4) \left(\frac{\varsigma_{,ij}}{\varsigma} - \frac{\varsigma_{,k} \varsigma_{,k}}{\varsigma^{2}} \eta_{ij}\right) + \left(\frac{\varsigma_{,kk}}{\varsigma} - \frac{\varsigma_{,k} \varsigma_{,k}}{\varsigma^{2}}\right) \eta_{ij} \right) \psi_{,j} = 0\,,
\end{equation}
meaning that all of the solutions \eqref{eq:III-psi-solution} are compatible with the condition \eqref{eq:phi-eigenvector-of-ricci}.

\begin{remark}
	For $D=4$, other solutions are be possible. For example the strictly $D=4$ solution of \eqref{eq:III-psi-equation}
	\begin{equation}\label{eq:pure-4d-psi}
		\psi = C_{0} + \sum_{i=1}^{3} C_{i} \ln \left(x - \xsub i\right)^{2},
	\end{equation}
	does not generally satisfy even \eqref{eq:kundt-eigenvector-in-psi}, which however does not\footnote{thanks to the $D-4$ factor in \eqref{eq:III-SRij-Rij}} disqualify it as a scalar field compatible with type III Weyl alignment preserving Kaluza--Klein reduction.
\end{remark}

\subsection{Type N Kaluza--Klein reduction}

\begin{proposition}\label{prop:N}
	Let $D \ge 4$. Let $\bm{\ell}$ be a 4-WAND of a $D$-dimensional spacetime. Suppose that this spacetime is a Kaluza--Klein reduction of a $(D+1)$-dimensional vacuum spacetime. Then $\bm{\hat{\ell}}$ is a 4-WAND of this $(D+1)$-dimensional spacetime, iff the following holds locally:
	\begin{subequations}\label{eq:conditions-N}
	\begin{gather}
		\label{eq:conditions-N-a}
		\Phi\_{;(1)(i)} = 0\,,\\
		\label{eq:conditions-N-b}
		(\beta-\alpha)\Big(\phi\_{;(i)}\mathcal{F}\_{(1)(j)} + 2\phi\_{;(j)}\mathcal{F}\_{(1)(i)}\Big) + \alpha g\^{(k)(l)} \phi\_{;(k)} \mathcal{F}\_{(1)(l)} g\_{(i)(j)} + \mathcal{F}\_{(1)(i);(j)} = 0\,,
	\end{gather}
	\end{subequations}
	together with conditions \eqref{eq:conditions-I}, \eqref{eq:conditions-II} and \eqref{eq:conditions-III}.
\end{proposition}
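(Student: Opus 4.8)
The plan is to follow exactly the pattern of Propositions~\ref{stat:I}, \ref{stat:II} and~\ref{stat:III}. Since $\bm{\ell}$ is type~N it is in particular a $3$-WAND, so with conditions~\eqref{eq:conditions-I}, \eqref{eq:conditions-II} and~\eqref{eq:conditions-III} assumed, Proposition~\ref{stat:III} already guarantees that every $\text{b.w.}\ge 0$ component of $\bm{\hat{C}}$ vanishes. Hence $\bm{\hat{\ell}}$ is a $4$-WAND if and only if, in addition, all $\text{b.w.}=-1$ components of $\bm{\hat{C}}$ vanish. By Table~\ref{tab:weyl-generators} it suffices to treat the three generators of that boost weight, namely $\hat{C}\_{[1][i][j][k]}$, $\hat{C}\_{[1][z][i][z]}$ and $\hat{C}\_{[1][i][j][z]}$; the remaining $\text{b.w.}=-1$ components are their linear combinations by the Weyl symmetries.

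First I would evaluate these generators through Proposition~\ref{stat:weyl-KK}, simplifying systematically with the type~III data $\mathcal{F}\_{(i)(j)}=0$, $\mathcal{F}\_{(0)(i)}=0$, $\mathcal{F}\_{(0)(1)}=0$ (from~\eqref{eq:conditions-Ib}, \eqref{eq:conditions-III-a} and~\eqref{eq:conditions-III-b}) together with the frame orthogonalities $g\_{(1)(i)}=0$. These reduce $\mathcal{F}^{2}\_{(1)(i)}$ to zero and collapse the contraction $g\^{(e)(f)} \phi\_{;(e)} \mathcal{F}\_{(1)(f)}$ to its purely transverse part $g\^{(k)(l)} \phi\_{;(k)} \mathcal{F}\_{(1)(l)}$. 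Applying~\eqref{eq:weyl-KK-c} to $\hat{C}\_{[1][z][i][z]}$ then leaves only the term proportional to $\bigl(\mathrm{e}^{(-2\alpha+\beta)\phi}\bigr)\_{;(1)(i)}$, which is a nonzero multiple of $\Phi\_{;(1)(i)}$, so that $\hat{C}\_{[1][z][i][z]}=0$ is equivalent to condition~\eqref{eq:conditions-N-a}. Likewise, feeding $\hat{C}\_{[1][i][j][z]}$ into~\eqref{eq:weyl-KK-b} and discarding the terms killed above reproduces verbatim the left-hand side of condition~\eqref{eq:conditions-N-b}. As in the earlier propositions, the degenerate value $\beta=2\alpha$ is absorbed uniformly by the two-case definition of $\Phi$, the singular coefficient $\tfrac{\beta}{2\alpha-\beta}$ cancelling against the vanishing derivative of $\mathrm{e}^{(-2\alpha+\beta)\phi}$.

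It then remains to show that the third generator $\hat{C}\_{[1][i][j][k]}$ vanishes automatically, with no further condition --- the exact mirror of the observation in the proof of Proposition~\ref{stat:II} that $\hat{C}\_{[0][i][j][k]}$ was redundant. Reading~\eqref{eq:weyl-KK-a} at the indices $\scriptstyle\s{[1][i][j][k]}$, the $D$-dimensional piece $C\_{(1)(i)(j)(k)}$ is a $\text{b.w.}=-1$ component of $\bm{C}$ and hence vanishes by the type~N hypothesis; the bilinear Maxwell contributions vanish because $\mathcal{F}\_{(j)(k)}=0$; and the Kulkarni--Nomizu term, whose symmetric coefficient I denote $N\_{(a)(b)}$, enters only through $N\_{(1)(j)}$ and $N\_{(1)(k)}$ (the mixed entries $g\_{(1)(j)}$ and $g\_{(1)(k)}$ vanishing). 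Each of these reduces, after the type~III cancellations $\mathcal{F}^{2}\_{(1)(j)}=0$ and $g\_{(1)(j)}=0$, to a multiple of $\Phi\_{;(1)(j)}$, so condition~\eqref{eq:conditions-N-a} forces $\hat{C}\_{[1][i][j][k]}=0$ and closes the equivalence in both directions.

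The computations are routine; the one point requiring care is the bookkeeping in this last step, namely verifying that the residual coefficient $N\_{(1)(j)}$ of the $\owedge g\_{(c)(d)}$ term really collapses to a multiple of $\Phi\_{;(1)(j)}$, so that $\hat{C}\_{[1][i][j][k]}$ generates no constraint independent of~\eqref{eq:conditions-N-a}. I would also flag that, although conditions~\eqref{eq:conditions-N-a}--\eqref{eq:conditions-N-b} superficially resemble the $0\leftrightarrow 1$ image of the type~II conditions~\eqref{eq:conditions-II}, this is not a genuine symmetry of the reduced configuration: the surviving Maxwell component is now $\mathcal{F}\_{(1)(i)}$ rather than $\mathcal{F}\_{(0)(i)}$, and $\bm{\ell}$ is the distinguished direction, so each generator must be evaluated directly rather than inferred from duality.
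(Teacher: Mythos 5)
Your proposal is correct and follows essentially the same route as the paper's own proof: it reduces the problem to the three \bw{-1} generators of table~\ref{tab:weyl-generators}, identifies $\hat{C}\_{[1][z][i][z]}=0$ with \eqref{eq:conditions-N-a} via \eqref{eq:weyl-KK-c}, identifies $\hat{C}\_{[1][i][j][z]}=0$ with \eqref{eq:conditions-N-b} via \eqref{eq:weyl-KK-b}, and verifies that $\hat{C}\_{[1][i][j][k]}=0$ is automatic given \eqref{eq:conditions-N-a}. Your write-up simply spells out the simplifications (vanishing of $\mathcal{F}^{2}\_{(1)(i)}$, collapse of the $\owedge g$ term, the $\beta=2\alpha$ limit) that the paper's terse proof leaves implicit.
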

\begin{proof}
	Let's discuss the three generating Weyl tensor components for \bw{-1} as listed in table~\ref{tab:weyl-generators}. The condition \eqref{eq:conditions-N-a} is equivalent to the equation
	\begin{equation}
		\hat{C}\_{[1][z][i][z]} = 0\,,
	\end{equation}
	while the equation
	\begin{equation}
		\hat{C}\_{[1][i][j][k]} = 0
	\end{equation}
	is consistent with it. The condition \eqref{eq:conditions-N-b} is equivalent to the equation
	\begin{equation}
		\hat{C}\_{[1][i][j][z]} = 0\,.
	\end{equation}
\end{proof}

\begin{remark}
	For $\mathcal{F}_{ab} = 0$, the condition \eqref{eq:conditions-N-b} is satisfied trivially and the type N Kaluza--Klein alignment conditions reduce to
	\begin{subequations}
	\begin{align}
		\Phi\_{;(0)(0)} &= 0\,,\\
		\Phi\_{;(0)(i)} &= 0\,,\\
		\Phi\_{;(1)(i)} &= 0\,,\\
		\Phi\_{;(i)(j)} - \Phi\_{;(0)(1)} g\_{(i)(j)} &= 0\,,
	\end{align}
	\end{subequations}
	which is equivalent to
	\begin{equation}
		\mathrm{bo}_{\langle\bm{\ell}\rangle}\,(\Phi_{;ab} - \Phi\_{;(0)(1)} g_{ab}) < -1\,.
	\end{equation}
\end{remark}

\begin{proposition}
	The trace and the antisymmetric part of \eqref{eq:conditions-N-b} respectively are equivalent to
	\begin{subequations}
	\begin{align}
		\label{eq:conditions-N-b-trace-frame}
		g\^{(i)(j)} \mathcal{F}\_{(1)(i)} \left(\phi\_{;(j)} - \tfrac{1}{\alpha} L\_{(j)(1)}\right) &= 0\,,\\
		\label{eq:conditions-N-b-antisymmetric-frame}
		\mathcal{F}\_{(1)[(i)} \left( L\_{(j)](1)} + (\beta-\alpha)\,\phi\_{;(j)]} \right) &= 0\,.
	\end{align}
	\end{subequations}
\end{proposition}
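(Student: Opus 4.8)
The plan is to project equation~\eqref{eq:conditions-N-b} onto its trace and onto its antisymmetric part in the transverse indices $(i)(j)$, and to show that each projection collapses to the asserted identity once the standing assumptions of Proposition~\ref{prop:N} are imposed. In particular \eqref{eq:conditions-Ib}, \eqref{eq:conditions-III-a} and \eqref{eq:conditions-III-b} leave $\mathcal{F}_{(1)(i)}$ as the only independent nonvanishing frame components of $\bm{\mathcal{F}}$. The sole term of \eqref{eq:conditions-N-b} that is not already algebraic in $\bm{\mathcal{F}}$ and $\bm{\nabla}\phi$ is the derivative $\mathcal{F}_{(1)(i);(j)}$, so the whole task reduces to rewriting this term, once antisymmetrized and once traced, using the Leibniz rule and the vacuum Maxwell equation~\eqref{eq:reduced-einstein-eq-b}.

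I would treat the antisymmetric part first. The $\alpha$-term of \eqref{eq:conditions-N-b}, being proportional to $g_{(i)(j)}$, drops out, while the $(\beta-\alpha)$-bracket antisymmetrizes to $(\beta-\alpha)\,\mathcal{F}_{(1)[(i)}\phi_{;(j)]}$. For the derivative I would invoke the closedness $\mathcal{F}_{[ab;c]} = 0$, whose frame form gives $2\,\mathcal{F}_{(1)[(i);(j)]} = -\mathcal{F}_{(i)(j);(1)}$; expanding $\mathcal{F}_{(i)(j);(1)}$ by the Leibniz rule and using that $\mathcal{F}_{(i)(j)}$ vanishes identically while $\mathcal{F}_{(0)(a)} = 0$, only the coefficient $\ell_{(i);(1)} = L_{(i)(1)}$ of the frame-vector derivatives survives, so that $\mathcal{F}_{(i)(j);(1)} = L_{(i)(1)}\mathcal{F}_{(1)(j)} - L_{(j)(1)}\mathcal{F}_{(1)(i)}$. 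Adding the two contributions reproduces exactly $\mathcal{F}_{(1)[(i)}\bigl(L_{(j)](1)} + (\beta-\alpha)\,\phi_{;(j)]}\bigr)$, which is \eqref{eq:conditions-N-b-antisymmetric-frame}.

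For the trace I would contract \eqref{eq:conditions-N-b} with $g^{(i)(j)}$ and abbreviate $X := g^{(k)(l)}\phi_{;(k)}\mathcal{F}_{(1)(l)}$; the algebraic terms then contribute $\bigl(3(\beta-\alpha) + (D-2)\alpha\bigr)X$. The transverse divergence $g^{(i)(j)}\mathcal{F}_{(1)(i);(j)}$ I would relate to the full frame divergence $g^{(b)(c)}\mathcal{F}_{(1)(b);(c)}$, which by completeness of the null frame equals $n^{a} g^{bc}\mathcal{F}_{ab;c}$ and hence, by~\eqref{eq:reduced-einstein-eq-b}, equals $-\bigl((D-4)\alpha + 3\beta\bigr)X$. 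Subtracting the two remaining frame terms, namely $\mathcal{F}_{(1)(1);(0)} = 0$ (antisymmetry) and $\mathcal{F}_{(1)(0);(1)} = -L_{(i)(1)}\mathcal{F}_{(1)(i)}$ (Leibniz, as above), leaves $g^{(i)(j)}\mathcal{F}_{(1)(i);(j)} = -\bigl((D-4)\alpha + 3\beta\bigr)X + L_{(i)(1)}\mathcal{F}_{(1)(i)}$. The coefficient of $X$ in the traced equation then reduces to $-\alpha$, so the trace reads $-\alpha X + L_{(i)(1)}\mathcal{F}_{(1)(i)} = 0$; dividing by $-\alpha \ne 0$ yields precisely \eqref{eq:conditions-N-b-trace-frame}.

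The main obstacle is the bookkeeping in the two Leibniz expansions: one must decompose the frame-vector derivatives $\ell^{a}{}_{;(1)}$ and $e_{(i)}{}^{a}{}_{;(1)}$ into Ricci rotation coefficients and identify which contractions with $\mathcal{F}_{(a)(b)}$ survive after the type~I and type~III vanishing conditions are used. In each case only $L_{(i)(1)}$ remains, and it is this fact --- together with the clean cancellation of the $X$-coefficient down to $-\alpha$ and the matching of the $(\beta-\alpha)$ factor --- that makes the trace and the antisymmetric part coincide with the stated equations.
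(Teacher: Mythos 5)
Your proposal is correct and takes essentially the same route as the paper's own proof: both treat the antisymmetric part via the closedness identity $\mathcal{F}_{(1)(i);(j)} - \mathcal{F}_{(1)(j);(i)} = -\mathcal{F}_{(i)(j);(1)}$ followed by a Leibniz expansion into Ricci rotation coefficients (where only $L_{(i)(1)}$ survives because $\mathcal{F}_{(1)(i)}$ are the sole nonvanishing components), and both treat the trace by relating $g^{(i)(j)}\mathcal{F}_{(1)(i);(j)}$ to the full divergence via the Maxwell equation \eqref{eq:reduced-einstein-eq-b} and the Leibniz identity $\mathcal{F}_{(0)(1);(1)} = \mathcal{F}_{(1)(i)}L^{(i)}{}_{(1)}$. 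The only differences are cosmetic: the order in which the two parts are handled and your more explicit bookkeeping of the coefficient cancellation down to $-\alpha$.
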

\begin{proof}
	Using the Leibniz rule, we express
	\begin{equation}
		\mathcal{F}\_{(0)(1);(1)} = \mathcal{F}\_{(1)(i)} L\^{(i)}\_{\hphantom{(i)}(1)}
	\end{equation}
	and with help of the Maxwell equation \eqref{eq:reduced-einstein-eq-b}, obtain
	\begin{multline}
		g\^{(i)(j)} \mathcal{F}\_{(1)(i);(j)} = g^{bc} \mathcal{F}\_{(1)b;c} - \mathcal{F}\_{(1)(0);(1)} = {}\\
		{} = -((D-4)\,\alpha + 3\beta)\,g\^{(i)(j)}\mathcal{F}\_{(1)(i)}\phi\_{;(j)} + \mathcal{F}\_{(1)(i)}L\^{(i)}\_{\hphantom{(i)}(1)}\,,
	\end{multline}
	which allows us to write the trace of \eqref{eq:conditions-N-b} as \eqref{eq:conditions-N-b-trace-frame}.

	For the antisymmetric part of \eqref{eq:conditions-N-b}, we note that
	\begin{equation}
		\mathcal{F}\_{(1)(i);(j)} - \mathcal{F}\_{(1)(j);(i)} = -\mathcal{F}\_{(i)(j);(1)}
	\end{equation}
	and continue in the same vein as in the case of the trace, using Leibniz rule and frame orthogonality to convert the problem to Ricci rotation coefficients:
	\begin{equation}
		\mathcal{F}\_{(i)(j);(1)} = -\mathcal{F}\_{(1)(i)} L\_{(j)(1)} + \mathcal{F}\_{(1)(j)} L\_{(i)(1)}\,,
	\end{equation}
	The antisymmetric part of \eqref{eq:conditions-N-b} then becomes \eqref{eq:conditions-N-b-antisymmetric-frame}.
\end{proof}

From now on in this section, we will suppose that
\begin{equation}
	\mathcal{F}_{ab} \neq 0\,.
\end{equation}
Then the $D$-dimensional spacetime is Kundt, as we already mentioned in section \ref{chap:III-kundt}.

The equation \eqref{eq:conditions-N-b-antisymmetric-frame} says that $\mathcal{F}\_{(1)(i)}$ and $L\_{(i)(1)} + (\beta-\alpha)\,\phi\_{;(i)}$ are linearly dependent:
\begin{equation}\label{eq:N-L-phi-propto-F}
	L\_{(i)(1)} + (\beta-\alpha)\,\phi\_{;(i)} \propto \mathcal{F}\_{(1)(i)}\,.
\end{equation}
Substituting this into \eqref{eq:conditions-N-b-trace-frame}, we can eliminate $\mathcal{F}\_{(1)(i)}$:
\begin{equation}
	g\^{(i)(j)} \left( L\_{(i)(1)} + (\beta-\alpha)\,\phi\_{;(i)} \right) \left( \alpha\phi\_{;(j)} - L\_{(j)(1)} \right) = 0\,.
\end{equation}
This can be simplified using \eqref{eq:kundt-phi-ij}, to get
\begin{equation}\label{eq:proportional-gradient-norms-frame}
	g\^{(i)(j)} \left( \alpha^{2} \phi\_{;(i)} \phi\_{;(j)} - L\_{(i)(1)} L\_{(j)(1)} \right) = 0\,.
\end{equation}
Now we can see that
\begin{equation}
	L\_{(i)(1)} = \alpha \phi\_{;(i)}\,,
\end{equation}
since, using \eqref{eq:kundt-phi-ij} and \eqref{eq:proportional-gradient-norms-frame},
\begin{multline}
	g\^{(i)(j)} \big( L\_{(i)(1)} - \alpha \phi\_{;(i)} \big) \big( L\_{(j)(1)} - \alpha \phi\_{;(j)} \big) = {}\\
	{} = g\^{(i)(j)} \big( L\_{(i)(1)} L\_{(j)(1)} + \alpha^{2}\phi\_{;(i)} \phi\_{;(j)} - 2\alpha L\_{(i)(1)} \phi\_{;(j)} \big) = 0\,.
\end{multline}
Also, substituting this back to \eqref{eq:N-L-phi-propto-F}, we see that $\phi\_{;(i)}$ must be proportional to $\mathcal{F}\_{(1)(i)}$.
\begin{corollary}
	As a necessary condition for type N Weyl alignment preserving Kaluza--Klein reduction of vacuum with nonvanishing Maxwell field, following must hold:
	\begin{subequations}
	\begin{gather}
		\label{eq:N-L-phi-relation}
		L\_{(i)(1)} = \alpha \phi\_{;(i)}\,,\\
		\label{eq:N-phi-propto-F}
		\phi\_{;(i)} \propto \mathcal{F}\_{(1)(i)}\,.
	\end{gather}
	\end{subequations}
\end{corollary}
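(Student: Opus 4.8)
The plan is to read the corollary straight off the trace and antisymmetric parts of the type N condition, \eqref{eq:conditions-N-b-trace-frame} and \eqref{eq:conditions-N-b-antisymmetric-frame}, which are already available, together with the Kundt identity \eqref{eq:kundt-phi-ij} that holds because $\mathcal{F}_{ab}\neq 0$ forces $(\mathcal{M},\bm{g})$ to be Kundt. First I would observe that the antisymmetric part \eqref{eq:conditions-N-b-antisymmetric-frame} is precisely the vanishing of an antisymmetrized product of the two transverse covectors $\mathcal{F}\_{(1)(i)}$ and $L\_{(i)(1)} + (\beta-\alpha)\,\phi\_{;(i)}$, so it is equivalent to their linear dependence, i.e.\ the proportionality \eqref{eq:N-L-phi-propto-F}.

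Next I would feed this proportionality into the trace condition \eqref{eq:conditions-N-b-trace-frame} to eliminate $\mathcal{F}\_{(1)(i)}$ altogether, leaving a purely geometric contraction of $L\_{(i)(1)}$ and $\phi\_{;(i)}$. The term proportional to $\phi\_{;(i)}$ drops out immediately by the Kundt relation \eqref{eq:kundt-phi-ij}, and rewriting the surviving $L\_{(i)(1)}\phi\_{;(j)}$ contraction through \eqref{eq:kundt-phi-ij} once more collapses everything to the balance of transverse norms \eqref{eq:proportional-gradient-norms-frame}, $g\^{(i)(j)}\bigl(\alpha^{2}\phi\_{;(i)}\phi\_{;(j)} - L\_{(i)(1)}L\_{(j)(1)}\bigr)=0$.

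The decisive step, and the one I expect to be the only subtle point, is upgrading this equality of norms to the pointwise identity $L\_{(i)(1)} = \alpha\phi\_{;(i)}$. Here I would form the nonnegative quadratic $g\^{(i)(j)}\bigl(L\_{(i)(1)} - \alpha\phi\_{;(i)}\bigr)\bigl(L\_{(j)(1)} - \alpha\phi\_{;(j)}\bigr)$ and show it vanishes: its diagonal terms cancel by \eqref{eq:proportional-gradient-norms-frame}, while the cross term $-2\alpha\,g\^{(i)(j)}L\_{(i)(1)}\phi\_{;(j)}$ is pinned down again by \eqref{eq:kundt-phi-ij}. Since the transverse block $g\^{(i)(j)}$ with $i,j=2,\dots,D-1$ is positive definite (the screen space is Riemannian), a vanishing nonnegative quadratic form forces the vector itself to vanish, yielding \eqref{eq:N-L-phi-relation}. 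I would stress that this conclusion genuinely relies on the definiteness of the transverse metric, so the argument is specific to the Lorentzian null-frame setting and would not survive an indefinite screen space.

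Finally, substituting $L\_{(i)(1)} = \alpha\phi\_{;(i)}$ back into \eqref{eq:N-L-phi-propto-F} turns the left-hand vector into $\beta\,\phi\_{;(i)}$, and since $\beta\neq 0$ the proportionality $\beta\,\phi\_{;(i)} \propto \mathcal{F}\_{(1)(i)}$ is the same as $\phi\_{;(i)} \propto \mathcal{F}\_{(1)(i)}$, which is \eqref{eq:N-phi-propto-F}. Both assertions of the corollary then follow, completing the argument.
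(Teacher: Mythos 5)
Your proof is correct and follows essentially the same route as the paper: linear dependence from the antisymmetric part \eqref{eq:conditions-N-b-antisymmetric-frame}, elimination of $\mathcal{F}\_{(1)(i)}$ in the trace condition \eqref{eq:conditions-N-b-trace-frame} via the Kundt relation \eqref{eq:kundt-phi-ij} to reach \eqref{eq:proportional-gradient-norms-frame}, and then the vanishing quadratic $g\^{(i)(j)}\bigl(L\_{(i)(1)}-\alpha\phi\_{;(i)}\bigr)\bigl(L\_{(j)(1)}-\alpha\phi\_{;(j)}\bigr)=0$ forcing \eqref{eq:N-L-phi-relation}, with \eqref{eq:N-phi-propto-F} obtained by back-substitution. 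Your explicit remarks on the positive definiteness of the transverse metric and on $\beta\neq 0$ are points the paper leaves implicit, and they are both valid in this setting.
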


The $\scriptstyle\s{(1)}$~component of the Maxwell equation \eqref{eq:reduced-einstein-eq-b} is now equivalent to the trace of \eqref{eq:conditions-N-b}, while the $\scriptstyle\s{(0)}$~component is satisfied identically and the $\scriptstyle\s{(i)}$~component demands that
\begin{equation}\label{eq:N-maxwell-0}
	\mathcal{F}\_{(1)(i);(0)} = 0\,.
\end{equation}

For $\phi\_{;(i)} = 0$, the condition \eqref{eq:conditions-N-b} reduces to
\begin{equation}
	\mathcal{F}\_{(1)(i);(j)} = 0\,,
\end{equation}
which, together with \eqref{eq:N-maxwell-0}, demands that the Maxwell field depends only on $u$ as defined by \eqref{eq:kundt-metric}:
\begin{subequations}
\begin{align}
	\mathcal{F}_{ui,r} &= 0\,,\\
	\mathcal{F}_{ui,j} &= 0\,.
\end{align}
\end{subequations}

From now on in this section, let's assume that
\begin{equation}\label{N-phi-i-nonvanishing}
	\phi\_{;(i)} \neq 0\,.
\end{equation}
Condition \eqref{eq:N-phi-propto-F} can then be written as
\begin{equation}\label{eq:N-F-propto-phi}
	\mathcal{F}\_{(1)(i)} = \gamma \phi\_{;(i)}\,,
\end{equation}
where $\gamma$ is some function on $\mathcal{M}$. Note that the choice of $\gamma$ is frame-dependent. Using this result and the Kundt property \eqref{eq:conditions-III-d} while applying the Leibniz rule as usual, we can express
\begin{equation}
	\mathcal{F}\_{(1)(i);(j)} = \phi\_{;(i)} \gamma\_{;(j)} + \gamma \phi\_{;(i)(j)} - \gamma \phi\_{;(i)} L\_{(1)(j)}\,.
\end{equation}
In a frame with affinely parametrized $\bm{\ell}$, we have
\begin{equation}
	L\_{(1)(i)} = L\_{(i)(1)}\,,
\end{equation}
which can be shown for example using a frame that satisfies \eqref{eq:kundt-frame}. Using \eqref{eq:N-L-phi-relation}, the condition \eqref{eq:conditions-N-b} then becomes
\begin{equation}
	\gamma \phi\_{;(i)(j)} + (3\beta-2\alpha)\,\gamma \phi\_{;(i)} \phi\_{;(j)} + \alpha \gamma g\^{(k)(l)} \phi\_{;(k)} \phi\_{;(l)} g\_{(i)(j)} + \phi\_{;(i)} \gamma\_{;(j)} = 0\,,
\end{equation}
which can be further simplified using \eqref{eq:III-phi-ij-frame}:
\begin{equation}
	2\beta \gamma \phi\_{;(i)} \phi\_{;(j)} + \phi\_{;(i)} \gamma\_{;(j)} = 0\,.
\end{equation}
As we assume $\phi\_{;(i)}$ to be nonvanishing \eqref{N-phi-i-nonvanishing}, it can be factored out:
\begin{equation}
	2\beta \gamma \phi\_{;(i)} + \gamma\_{;(i)} = 0\,,
\end{equation}
or in terms of $\psi$:
\begin{equation}
	(\psi^{2} \gamma)\_{;(i)} = 0\,.
\end{equation}
\begin{corollary}\label{stat:N-with-F-and-phi-i}
	Following conditions are necessary and sufficient for type N Weyl alignment preserving Kaluza--Klein reduction of vacuum with nonvanishing Maxwell field and with scalar field gradient of boost order at least zero:
	\begin{align*}
		\mathrm{bo}_{\langle\bm{\ell}\rangle}\,(\Phi_{;ab} - \Phi\_{;(0)(1)} g_{ab}) &< -1 & L\_{(i)(j)} &= 0 & \phi\_{;(i)} &= \tfrac{1}{\alpha} L\_{(i)(1)}\\*
		\mathrm{bo}_{\langle\bm{\ell}\rangle}\,\mathcal{F}_{ab} &= -1 & (\kappa\^{(i)} &= 0) & \mathcal{F}\_{(1)(i)} &= \gamma \phi\_{;(i)}\\*
		\mathrm{bo}_{\langle\bm{\ell}\rangle}\,\phi_{;a} &= 0 &&& \gamma\_{;(i)} &= -2\beta \gamma \phi\_{;(i)}\,,
	\end{align*}
	where a frame with affinely parametrized $\bm{\ell}$ is used and where $\gamma$ is a frame-dependent function on $\mathcal{M}$.
\end{corollary}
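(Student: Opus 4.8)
The plan is to read the corollary as a dictionary that packages, into a handful of boost-order statements and transport relations, the frame-component conditions already extracted in this section, and then to verify the translation in both directions. I would begin by unpacking the first column. The inequality $\mathrm{bo}_{\langle\bm{\ell}\rangle}(\Phi_{;ab} - \Phi\_{;(0)(1)} g_{ab}) < -1$ is merely the simultaneous assertion of the three $\Phi$-equations inherited from types I, II and III, namely $\Phi\_{;(0)(0)} = 0$, $\Phi\_{;(0)(i)} = 0$ and \eqref{eq:conditions-III-e}, together with the genuinely type-N equation \eqref{eq:conditions-N-a}; the condition $\mathrm{bo}_{\langle\bm{\ell}\rangle}\,\mathcal{F}_{ab} = -1$ is the joint content of \eqref{eq:conditions-Ib}, \eqref{eq:conditions-III-a} and \eqref{eq:conditions-III-b}, leaving $\mathcal{F}\_{(1)(i)}$ as the only surviving frame component; and $\mathrm{bo}_{\langle\bm{\ell}\rangle}\,\phi_{;a} = 0$ is the standing hypothesis $\mathrm{bo}_{\langle\bm{\ell}\rangle}\,\phi_{;a} \ge 0$ combined with $\phi\_{;(0)} = 0$ from \eqref{eq:kundt-phi-0}, which under $\mathcal{F}_{ab} \neq 0$ is exactly the nonvanishing-gradient branch \eqref{N-phi-i-nonvanishing}.

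Next I would invoke section \ref{chap:III-kundt}: since $\mathcal{F}_{ab} \neq 0$, the reduced spacetime is Kundt, which supplies the middle-column relations $L\_{(i)(j)} = 0$ and, redundantly, $\kappa\^{(i)} = 0$. With the first two columns settled, the only remaining substantive part of \eqref{eq:conditions-N} is the rank-two tensor equation \eqref{eq:conditions-N-b}, and the right column is precisely its reduction. I would reproduce the derivation in the order already set out: its antisymmetric part \eqref{eq:conditions-N-b-antisymmetric-frame} gives the proportionality \eqref{eq:N-L-phi-propto-F}; substituting this into its trace \eqref{eq:conditions-N-b-trace-frame} and using \eqref{eq:kundt-phi-ij} produces the norm equality \eqref{eq:proportional-gradient-norms-frame}, from which, by positive-definiteness of the transverse metric $g\_{(i)(j)}$, the first right-column relation $L\_{(i)(1)} = \alpha\phi\_{;(i)}$ follows; and \eqref{eq:N-L-phi-propto-F} then forces $\phi\_{;(i)}$ to be proportional to $\mathcal{F}\_{(1)(i)}$, which in the branch \eqref{N-phi-i-nonvanishing} I write as the second right-column relation \eqref{eq:N-F-propto-phi}, $\mathcal{F}\_{(1)(i)} = \gamma\phi\_{;(i)}$.

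The concluding step is to feed $L\_{(i)(1)} = \alpha\phi\_{;(i)}$ and $\mathcal{F}\_{(1)(i)} = \gamma\phi\_{;(i)}$ back into the full equation \eqref{eq:conditions-N-b} and show it is equivalent to the transport equation $\gamma\_{;(i)} = -2\beta\gamma\phi\_{;(i)}$. For this I would expand $\mathcal{F}\_{(1)(i);(j)}$ by the Leibniz rule with the Kundt identity \eqref{eq:conditions-III-d}, use the affine-parametrization symmetry $L\_{(1)(i)} = L\_{(i)(1)}$ (checkable in a frame of the form \eqref{eq:kundt-frame}), and cancel the quadratic-in-$\phi$ terms through \eqref{eq:III-phi-ij-frame}, arriving at $2\beta\gamma\phi\_{;(i)}\phi\_{;(j)} + \phi\_{;(i)}\gamma\_{;(j)} = 0$; factoring out the nonvanishing $\phi\_{;(i)}$ yields the third right-column relation. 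Reading this chain backwards gives sufficiency: the first-column conditions reinstate all the $\Phi$-equations and \eqref{eq:conditions-N-a}, the $\mathcal{F}$- and Kundt-conditions together with $\phi\_{;(0)} = 0$ make every $\mathcal{F}$- and $L$-dependent clause of \eqref{eq:conditions-I}, \eqref{eq:conditions-II} and \eqref{eq:conditions-III} hold trivially, and the three right-column relations rebuild \eqref{eq:conditions-N-b}, so Proposition \ref{prop:N} applies.

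The main obstacle I anticipate lies in the tensor bookkeeping of \eqref{eq:conditions-N-b}: its trace and antisymmetric part do not exhaust the equation, so I must verify that once the two algebraic relations for $L\_{(i)(1)}$ and $\mathcal{F}\_{(1)(i)}$ are imposed the symmetric-traceless remainder is genuinely absorbed by \eqref{eq:III-phi-ij-frame} rather than imposing a further constraint, and conversely that the factoring-out of $\phi\_{;(i)}$ is legitimate precisely in the branch \eqref{N-phi-i-nonvanishing} and introduces no spurious solutions. Passing from the scalar norm equality \eqref{eq:proportional-gradient-norms-frame} to the vector identity $L\_{(i)(1)} = \alpha\phi\_{;(i)}$ is the other delicate point, since it rests essentially on the Riemannian signature of $g\_{(i)(j)}$.
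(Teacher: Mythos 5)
Your proposal is correct and follows essentially the same route as the paper: unpacking the boost-order statements into the frame conditions of propositions \ref{stat:I}--\ref{prop:N}, invoking the Kundt property from $\mathcal{F}_{ab} \neq 0$, extracting $L\_{(i)(1)} = \alpha\phi\_{;(i)}$ from the antisymmetric part, trace, \eqref{eq:kundt-phi-ij} and Riemannian signature, and then substituting back into the full equation \eqref{eq:conditions-N-b} with the Leibniz rule, the affine-parametrization symmetry $L\_{(1)(i)} = L\_{(i)(1)}$ and \eqref{eq:III-phi-ij-frame} to obtain $\gamma\_{;(i)} = -2\beta\gamma\phi\_{;(i)}$. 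The delicate points you flag (the symmetric-traceless remainder and the factoring out of $\phi\_{;(i)}$) are handled in the paper exactly as you propose, by working with the full tensor equation rather than only its trace and antisymmetric parts.
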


With the help of \eqref{eq:kundt-frame}, which is already affinely parametrized, the conditions translate easily to coordinates of \eqref{eq:kundt-metric}:
\begin{subequations}
\begin{align}
	\label{eq:N-phi-i-coordinates}
	\phi_{,i} &= \tfrac{1}{2\alpha} W_{i,r}\,,\\*
	\mathcal{F}_{ui} &= \gamma \phi_{,i}\,,\\*
	\gamma_{,i} &= -2 \beta \gamma \phi_{,i}\,.
\end{align}
\end{subequations}
For $D = 4$ or $D \ge 6$, the transverse metric is conformally flat and we can substitute everything into the transverse Einstein equation \eqref{eq:kundt-R-ij}. Thanks to \eqref{eq:N-phi-i-coordinates}, we can choose
\begin{equation}
	f = \mathrm{e}^{-\alpha\phi}\,.
\end{equation}
Substituting into \eqref{eq:III-ij-2}, we immediately reveal the same pattern as in \eqref{eq:III-phi-ij}, which can be used to show that
\begin{equation}
	2\alpha(\beta-\alpha) \phi_{,i}\phi_{,j} + 2\alpha^{2} \phi_{,k} \phi_{,k} \eta_{ij} + R_{ij} - \SR_{ij} = 0\,.
\end{equation}
But this expression is (up to a sign) precisely the left hand side of \eqref{eq:III-SRij-Rij}. Requiring the right hand side to also vanish:
\begin{equation}\label{eq:N-phi-compatibility-condition}
	(D-4) \left(\frac{\varsigma_{,ij}}{\varsigma} - \frac{\varsigma_{,k} \varsigma_{,k}}{\varsigma^{2}} \eta_{ij}\right) + \left(\frac{\varsigma_{,kk}}{\varsigma} - \frac{\varsigma_{,k} \varsigma_{,k}}{\varsigma^{2}}\right) \eta_{ij} = 0\,,
\end{equation}
we find that for $D=4$, the condition is already satisfied, given \eqref{eq:kundt-eigenvalue}; however in $D \ge 6$, not all of the scalar field solutions \eqref{eq:III-psi-solution} are compatible. Specifically, all of \eqref{eq:III-psi-solution-a}, \eqref{eq:III-psi-solution-b}, \eqref{eq:III-psi-solution-c} and \eqref{eq:III-psi-solution-e} have nonvanishing off-diagonal components of $\varsigma_{,ij}$, making them incompatible with $D \ge 6$ type~N alignment conditions (for $\mathcal{F}_{ab} \neq 0$). The other two solutions however, are not ruled out by this condition: for \eqref{eq:III-psi-solution-f}, $\varsigma$ is constant (not taking into account the $u$ coordinate) and for \eqref{eq:III-psi-solution-d}, $\varsigma_{,ij}$ turns out to be symmetric and with zero traceless part:
\begin{equation}
	\varsigma_{,ij} = \frac{2}{\left| \nu \right|}\,\eta_{ij}\,,
\end{equation}
which is enough to satisfy \eqref{eq:N-phi-compatibility-condition}, given \eqref{eq:kundt-eigenvalue}. We can conclude that to be compatible with $D \ge 6$ type N alignment preserving Kaluza--Klein reduction with $\mathcal{F}_{ab} \neq 0$, the scalar field must be in either of the following two forms:
\begin{subequations}\label{eq:N-phi-solution}
\begin{align}
	\phi &= \frac{1}{\beta} \ln\left(C_{0} + \frac{\left(x - \xsub0\right) \cdot \nu}{\left(x - \xsub0\right)^{2}}\right)
	\shortintertext{or}
	\phi &= \frac{1}{\beta} \ln\left(C_{0} + x \cdot \nu\right).
\end{align}
\end{subequations}

\section{Conclusion}

We investigated preservation of Weyl tensor null alignment algebraic types within a~Kaluza--Klein reduction of a vacuum spacetime along a Killing vector field where the two aligned null directions are parallel \eqref{eq:original-frame-a} in a gauge where they are perpendicular to the Maxwell potential.

First, we demonstrated an elegant way to derive the Kaluza--Klein relation of Riemann tensors \eqref{eq:riemann-KK} and Weyl tensors \eqref{eq:weyl-KK}, employing a frame formalism; we further used this technique to derive the relations of optical matrices \eqref{eq:reduction-rho} and also of non-geodeticities \eqref{eq:reduction-kappa}, revealing some interesting consequences regarding twist and Kundt spacetimes. Later, we presented convenient algebraic conditions (summarized in section \ref{chap:alignment-conditions-summary}) necessary and sufficient for the Kaluza--Klein lift to be Weyl type preserving with respect to the related null directions, individually for all of types I, II, III and N. Finally, for types III and N in $D \neq 5$ with non-vanishing Maxwell field, we inquired specific forms of scalar field $\phi$ that these conditions demand: we found that in $D \geq 6$, the conditions are likely more strict (see \eqref{eq:III-psi-solution} and figure \ref{fig:psi-contours} for type III, or \eqref{eq:N-phi-solution} for type N) on $\phi$ behavior than in $D = 4$ (see for example \eqref{eq:pure-4d-psi}).

\backmatter

\bmhead{Acknowledgments}
This work was done with patient support of the author's supervisor Mgr. Vojt\v ech Pravda, Ph.D., DSc., who also suggested the topic.

\section*{Declarations}

\bmhead{Funding}
This work has been supported by research plan RVO: 67985840.
The work was done under the auspices of the Albert Einstein Center for Gravitation and Astrophysics, Czech Republic.

\bmhead{Data availability}
Data sharing is not applicable to this article as no datasets were generated or analyzed during the current study.

\bmhead{Author contributions}
The author claims to be responsible for the following roles related to this work: conceptualization, methodology, formal analysis, visualization, validation, writing.

\begin{appendices}

\section{Commutator of Lie and covariant derivative}
\label{chap:lie-covariant-commutator}

Being faced with the rather trivial problem of determining $\mathcal{L}_{\bm{\partial}_z}\!\bm{\nabla} \bm{\hat{g}}$, instead of resorting to reflections on coordinate components of $\bm{\nabla}$, we remember the explicit formula for the commutator of a Lie derivative and a torsion-free covariant derivative, as demonstrated in \citep{cov-lie-commut}. For completeness, let's work out a generalization, admitting nonvanishing torsion. Later, on the contrary, we will focus on the special case of metric covariant derivative, which allows us to use the Ricci identity to completely eliminate the Riemann tensor from the expression.
\begin{theorem}
	The commutator of a Lie derivative along an arbitrary vector field $\bm{\xi}$ and an arbitrary covariant derivative $\bm{\nabla}$ with Riemann curvature $R^{a}_{\hphantom{a}bcd}$ and torsion $T^{a}_{\hphantom{a}bc}$, is given by a~tensor field $\gamma^{a}_{\hphantom{a}bc}$:
	\begin{align}
		\label{eq:lie-covariant-commutator}
		(\mathcal{L}_{\bm{\xi}} \nabla_{a} - \nabla_{a} \mathcal{L}_{\bm{\xi}}) A^{b\dots c}_{d\dots e} ={}& \gamma^{b}_{\hphantom{b}a\mskip-3mu f} A^{f\mskip-3mu \dots c}_{d\dots e} + \dots + \gamma^{c}_{\hphantom{c}a\mskip-3mu f} A^{b\dots\mskip-3mu f}_{d\dots e} - {}\nonumber\\*
		 & {} - \gamma^{f}_{\hphantom{f}ad} A^{b\dots c}_{f\mskip-3mu \dots e} - \dots - \gamma^{f}_{\hphantom{f}ae} A^{b\dots c}_{d\dots \mskip-3mu f}\,,
	\end{align}
	specifically:
	\begin{equation}
		\label{eq:lie-covariant-commutator-tensor}
		\gamma^{a}_{\hphantom{a}bc} = R^{a}_{\hphantom{a}cdb} \xi^{d} + \nabla_{b} (\nabla_{c} \xi^{a} + T^{a}_{\hphantom{a}dc} \xi^{d})\,.
	\end{equation}
\end{theorem}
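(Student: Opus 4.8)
The plan is to prove the commutator formula \eqref{eq:lie-covariant-commutator} by first establishing that the operator $\mathcal{L}_{\bm{\xi}} \nabla_{a} - \nabla_{a} \mathcal{L}_{\bm{\xi}}$, acting on tensor fields, is itself ultralocal (that is, $C^{\infty}$-linear in the tensor argument), so that it must be realized by contraction with some tensor field $\gamma^{a}_{\hphantom{a}bc}$ acting as a derivation, exactly as displayed in \eqref{eq:lie-covariant-commutator}. The derivation structure (one $\gamma$ term per index, with the appropriate sign for covariant versus contravariant slots) follows because both $\mathcal{L}_{\bm{\xi}}$ and $\nabla$ obey the Leibniz rule, hence so does their commutator; and the commutator kills functions, since on a scalar $f$ one has $\mathcal{L}_{\bm{\xi}} \nabla_{a} f = \mathcal{L}_{\bm{\xi}} \mathrm{d}_{a} f = \mathrm{d}_{a}(\bm{\xi} f) = \nabla_{a} \mathcal{L}_{\bm{\xi}} f$. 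Once we know the commutator is a derivation vanishing on functions, it is determined by its action on a single vector field, which pins down $\gamma^{a}_{\hphantom{a}bc}$.

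First I would compute the action on an arbitrary vector field $v^{b}$ directly. Writing out $\mathcal{L}_{\bm{\xi}} v^{b} = \xi^{d} \nabla_{d} v^{b} - v^{d} \nabla_{d} \xi^{b} - T^{b}_{\hphantom{b}dc}\xi^{d}v^{c}$ (the torsion-corrected coordinate-free expression for the Lie bracket, since $[\bm{\xi},\bm{v}]^{b} = \xi^{d}\nabla_{d}v^{b} - v^{d}\nabla_{d}\xi^{b} - T^{b}_{\hphantom{b}dc}\xi^{d}v^{c}$), I would apply $\nabla_{a}$ and subtract $\nabla_{a}$ composed with the same Lie derivative expression. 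The goal is to collect all terms, let the second derivatives of $v$ recombine via the definition of the Riemann tensor, and read off the coefficient of $v^{c}$. I expect the terms in which the derivative lands directly on $v$ to cancel in pairs except where a curvature commutator appears; the surviving pieces are precisely $R^{b}_{\hphantom{b}cda}\xi^{d}$ (from commuting the two derivatives on $v$ against $\xi$) together with $\nabla_{a}(\nabla_{c}\xi^{b} + T^{b}_{\hphantom{b}dc}\xi^{d})$ (from the derivative hitting the $\xi$-dependent coefficients). Matching against the derivation ansatz, with careful attention to which index of $\gamma$ is the derivative index $a$ and which contracts with $v$, yields \eqref{eq:lie-covariant-commutator-tensor}.

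The main obstacle I anticipate is bookkeeping of index placement and torsion terms: the Lie derivative of a vector acquires a torsion correction relative to the torsion-free case treated in \citep{cov-lie-commut}, and one must be scrupulous that the torsion contribution $T^{a}_{\hphantom{a}dc}\xi^{d}$ is differentiated as part of the coefficient and not double-counted when the second derivatives are commuted. In particular, the Riemann convention and the sign/ordering in $R^{a}_{\hphantom{a}cdb}\xi^{d}$ must be fixed to agree with the paper's definition \eqref{eq:riemann-definition}, and the antisymmetries of $T$ and $R$ must be tracked so that the final coefficient of $v^{c}$ lands with the indices arranged as in \eqref{eq:lie-covariant-commutator-tensor}. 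To confirm correctness I would verify the formula in the torsion-free flat case, where it must reduce to the known commutator of \citep{cov-lie-commut}, and check that $\gamma^{a}_{\hphantom{a}bc}$ is genuinely tensorial by confirming its construction involves only covariant operations. With $\gamma^{a}_{\hphantom{a}bc}$ fixed on vectors, the extension to arbitrary $A^{b\dots c}_{d\dots e}$ in \eqref{eq:lie-covariant-commutator} is immediate from the derivation property already established, completing the proof.
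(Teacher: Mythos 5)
Your proposal follows essentially the same route as the paper's own proof: first show that $\mathcal{L}_{\bm{\xi}} \nabla_{a} - \nabla_{a} \mathcal{L}_{\bm{\xi}}$ is an ultralocal derivation annihilating scalar functions (hence realized by a tensor $\gamma^{a}_{\hphantom{a}bc}$ acting algebraically), then fix $\gamma^{a}_{\hphantom{a}bc}$ by evaluating on a vector field via the torsion-corrected covariant expression of the Lie derivative and recombining the second derivatives through the Ricci identity with torsion. The only point worth making explicit, which the paper does and you gloss over, is that the commutator also annihilates the Kronecker delta (i.e.\ commutes with contraction), which is what actually yields the minus signs on the covariant slots once the action on vectors is known.
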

\begin{proof}
	To keep the analysis lucid, let's work without help of any coordinate covariant derivative (contrary to the derivation in \citep{cov-lie-commut}).

	We observe that the subject commutator
	\begin{equation}
		\mathrm{M}_{a} := \mathcal{L}_{\bm{\xi}} \nabla_{a} - \nabla_{a} \mathcal{L}_{\bm{\xi}}
	\end{equation}
	satisfies
	\begin{subequations}
	\begin{align}
		\bm{\mathrm{M}}(\bm{A}+\bm{B}) &= \bm{\mathrm{M}}\bm{A} + \bm{\mathrm{M}}\bm{B}\,,\\
		\bm{\mathrm{M}}(\bm{A}\bm{B}) &= (\bm{\mathrm{M}}\bm{A})\bm{B} + \bm{A}(\bm{\mathrm{M}}\bm{B})\,,\\
		\bm{\mathrm{M}}\bm{\delta} &= 0\,,\\
		\bm{\mathrm{M}}\phi &= 0\,,
	\end{align}
	\end{subequations}
	where $\bm{A}$, $\bm{B}$ are tensor fields, $\bm{\delta}$ is the Kronecker delta and $\phi$ is a scalar function.
	$\bm{\mathrm{M}}$ is linear and obeys Leibniz rule due to both $\mathcal{L}_{\bm{\xi}}$ and $\bm{\nabla}$ being linear and obeying Leibniz rule. Similarly, $\bm{\mathrm{M}}$ annihilates Kronecker delta (i.e. it commutes with contraction), because both $\mathcal{L}_{\bm{\xi}}$ and $\bm{\nabla}$ annihilate Kronecker delta. Finally, $\bm{\mathrm{M}}$ annihilates a scalar function because in this case, the covariant derivative is also the exterior derivative, which is known to commute with $\mathcal{L}$.

	From this fact alone, it is evident that $\bm{\mathrm{M}}$ is determined by some tensor $\gamma^{a}_{\hphantom{a}bc}$, as described by \eqref{eq:lie-covariant-commutator}. Let's derive the explicit form of $\bm{\gamma}$, by looking at the action of $\bm{\mathrm{M}}$ on an arbitrary vector field $A^{a}$. We have
	\begin{subequations}
	\begin{align}
		\mathcal{L}_{\bm{\xi}} A^{b} &= \xi^{c} \nabla_{c} A^{b} + \beta^{b}_{\hphantom{b}d} A^{d}\,,\\
		\mathcal{L}_{\bm{\xi}} \nabla_{a} A^{b} &= \xi^{c} \nabla_{c} \nabla_{a} A^{b} + \beta^{b}_{\hphantom{b}d} \nabla_{a} A^{d} - \beta^{d}_{\hphantom{d}a} \nabla_{d} A^{b}\,,
	\end{align}
	\end{subequations}
	where
	\begin{equation}
		\beta^{a}_{\hphantom{a}b} = - \nabla_{b} \xi^{a} - T^{a}_{\hphantom{a}cb} \xi^{c}\,.
	\end{equation}
	For $\bm{\mathrm{M}}$, this means:
	\begin{align}
		\mathrm{M}_{a} A^{b} &= \xi^{c} \nabla_{c} \nabla_{a} A^{b} + \beta^{b}_{\hphantom{b}d} \nabla_{a} A^{d} - \beta^{d}_{\hphantom{d}a} \nabla_{d} A^{b} - \nabla_{a} (\xi^{c} \nabla_{c} A^{b} + \beta^{b}_{\hphantom{b}d} A^{d}) = {}\nonumber\\
		{} &= 2\xi^{c} \nabla_{[c} \nabla_{a]} A^{b} - \beta^{d}_{\hphantom{d}a} \nabla_{d} A^{b} - (\nabla_{a} \xi^{c}) \nabla_{c} A^{b} - (\nabla_{a} \beta^{b}_{\hphantom{b}d}) A^{d} = {}\nonumber\\
		{} &= 2\xi^{c} \nabla_{[c} \nabla_{a]} A^{b} + T^{d}_{\hphantom{d}ca} \xi^{c} \nabla_{d} A^{b} - (\nabla_{a} \beta^{b}_{\hphantom{b}d}) A^{d}\,.
	\end{align}
	In the first two terms, we identify the Riemann tensor using the Ricci identity:
	\begin{equation}
		2\nabla_{[c} \nabla_{a]} A^{b} + T^{d}_{\hphantom{d}ca} \nabla_{d} A^{b} = R^{b}_{\hphantom{b}dca} A^{d}\,,
	\end{equation}
	completing the proof of \eqref{eq:lie-covariant-commutator-tensor}:
	\begin{equation}
		\gamma^{b}_{\hphantom{b}ad} A^{d} = \mathrm{M}_{a} A^{b} = (R^{b}_{\hphantom{b}dca} \xi^{c} - \nabla_{a} \beta^{b}_{\hphantom{b}d}) A^{d}\,.
	\end{equation}
\end{proof}

\begin{remark}
	This is enough to deem \eqref{eq:killing-covariant} proven, as in this case $\gamma^{A}_{\hphantom{A}BC} = 0$ thanks to corollary \ref{stat:riemann-z} and to definition of $\bm{\nabla}$ \eqref{eq:nabla-definition-vector}.
\end{remark}

\begin{remark}
	For a torsion-free covariant derivative, $\gamma^{a}_{\hphantom{a}bc}$ is symmetric in the last two indices:
	\begin{equation}
		\label{eq:lie-covariant-commutator-symmetry}
		\gamma^{a}_{\hphantom{a}bc} = \gamma^{a}_{\hphantom{a}cb}\,,
	\end{equation}
	due to Bianchi identity $R^{a}_{\hphantom{a}[bcd]} = 0$. This explains how our result reduces to the form presented in \cite{cov-lie-commut}:
	\begin{equation}
		\gamma^{a}_{\hphantom{a}bc} = R^{a}_{\hphantom{a}bdc} \xi^{d} + \nabla_{c} \nabla_{b} \xi^{a}\,.
	\end{equation}
\end{remark}

For the sake of curiosity, let's apply this formula to the special case of a metric covariant derivative, to eventually reveal a structure which resembles that of the Christoffel symbols.
\begin{corollary}
	The commutator of a Lie derivative along an arbitrary vector field $\bm{\xi}$ and a~Levi-Civita connection $\bm{\nabla}$ is given by \eqref{eq:lie-covariant-commutator}, where
	\begin{align}
		\gamma^{a}_{\hphantom{a}bc} &= g^{ad} (\xi_{d;cb} + \xi_{b;dc} - \xi_{b;cd}) = {}\nonumber\\
		{} &= g^{ad} (\xi_{d;bc} + \xi_{c;db} - \xi_{c;bd})\,.
	\end{align}
\end{corollary}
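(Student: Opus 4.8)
The plan is to specialize the general commutator tensor \eqref{eq:lie-covariant-commutator-tensor} to the torsion-free metric setting and then convert the surviving curvature term into second covariant derivatives of $\bm{\xi}$ via the Ricci identity. Putting $T^{a}_{\hphantom{a}bc}=0$ in \eqref{eq:lie-covariant-commutator-tensor} gives $\gamma^{a}_{\hphantom{a}bc} = R^{a}_{\hphantom{a}cdb}\,\xi^{d} + \nabla_{b}\nabla_{c}\xi^{a}$. Because the Levi-Civita connection annihilates $\bm{g}$, the index $a$ may be lowered through $\bm{\nabla}$ without penalty, so I would work with the fully covariant object $\gamma_{abc} = R_{acdb}\,\xi^{d} + \xi_{a;cb}$, keeping in mind that in the semicolon notation the outermost differentiation index is written last, i.e. $\xi_{a;cb}=\nabla_{b}\nabla_{c}\xi_{a}$.

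First I would rewrite the curvature term. The Ricci identity for the one-form $\bm{\xi}$ (the covector analogue of the vector identity already used in the proof of the theorem, hence with the opposite sign) reads $\xi_{b;ac}-\xi_{b;ca}=-R_{dbca}\,\xi^{d}$. Using the standard algebraic symmetries $R_{abcd}=-R_{bacd}$ and $R_{abcd}=R_{cdab}$ one verifies $R_{acdb}=-R_{dbca}$, whence $R_{acdb}\,\xi^{d}=\xi_{b;ac}-\xi_{b;ca}$. Substituting this into the covariant expression for $\gamma_{abc}$ collapses the curvature and produces the first stated form, $\gamma_{abc} = \xi_{a;cb} + \xi_{b;ac} - \xi_{b;ca}$; raising $a$ reproduces the corollary's first line.

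The second line then comes for free: since $\bm{\gamma}$ is symmetric in its last two indices by \eqref{eq:lie-covariant-commutator-symmetry}, I would simply interchange $b$ and $c$ in the first form to read off $\gamma_{abc}=\gamma_{acb} = \xi_{a;bc} + \xi_{c;ab} - \xi_{c;ba}$, which is the second claimed expression after raising $a$. The only genuinely delicate point in the whole argument is the index and sign bookkeeping of the Riemann tensor: one must be careful to use the covector version of the Ricci identity (with its sign reversed relative to the vector version invoked earlier) and to apply the pair symmetry $R_{abcd}=R_{cdab}$ together with the first-slot antisymmetry in exactly the combination that turns $R_{acdb}$ into $-R_{dbca}$. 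Everything else is routine substitution.
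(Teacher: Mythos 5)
Your proof is correct and follows essentially the same route as the paper: both specialize \eqref{eq:lie-covariant-commutator-tensor} to vanishing torsion, turn the curvature term into second derivatives of $\bm{\xi}$ via the Ricci identity combined with the metric-induced Riemann symmetries (the paper writes $R_{acdb}=R_{bdca}$ and uses the lowered vector identity, you write $R_{acdb}=-R_{dbca}$ and use the covector identity --- trivially equivalent), and obtain the second line from the symmetry $\gamma^{a}_{\hphantom{a}bc}=\gamma^{a}_{\hphantom{a}cb}$ of \eqref{eq:lie-covariant-commutator-symmetry}. The only cosmetic difference is that the paper additionally remarks that this second equality can be read as the closedness statement $\xi_{[a;bc]}=0$, whereas you invoke the already-established symmetry directly; both justifications are valid.
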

\begin{proof}
	The first equality comes from the metric-induced Riemann tensor symmetry and Ricci identity:
	\begin{equation}
		R_{acdb} \xi^{d} = R_{bdca} \xi^{d} = \xi_{b;ac} - \xi_{b;ca}\,,
	\end{equation}
	while the second equality (corresponding to the symmetry \eqref{eq:lie-covariant-commutator-symmetry}) reflects the fact that the exact form $\xi_{[a;b]}$ is closed:
	\begin{equation}
		\xi_{[a;bc]} = 0\,.
	\end{equation}
\end{proof}

\end{appendices}

\bibliography{paper}

\end{document}